\newcommand*\ourCopyright[1]{%
  \bigskip
  \nobreakspace
  \scriptsize
  \begin{minipage}[t]{6em}
  \href{https://creativecommons.org/licenses/by/4.0/}%
       {\includegraphics[height=2.0em,clip]{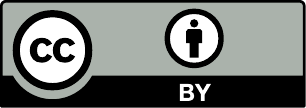}}
  \end{minipage}
  \begin{minipage}[t]{25em}
  \vskip-2.0em
  \textcopyright\ %
  #1
  ;\\%
  licensed under Creative Commons License CC-BY 4.0\par%
  \end{minipage}
  }
\newcommand{\nats}{\mathbb{N}}
\newcommand{\sg}{\sigma}
\newcommand{\const}[1]{\mathbf{cs}(#1)}
\newcommand{\rel}[1]{\mathbf{rel}(#1)}
\newcommand{\consts}[2]{\mathbf{cs}(#1,#2)}
\newcommand{\R}{\mathcal R}
\newcommand{\Rnc}{\mathcal R_{\text{nc}}}
\newcommand{\Rs}{\mathcal R^S}
\newcommand{\Rsnc}{\mathcal R^S_{\text{nc}}}
\newcommand{\Rel}{\mathcal R(\sg)}
\newcommand{\RelS}{\mathcal R^S(\sg)}
\newcommand{\RelE}{\mathcal R^S(\sgE)}
\newcommand{\RelI}{\mathcal R(\sgI)}
\newcommand{\RelJ}{\mathcal R^S(\sgJ)}
\newcommand{\sgm}{\sg_m}
\newcommand{\sgEm}{\sgE_m}
\newcommand{\Relk}{\mathcal R(\sgm)}
\newcommand{\RelEm}{\mathcal R^S(\sgEm)}
\newcommand\Gens{\mathcal{B}}
\newcommand\Ops{\mathcal{H}}
\newcommand\I{^I}
\newcommand\J{^J}
\newcommand\E{^e}
\newcommand\tr{\mathfrak t}
\newcommand\trI{\tr\I}
\newcommand\trE{\tr\E}
\newcommand\trEnc{\tr\E_\text{nc}}
\newcommand\trJ{\tr\J}
\newcommand\trJnc{\tr\J_\text{nc}}
\newcommand{\oE}{\mathfrak u\E}
\newcommand\sgI{\sg\I}
\newcommand\sgJ{\sg\J}
\newcommand\sgE{\sg\E}
\newcommand\opname[1]{\operatorname{#1}}
\newcommand{\rhoij}{\opname{relab}_{i,j}}
\newcommand{\etaRiPref}{\opname{join}}
\newcommand{\etaRiSuff}{_{\bar{i}}^R}
\newcommand{\etaRiSuffE}{_{\bar{i}}^{R,e}}
\newcommand{\etaRi}{{\etaRiPref\etaRiSuff}}
\newcommand{\etaRiE}{{\etaRiPref\etaRiSuffE}}
\newcommand\plus{\mathbin{\parallel}}
    \newcommand\inc{\mathfrak i} % inclusion into the coproduct
    \newcommand\ex{\mathtt{ex}}  % extraction function
\newcommand\forg{\opname{fg}}
\newcommand\forgCD{\forg_{C,D}}
\newcommand{\logfrag}[1]{\text{#1}}
\newcommand{\MSO}{\logfrag{MSO}}
\newcommand{\FO}{\logfrag{FO}}
\newcommand{\CFO}{\#\FO}
\newcommand{\PE}{\exists^+}
\newcommand\fst{\mathbf{f}}  % first sort
\newcommand\snd{\mathbf{m}}  % second/monadic sort
\newcommand{\EM}[1]{\mathsf{EM}(#1)}
\newcommand{\EMF}[1]{F^{#1}}
\newcommand{\EMU}[1]{U^{#1}}
\newcommand{\Klei}[1]{\mathsf{Kl}(#1)}
\newcommand{\KleiEmpty}{\mathsf{Kl}}
\newcommand{\KLF}[1]{F_{#1}}
\newcommand{\Paths}{\mathcal{P}}
\newcommand{\embed}{\rightarrowtail}
\newcommand{\quot}{\twoheadrightarrow}
\newcommand\id{\mathrm{id}}
\newcommand{\Id}{{Id}}
\newcommand\counit{\varepsilon}
\newcommand\bisim{\leftrightarrow}
\newcommand{\Pw}{\mathcal P}
\newcommand{\lift}[1]{\widehat{#1}}          % functor to the EM cat.
\newcommand{\klift}[1]{\overline{#1}}        % functor to the Kl cat.
\newcommand{\blift}[1]{\mathbin{\lift{#1}}}  % binary op. to the EM cat.
\newcommand{\kblift}[1]{\mathbin{\klift{#1}}} % binary op. to the Kl cat.
\newcommand\op{H}                % generic operation
\newcommand\lop{\lift{\op}}      % lifting generic op to the EM cat.
\newcommand{\klop}{\klift{\op}}  % lifting generic op the Kleisli cat.
\newcommand{\lmulti}[1]{\wh{#1}}  % lifting multilin. map to the EM cat.
\newcommand\opveci[1]{\op(\vec{#1_i})}
\newcommand\opvec[1]{\op(\vec{#1})}
\newcommand\lopveci[1]{\lop(\vec{#1_i})}
\newcommand{\Fraisse}{Fra\"{i}\-ss\'{e}}
\newcommand{\ef}{Ehren\-feucht--\Fraisse}
\newcommand{\ol}{\overline}
\newcommand{\wh}{\widehat}
\newcommand{\sue}{\subseteq}
\newcommand\ee[1]{\enspace #1 \enspace}
\newcommand\qtq[1]{\quad\text{#1}\quad}
\newcommand\ete[1]{\ee{\text{#1}}}
\newcommand{\comonad}[1]{\mathbb{#1}} % Shorthand
\newcommand\C{\mathbb C} % Renew, gulp
\newcommand\D{\mathbb D}
\newcommand{\Ek}{\comonad{E}_{k}}
\newcommand{\Eknc}{\comonad{E}_{k}^{\text{nc}}}
\newcommand{\Pk}{\comonad{P}_{k}}
\newcommand{\Mk}{\comonad{M}_{k}}
    \newcommand\wrdk[1]{{#1}^{\leq k}}  % words of length k over given alphabet
\newcommand{\cat}[1]{\mathcal{#1}}
\newcommand{\CA}{\cat{A}}
\newcommand{\CB}{\cat{B}}
\newcommand{\CC}{\cat{C}}
\newcommand{\CD}{\cat{D}}
\newcommand\Ec{\cat{E}}
\newcommand\Mc{\cat{M}}
\newcommand{\struct}[1]{#1}  
\newcommand{\As}{\struct{A}}
\newcommand{\Bs}{\struct{B}}
\newcommand{\Cs}{\struct{C}}
\newcommand{\Ps}{\struct{P}}
\newcommand{\Qs}{\struct{Q}}
\newcommand{\Ac}{\alpha}
\newcommand{\Bc}{\beta}
\newcommand{\Pc}{\pi}
\newcommand{\Qc}{\rho}
\newcommand{\setm}{\{1,\dots,m\}}
\newcommand{\univ}{\mathrm{u}}
\newcommand{\EF}{\mathbf{EF}}
\renewcommand\vec[1]{\vv{#1}}
\newcommand{\klcomp}{\bullet}
\newcommand{\df}[1]{\emph{#1}} % Production version
\newcommand\TODO[1][]{\noindent{\color{blue}\textbf{TODO} #1}}
\newcommand\hides[1]{}
\newenvironment{axioms}{\begin{enumerate}[labelsep=8pt,leftmargin=*,itemindent=2em,labelindent=0.8\parindent]}{\end{enumerate}}
\def\namedlabel#1#2{\begingroup
    #2%
    \def\@currentlabel{#2}%
    \phantomsection\label{#1}\endgroup
}
\title{A game comonadic account of Courcelle and Feferman--Vaught--Mostowski theorems}
\titlerunning{A game comonadic account of Courcelle and Feferman-Vaught-Mostowski theorems.}
\author{Tom\'a\v s Jakl}{University of Cambridge, \url{https://tomas.jakl.one}}{tomas.jakl@cl.cam.ac.uk}{https://orcid.org/0000-0003-1930-4904}{}
\author{Dan Marsden}{University of Oxford}{daniel.marsden@cs.ox.ac.uk}{https://orcid.org/0000-0003-0579-0323}{}
\author{Nihil Shah}{University of Oxford}{nihil.shah@cs.ox.ac.uk}{https://orcid.org/0000-0003-2844-0828}{}
\authorrunning{T.\ Jakl, D.\ Marsden and N.\ Shah}
\keywords{game comonads, Courcelle's theorem, Feferman--Vaught--Mostowski theorems}
\begin{document}

\maketitle

\begin{abstract}
    Game comonads, introduced by Abramsky, Dawar and Wang, and developed by Abramsky and Shah, give a categorical semantics for model comparison games.
    We present an axiomatic account of Feferman-Vaught-Mostowski (FVM) composition theorems within the game comonad framework, parameterized by the model comparison game. In a uniform way, we produce compositionality results for the logic in question, and its positive existential and counting quantifier variants.
    
    Secondly, we extend game comonads to the second order setting, specifically in the case of Monadic Second Order (MSO) logic. We then generalize our FVM theorems to the second order case.
    We conclude with an abstract formulation of Courcelle's algorithmic meta-theorem, exploiting our earlier developments. This is instantiated to recover well-known bounded tree-width and bounded clique-width Courcelle theorems for MSO on graphs.
\end{abstract}

\section{Introduction}

Model comparison games, such as the \ef{}~\cite{ehrenfeucht1961application, fraisse1955quelques}, pebbling~\cite{kolaitis1992infinitary} and bisimulation games~\cite{hennessy1980observing}, play a key role in finite model theory~\cite{libkin2004elements}.
Recent work has introduced a novel categorical semantics for these games~\cite{AbramskyDW17, abramsky2021relating}. In this programme, a comonad is introduced for a chosen model comparison game. This comonad is typically graded by a resource parameter, such as quantifier depth or variable count. Winning Duplicator strategies for model comparison games, bounded by the chosen resource parameter, can then be uniformly characterized categorically, in terms of the Kleisli~\cite{kleisli1965every} and Eilenberg-Moore~\cite{eilenberg1965adjoint} categories of the comonad.

This approach has been extended to more complex games, such as those for guarded logics~\cite{AbramskyM21} and generalized quantifiers~\cite{ConghaileD21}, and applications such as Lovasz type theorems~\cite{DawarJR21} and Rossman's homomorphism theorem~\cite{Paine20}. The abstract setting was axiomatized in~\cite{AbramskyR21}.

Our aim is to give a semantic account of algorithmic meta-theorems, specifically Courcelle type theorems~\cite{courcelle2012graph}, exploiting the game comonadic semantics. 
To do so, firstly we develop an axiomatic formulation of Feferman-Vaught-Mostowski (FVM) theorems~\cite{mostowski1952direct, feferman1967first}, which describe how logical equivalence is preserved under operations on models. 

Pleasingly, the canonical categorical structures associated with the comonadic approach play a central role.
Specifically, we lift model transformations to the Kleisli and Eilenberg-Moore categories via distributive laws satisfying natural axioms. These axioms are semantic in nature, and parametric in the notion of model equivalence. Contrast this with the more tautological, MSO specific logical condition of MSOL-smoothness used to describe well-behaved model transformations for the generic Courcelle theorem presented in~\cite{makowsky2004algorithmic}. Our results yield FVM results for a given logic, it's positive existential fragment, and its extension to counting quantifiers in a uniform way.

Secondly, we identify a suitable comonad capturing the model comparison game for monadic second order logic. FVM theorems often generalize to MSO~\cite{gurevich1985monadic, gurevich1979modest, Lauchli1966, shelah1975monadic}. Our abstract approach to FVM theorems is then extended to the second-order setting.

With these two components in place, we give an abstract Courcelle theorem, parameterized by the game comonad and model transformations of interest. This theorem is then instantiated to yield concrete results for bounded tree-width~\cite{courcelle1990monadic} and bounded clique-width variants~\cite{courcelle2000linear, oum2006approximating} of Courcelle's theorem.

\section{Preliminaries}

We assume the reader is already familiar with basic category theoretic notions such as functors, natural transformations, adjunctions, and limits (see e.g.~\cite{abramskytzevelekos2010introduction} or \cite{awodey2010category}).

\subsection{Game comonads}

A \df{comonad} (in Kleisli form) on a category $\CC$ is a triple $(\C, \counit, (-)^*)$ where $\C$ is an object map $\mathrm{obj}(\CC)\to \mathrm{obj}(\CC)$, the \df{counit} $\counit$ is a~$\mathrm{obj}({\CC})$-indexed family of morphisms $\counit_A\colon \C(A) \to A$. Lastly, there is a coextension operation~$(-)^*$ mapping morphisms~$f: \C(\As) \rightarrow \Bs$ to their coextension~$f^*:\C(\As) \rightarrow \C(\Bs)$,
subject to the following equations:
\begin{align}
    (\counit_A)^* = \id_{\C A}, \quad \counit_B \circ f^* = f,\quad (g \circ f^*)^* = g^* \circ f^*,
    \label{eq:comonad-axioms}
\end{align}
It is then standard~\cite{manes2012algebraic} that $\C$ is a functor such that~$\C(f) = (f \circ \counit_\As)^*$, and the counit is a natural transformation~$\C \Rightarrow \Id$. The morphisms~$\delta_\As := \id_{\C(A)}^* : \C(A) \rightarrow \C^2(A)$ form a \df{comultiplication} natural transformation satisfying
$\counit \circ \delta = \id = \C(\counit) \circ \delta$ and $\delta \circ \delta = \C(\delta) \circ \delta$.

Our comonads of interest shall be on the category of relational structures over a fixed signature. A~\df{signature}~$\sg$ is a set consisting of:
\begin{enumerate}
    \item \df{Relation symbols}, each with a specified \df{arity}, a strictly positive natural number. A relation symbol of arity~$n$ is referred to as \df{$n$-ary}.
    $\rel{\sg}$ shall denote the subset of relation symbols.
    \item \df{Constant symbols}. $\const{\sg}$ shall denote the subset of constant symbols.
\end{enumerate}
A~\df{modal signature}~$\sg$ has a single constant~$c_0$ and~$\rel{\sg}$ consisting of a single binary relation~$E$, and finitely many unary relations.

A~$\sg$-structure~$\As$ consists of a set~$A$, referred to as the~\df{universe}. For each $R \in \rel{\sg}$ of arity~$n$, an~$n$-ary relation~$R^\As$ on~$A$, and for each~$c \in \const{\sg}$ an element~$c^\As \in A$. A \df{morphism of~$\sg$-structures} $f : \As \rightarrow \Bs$ is a function of type~$A \rightarrow B$, preserving relations and constants. That is, for~$n$-ary $R \in \rel{\sg}$, $R^\As(a_1,\ldots,a_n) \Rightarrow R^\Bs(f(a_1),\ldots, f(a_n))$, and for $c \in \const{\sg}$, $ f(c^\As) = c^\Bs$.
$\Rel$ shall denote the category of $\sg$-structures and morphisms between them.

We shall require three comonads on~$\Rel$ for the examples in later sections, parameterized by a positive integer~$k$. These are small generalizations of the key comonads appearing in~\cite{abramsky2021relating}, to handle signatures with constants in a uniform way.

We shall write~$A^+$ for the set of non-empty lists over~$A$, and~$A^{\leq k}$ for its restriction to lists of at most~$k$ elements.% $\nats$ will always denote the natural numbers, \emph{not} a comonad.
\begin{description}
    \item[Comonad~$\Ek$:] The universe of~$\Ek(\As)$ is~$\As^{\leq k} \uplus \const{\sg}$. For~$c \in \const{\sg}$, define~$c^{\Ek(\As)} := c$. The counit preserves constants, and on lists $ \counit[a_1,\dots,a_n] := a_n $.
    For~$n$-ary~$R \in \rel{\sg}$, $R^{\Ek(\As)}(s_1,\ldots,s_n)$ iff the $s_i \not\in \const{\sg}$ are pairwise comparable in the prefix order, and
    $R^\As(\counit(s_1),\ldots,\counit(s_n))$.
    For~$h : \Ek(\As) \rightarrow \Bs$, the coextension preserves constants, and acts on lists as
        $h^*([a_1,\ldots,a_n]) := [h([a_1]), \ldots, h([a_1,\ldots,a_n])]$.
    \item[Comonad~$\Mk$:] We restrict to a modal signature. The universe of~$\Mk(\As)$ consists of~$\{ c_0 \}$ and the elements of~$\As^{\leq k}$ of the form~$[a_1,\ldots,a_n]$ such that~$E^{\As}(c_0^\As,a_1)$ and for~$1 \leq i < n$, $E^{\As}(a_i, a_{i + 1})$. The counit and coextension act as for~$\Ek$.
    We have~$E^{\Mk(\As)}(s,s')$ iff either~$s = c_0$ and $s'$ is a one element list, or~$s$ and~$s'$ are lists with~$s'$ covering~$s$ in the prefix order. For unary $P \in \rel{\sg}$, $P^{\Mk(\As)}(s)$ iff $P^{\As}(\counit(s))$.
     
    \item[Comonad~$\Pk$:] The universe of $\Pk(\As)$ is $(\{ 0,\ldots,k - 1 \} \times \As)^+ \,\uplus\, \const{\sg}$. %, where $\nats^{\leq k} := \{ 0,\ldots,k - 1 \}$.
    Constant interpretations are defined as for~$\Ek$. The counit $\counit$ preserves constants, and on lists
    sends $[(p_1,a_1),\dots,(p_n,a_n)]$ to $a_n$.
    We refer to the first element in a pair~$(p,a)$ as a~\df{pebble index}.  For $n$-ary~$R \in \rel{\sg}$,
    $R^{\Pk(\As)}(s_1,\ldots,s_n)$ iff:
    \begin{enumerate}
        \item The $s_i \not\in \const{\sg}$ are pairwise comparable in the prefix order.
        \item For~$s_i, s_j \not\in \const{\sg}$ with~$s_i \leq s_j$, the pebble index of the tail of~$s_i$ does not appear in the suffix of~$s_i$ in~$s_j$.
        \item $R^{\As}(\counit(s_1),\ldots,\counit(s_n))$.
    \end{enumerate}
    For~$h\colon \Pk(\As) \rightarrow \Bs$, the coextension $h^*$ preserves constants, and acts on lists as:
    \[
        h^*([(p_1,a_1),\ldots,(p_n,a_n)]) := [(p_1,h([a_1])), \ldots, (p_n, h([a_1,\ldots,a_n]))]
    \]
\end{description}
\begin{restatable}{theorem}{StandardComonadsWithConstants}
$\Ek$, $\Mk$ and~$\Pk$ are comonads in Kleisli form \ (direct adaptation of \cite{AbramskyDW17,abramsky2021relating}). 
\end{restatable}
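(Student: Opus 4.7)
The plan is to check, for each of $\Ek$, $\Mk$, and $\Pk$, the three Kleisli axioms in~\eqref{eq:comonad-axioms}, together with the fact that the coextension $f^*$ of a $\sg$-morphism $f$ is again a $\sg$-morphism. In every case the constant part of the universe, the action of the counit on constants, and the action of coextensions on constants are all defined uniformly as the identity, so the three axioms restricted to constants collapse to tautologies: $(\counit_\As)^*(c)=c$, $\counit_\Bs(f^*(c))=c=f(c)$ (using that $f$ preserves constants), and $(g\circ f^*)^*(c)=c=(g^*\circ f^*)(c)$. Likewise $f^*$ trivially preserves constants. This reduces the verification to the list part of each universe.

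On lists, the definitions coincide verbatim with the \ef{}, modal, and pebble comonads in~\cite{AbramskyDW17,abramsky2021relating}, so the three Kleisli equations follow from direct list-level computation: $(\counit_\As)^*([a_1,\ldots,a_n])=[a_1,\ldots,a_n]$ because $\counit([a_1,\ldots,a_i])=a_i$; $\counit_\Bs(f^*([a_1,\ldots,a_n]))=f([a_1,\ldots,a_n])$ by reading off the last entry of $f^*([a_1,\ldots,a_n])$; and associativity $(g\circ f^*)^*=g^*\circ f^*$ is checked entrywise, using that $f^*([a_1,\ldots,a_i])$ is the length-$i$ prefix of $f^*([a_1,\ldots,a_n])$. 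What remains is to verify that $f^*$ preserves relations. For $\Ek$, the outputs $f^*([a_1]),\ldots,f^*([a_1,\ldots,a_n])$ form a prefix-chain and $f$ itself is a morphism, so any required relation instance is inherited from the source. For $\Mk$, we additionally check that the output is a valid $E$-path: this holds because $f$ preserves $E$, so consecutive outputs cover one another in the prefix order. For $\Pk$, the pebble-index side condition must also be preserved, but since coextension copies pebble indices unchanged, the index structure in every suffix is identical to that in the input, and the condition transfers.

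The main obstacle is simply book-keeping: tracking prefix-comparability, $E$-covering of paths (in the modal case), and pebble-index re-use (in the pebble case) across coextension. These are purely combinatorial checks on the list structure, entirely independent of the signature enrichment by constants. Putting the two halves together---the trivial constant part and the adapted list part---yields all three axioms and the morphism property, completing the proof.
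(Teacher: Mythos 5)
Your proof is correct, but it takes a genuinely different route from the paper's. You verify each of $\Ek$, $\Mk$, $\Pk$ separately by splitting the universe into its constant part (where all three Kleisli equations collapse to tautologies) and its list part (where the computations reduce to the constant-free verifications of \cite{AbramskyDW17,abramsky2021relating}), and then checking that coextensions are $\sg$-morphisms case by case. The paper instead performs essentially one verification: it defines an unbounded pebbling comonad $\comonad{P}_{\nats}$ over the relational part of the signature, encodes the $m$ constants as a forced $m$-element prefix of every list (with reserved pebble indices), and then exhibits $\Ek$, $\Mk$ and $\Pk$ as restrictions of this single comonad to suitable classes of lists, so that only closure of those classes under coextension needs to be checked in each case. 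Your approach is more elementary and self-contained; the paper's buys uniformity (the three comonads and the treatment of constants all come from one construction) and makes the sub-comonad relationships between $\comonad{E}$, $\comonad{M}$ and $\comonad{P}$ explicit, which is consonant with the comonad morphisms used later. One small caveat on your write-up: the list part is not quite \emph{verbatim} the constant-free definition, since the relational clauses exempt constant entries from the prefix-comparability (and pebble-index) conditions, so the relation-preservation check for $f^*$ must also cover tuples mixing constants and lists; your argument does handle this (constants go to constants, and $f$ preserves relations on the whole tuple), but the claim that the check is ``entirely independent of the signature enrichment by constants'' is slightly overstated.
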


\subsection{Kleisli and Eilenberg--Moore categories}

A comonad $\C$ on a category $\CC$ gives rise to two categories. The first is the \df{Kleisli category} $\Klei{\C}$ consisting of the same objects as $\CC$ has, but morphisms from $A$ to $B$ are $\CC$-morphisms of type $\C(A) \to B$. The composition of $f\colon \C A \to B$ and $g\colon \C B \to C$ is defined as $g \circ f^*$, with the counits the identity morphisms. The equations \eqref{eq:comonad-axioms} ensure that $\Klei{\C}$ is a well-formed category.

The second category induced by $\C$ is the \df{Eilenberg--Moore category~$\EM{\C}$ of $\C$-coalgebras}. Its objects are pairs $(A,\alpha)$ where $\alpha\colon A \to \C(A)$ is a $\CC$-morphism such that
\begin{equation}
    \counit_A \circ \alpha = \id_A \ete{and} \delta \circ \alpha = \C \alpha \circ \alpha,
    \label{eq:coalgebra-axioms}
\end{equation}
A \df{coalgebra morphism} $h\colon (A,\alpha)\to (B,\beta)$ is a $\CC$-morphism $h\colon A\to B$ such that $\beta\circ h = \C h \circ \alpha$. We shall simply write $\alpha$ instead of $(\As,\alpha)$ whenever~$\As$ can be inferred from the context.

\begin{wrapfigure}[4]{r}{17em}
    \vspace{-1em}
    \centering
    \begin{tikzcd}[column sep=4em]
        \Klei{\C}
            \ar[bend right=18,swap, shorten <= 0.5em, shorten >= 0.3em]{dr}{U_\C}
            \ar[bend left=10]{rr}{K^\C}
        &
        & \EM{\C}
            \ar[bend right=18,swap]{dl}{U^\C}
        \\
        & \CC
            \ar[bend right=18,swap]{ul}{\KLF{\C}}
            \ar[phantom,sloped,pos=0.55]{ul}{\rotatebox{80}{$\dashv$}}
            \ar[bend right=18,swap, shorten <= 0.5em, shorten >= 0.3em]{ur}{\EMF{\C}}
            \ar[phantom,sloped]{ur}{\rotatebox{-80}{$\dashv$}}
    \end{tikzcd}
\end{wrapfigure}
The two categories induce two adjunctions over the base category and a functor $K^\C$ from $\Klei{\C}$ to $\EM{\C}$, as depicted on the right.
For our purposes it suffices to know that
\begin{itemize}
    \item $K^\C$ is a fully faithful functor such that $K^\C \circ \KLF{\C} = \EMF{\C}$.
    \item $F_\C$ is the identity on objects, and for~$\CC$-morphism $f\colon \As \to \Bs$, $\KLF{\C}(f) = f\circ \counit_A$.
    \item $U^\C$ is the forgetful functor, sending $(\As,\alpha)$ to $\As$ and $\EMF{\C}$ is the \df{cofree coalgebra} functor, sending $\As$ in $\CC$ to $(\C \As, \delta_{\As})$ and $f\colon \As \to \Bs$ in $\CC$ to $\C f$.
\end{itemize}
% We write $K$, $U$, and $F$, whenever the comonad~$\C$ is clear from the context.

\begin{remark}
    A coalgebra $(A,\alpha)$ of any of the game comonads defined above is endowed with a preorder~$\sqsubseteq_\alpha$, where $x \sqsubseteq_\alpha y$ whenever the word $\alpha(x)$ is a prefix of $\alpha(y)$.
    Then, by \eqref{eq:coalgebra-axioms}, $\alpha(y)$, for a non-constant $y$, is the word listing all non-constant $x$ such that $x \sqsubseteq_\alpha y$.

    In fact, the category $\EM{\Ek}$ is equivalent to the category of $\sg$-structures endowed with a compatible forest order of depth ${\leq}\,k$, see section 9 in \cite{abramsky2021relating} for details.
    \label{r:coalg-order}
\end{remark}

\subsection{Factorisation systems}
\label{s:fact-syst}
\df{Strong homomorphisms} in~$\Rel$ are those homomorphisms $f\colon A\to B$ that reflect relations, i.e.\ if $(f(x_1),\dots, f(x_n))\in R^B$ then $(x_1,\dots,x_n)\in R^A$.
Recall that any homomorphism $h\colon A\to B$ in $\Rel$ factors into an onto homomorphism $q\colon A \quot h[A]$ followed by a strong injective homomorphism $m\colon h[A] \embed B$.
% Remark: injectivity and surjectivity is not required on constants...
Then, $h[A]$ above is the $\sg$-structure on the image $h[A]$ of $h$ so that the inclusion $h[A] \to B$ is a strong homomorphism. This factorisation is a special case of a general categorical structure which we review next.

A \df{factorisation system} $(\Ec, \Mc)$ consists of a class of morphisms $\Ec$ called \df{quotients}, denoted by $\quot$, and a class of morphisms $\Mc$ called \df{embeddings}, denoted by $\embed$, which are both closed under compositions, contain all isomorphisms and, furthermore,
\begin{itemize}
    \item every morphism $f\colon A \to B$ has a factorisation $f = m \circ e$ for some quotient $e\colon A \quot C$ and embedding $m\colon C\embed B$, and
    \item
    \parbox[t]{\dimexpr\textwidth-\leftmargin}{%
    % a hack from https://tex.stackexchange.com/a/208436/44479
      \vspace{-2.5mm}
    \begin{wrapfigure}[3]{r}{7em}
      \vspace{-2.2em}
    \centering
    $
        \begin{tikzcd}[ampersand replacement=\&]
            A \rar[->>]{e}\dar[swap]{u} \& B \dar{v} \ar[swap,dashed]{ld}{d} \\
            C \rar[>->]{m} \& D
        \end{tikzcd}
    $
    \end{wrapfigure}
    every $e \in \Ec$ is \df{orthogonal} to every $m \in \Mc$, i.e.\ given $u,v$ making the diagram on the right commute, 
    there exists a unique morphism $d$, making the two triangles commute.
      \vspace{0.5em}
    }
\end{itemize}
If, furthermore, $\Ec$ is a class of epimorphisms and $\Mc$ is a class of monomorphisms, we say that $(\Ec,\Mc)$ is a \df{proper factorisation system}.

We make use of the following well-known fact that any comonad $\C$ which \df{restricts to embeddings}, i.e.\ sends $\Mc$-morphisms to $\Mc$-morphisms, lifts $(\Ec,\Mc)$ to the factorisation system $(\ol\Ec, \ol\Mc)$ on $\EM{\C}$, where $\ol\Ec$ consists of all morphisms of coalgebras $h$ such that $U(h)$ is in $\Ec$ and, similarly, $\ol\Mc$ consists of morphisms $h$ such that $U(h)$ is in $\Mc$.

\begin{restatable}{lemma}{EMFactorization}
    $(\ol\Ec, \ol\Mc)$ is a factorisation system and it is proper whenever $(\Ec, \Mc)$ is.
    \label{l:lifting-EMfs}
\end{restatable}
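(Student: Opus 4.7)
The plan is to factorise any coalgebra morphism $h\colon (\As,\alpha)\to (\Bs,\beta)$ by first factoring its underlying morphism in the base as $h = m \circ e$ with $e\colon \As \quot C$ in~$\Ec$ and $m\colon C \embed \Bs$ in~$\Mc$, and then lifting a coalgebra structure onto~$C$ so that both $e$ and $m$ become morphisms of coalgebras. The central tool is orthogonality, applied together with the hypothesis that $\C$ restricts to embeddings, which guarantees $\C(m),\,\C^2(m) \in \Mc$.

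To define $\gamma\colon C \to \C(C)$, I would consider the square with top edge $e$, left edge $\C(e)\circ \alpha$, right edge $\beta\circ m$, and bottom edge $\C(m)$. Commutativity follows from $\C(h)\circ\alpha = \beta\circ h$, so orthogonality supplies a unique diagonal $\gamma$. The two triangle identities $\gamma\circ e = \C(e)\circ\alpha$ and $\C(m)\circ\gamma = \beta\circ m$ are exactly the statements that $e\colon(\As,\alpha)\to(C,\gamma)$ and $m\colon(C,\gamma)\to(\Bs,\beta)$ are coalgebra morphisms.

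To verify that $(C,\gamma)$ satisfies \eqref{eq:coalgebra-axioms}, both axioms are checked by reusing orthogonality. For the counit law, both $\id_C$ and $\counit_C\circ \gamma$ are diagonals for the square with $e$ on top and $m$ on the bottom (and $e$, $m$ also as the left and right edges), so uniqueness forces equality. For coassociativity, both $\delta_C\circ\gamma$ and $\C(\gamma)\circ\gamma$ serve as diagonals for the square with $e$ on top and $\C^2(m)$ on the bottom, with side edges $\C^2(e)\circ \C(\alpha)\circ\alpha$ and $\C(\beta)\circ\beta\circ m$; the computations use the coalgebra axioms for $\alpha$ and $\beta$ and the naturality of $\delta$, and uniqueness delivers the equation. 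Exactly the same strategy handles the orthogonality property for $(\ol\Ec,\ol\Mc)$: given a lifting problem in $\EM{\C}$, orthogonality in $\CC$ supplies a unique underlying diagonal $d$, and a further orthogonality argument (checking that $\gamma\circ d$ and $\C(d)\circ\beta$ fit into the same square with top $e$ and bottom $\C(m)$) shows $d$ is a coalgebra morphism.

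Closure under composition, containment of identities, and containment of isomorphisms transfer from $(\Ec,\Mc)$ along the faithful forgetful functor $U^\C$. Properness is equally direct: since morphisms in $\EM{\C}$ are underlying morphisms satisfying an extra axiom, monomorphism and epimorphism cancellation in $\EM{\C}$ follows from the corresponding cancellation of $U^\C(h)$ in~$\CC$, and hence $\ol\Mc$ consists of monos and $\ol\Ec$ of epis whenever $(\Ec,\Mc)$ is proper. The main obstacle is arranging the correct orthogonality squares when verifying the coalgebra axioms for $\gamma$ and when promoting the base-level diagonal to a coalgebra morphism; the assumption that $\C$ preserves embeddings is precisely what keeps $\C(m)$ and $\C^2(m)$ inside~$\Mc$, and without it these orthogonality arguments would not be available.
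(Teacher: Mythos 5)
Your proposal is correct and follows essentially the same route as the paper's proof sketch: factor the underlying morphism, obtain the coalgebra structure $\gamma$ on the image as the diagonal filler of the square with top $e$, sides $\C(e)\circ\alpha$ and $\beta\circ m$, and bottom $\C(m)$ (using that $\C$ restricts to embeddings), and then derive orthogonality in $\EM{\C}$ from orthogonality in the base. The only difference is that you spell out the verification of the coalgebra axioms and of properness, which the paper leaves as immediate, and your uniqueness-of-diagonals arguments for these steps check out.
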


In the following we always assume that categories of relational structures $\Rel$ come equipped with the (surjective, strong injective) factorisation system\footnote{Categorically, surjective homomorphisms are epis and strong injective homomorphisms are regular monos in $\Rel$.}. It is immediate to see that $\Ek$ preserves strong injective homomorphisms and, hence, by the previous lemma, this factorisation system lifts to the categories of $\Ek$-coalgebras. Moreover, the same is true for pebbling and modal comonads and their categories of coalgebras.

\subsection{Game comonads and logical equivalences}

An important feature of game comonads is that they semantically characterise logical equivalences with respect to various fragments and extensions of first-order logic. Let $\FO_k$ be the fragment of first-order logic over signature $\sg$ restricted to sentences of quantifier rank $k$, i.e.\ when the depth of nesting of quantifiers is at most $k$. Then, $\PE_k$ is the further refinement of $\FO_k$ to positive existential sentences, i.e.\ sentences without negation or universal quantification, and $\#\FO_k$ is the extension of $\FO_k$ by allowing counting quantifiers. These are quantifiers $\exists_{\geq n}$ where $A \models \exists_{\geq n} x. \varphi(x)$ if there are at least $n$ different $a\in A$ such that $A\models \varphi(a)$.

As we will see, the \ef{} comonad $\Ek$ characterises logical equivalence with respect to all three aforementioned fragments. To this end, define
\begin{itemize}
    \item $\As \leftrightarrows_{\C} \Bs$ if there exist morphisms $\C(\As) \to \Bs$ and $\C(\Bs) \to \As$, and
    \item $\As \cong_{\Klei\C} B$ if $\KLF{\C}(\As) \cong \KLF{\C}(\Bs)$ in $\Klei{\C}$.
\end{itemize}
In fact, these equivalences are all statements about free coalgebras as $A \cong_{\Klei\C} B$ is equivalent to $\EMF{\C}(\As) \cong \EMF{\C}(\Bs)$ in $\EM{\C}$, since $K^\C$ is fully faithful, and similarly, $\As \leftrightarrows_{\C} \Bs$ iff some $\EMF{\C}(\As) \to \EMF{\C}(\Bs)$ and $\EMF{\C}(\As) \to \EMF{\C}(\Bs)$ exist.

Following~\cite{AbramskyDW17}, to capture logics with equality we introduce an extended signature~$\sgI$ with a fresh binary relation~$I$, and a~\df{translation functor} $\trI\colon \R_0(\sg) \to \RelI$
sending $A\in \R_0(\sg)$ to $(A,I^A)$, where $(a,b)\in I^A$ iff $a = b$. Here, $\R_0(\sg)$ denotes the discrete subcategory of $\Rel$ containing only identity morphisms. Since $\Ek$ was defined for an arbitrary relational signature, there is an instance of $\Ek$ over $\RelI$, which we don't distinguished notationally.

\begin{proposition}[by adapting~{\cite[theorems 13 and 16]{AbramskyDW17}}]
    \label{p:ef-logic-1}
    For $\sg$-relational structures $A,B$,
    \begin{itemize}
        \item $A \equiv_{\PE_k} B$  \ee{iff} $A \leftrightarrows_{\Ek} B$ \ee{iff} $\trI A \leftrightarrows_{\Ek} \trI B$
        \item $A \equiv_{\CFO_k} B$ \ee{iff} $\trI A \cong_{\Klei\Ek} \trI B$
    \end{itemize}
\end{proposition}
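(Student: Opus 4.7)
The plan is to adapt \cite[Theorems 13 and 16]{AbramskyDW17} to the present setting where signatures may carry constants. The central step is a game-theoretic lemma: morphisms $\Ek(\As) \to \Bs$ in $\Rel$ are in natural bijection with Duplicator's winning strategies in the $k$-round positive-existential \ef{} game from $\As$ to $\Bs$, with constants treated as initially pebbled. Unfolding the definition, such an $f$ is forced on $\const{\sg}$ by the constant-preservation axiom, and on each nonempty list $[a_1,\dots,a_n]\in \As^{\leq k}$ picks an element of $\Bs$. Preservation of relations in $\Ek(\As)$ says exactly that any prefix-comparable tuple of plays defines a partial homomorphism (Duplicator's local win condition), and the coextension/counit equations in \eqref{eq:comonad-axioms} ensure this assignment is coherent along prefix chains, so $f$ is a strategy rather than an ad hoc family of positions.

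From this lemma, $\As \leftrightarrows_{\Ek} \Bs$ says Duplicator wins in both directions, which by the standard game characterization of $\PE_k$ is equivalent to $\As \equiv_{\PE_k} \Bs$. For the further equivalence with $\trI \As \leftrightarrows_{\Ek} \trI \Bs$, the delicate direction is $\As \leftrightarrows_{\Ek} \Bs \Rightarrow \trI \As \leftrightarrows_{\Ek} \trI \Bs$: a morphism in $\RelI$ must additionally preserve $I$, i.e., if a play ends in an element already occurring along the same prefix chain, Duplicator's reply must coincide with the earlier reply. I would canonicalise a given winning strategy by replacing each such repeated response with the earlier one; the modification leaves the pebble view at each stage unchanged and therefore preserves winning, yielding the required morphism in $\RelI$.

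For the $\CFO_k$ part I would invoke Hella's characterization: $\As \equiv_{\CFO_k} \Bs$ iff Duplicator has a winning strategy in the bijective $k$-round \ef{} game. A Kleisli isomorphism $\trI \As \cong_{\Klei\Ek} \trI \Bs$ consists of a pair $f\colon \Ek(\trI \As) \to \trI \Bs$ and $g\colon \Ek(\trI \Bs) \to \trI \As$ with $g\circ f^*$ and $f\circ g^*$ the respective counits. Unfolding these identities stage by stage along prefix chains yields, at each depth ${\leq}\,k$, mutually inverse maps between continuations on the two sides; preservation of $I$ forces these to be genuine bijections rather than collapsing surjections, and preservation of the remaining relations is exactly local partial-isomorphism. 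This is precisely the data Hella requires.

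The main obstacle I anticipate is this last extraction step: turning the pair of Kleisli-inverse equations into stagewise bijections requires careful induction on prefix depth, while cleanly separating the constant part of $\Ek(\trI\As)$, on which both $f$ and $g$ are fixed by the constant axiom, from the list part, where the actual bijective content lives. A secondary technical point is the strategy-canonicalisation used for the $\trI$-equivalence; although conceptually simple, one must verify that the modified function still satisfies all morphism conditions with respect to both the original relations of $\sg$ and the new $I$ in $\sgI$.
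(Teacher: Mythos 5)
Your proposal is correct and follows essentially the route the paper itself relies on: the paper gives no argument beyond the citation ``by adapting theorems 13 and 16 of \cite{AbramskyDW17}'', and your reading of coKleisli morphisms as positive-existential game strategies (with constants as initially pebbled positions), the prefix-canonicalisation trick for transporting a strategy across the $\trI$ translation, and the extraction of Hella-style stagewise bijections from the two Kleisli-inverse equations is exactly the intended adaptation --- the same game-theoretic template the appendix executes in detail for the many-sorted analogue (the $\EF^{J}_{k}$ propositions proving proposition~\ref{prop:many-sorted-ek-bisim}). The only thing I would tighten is the counting case, where you should state explicitly that the correspondence between coKleisli isomorphisms and families of compatible stagewise bijections is used in both directions, since $\equiv_{\CFO_k}\Rightarrow{}$coKleisli-iso also needs the assembled $f,g$ to be verified as homomorphisms satisfying $g\circ f^*=\counit$ and $f\circ g^*=\counit$.
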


Capturing $\equiv_{\FO_k}$ requires a few more definitions. First, the category of \df{paths} $\Paths$ is the full subcategory of $\EM{\Ek}$ consisting of the coalgebras $(A,\alpha)$ such that the order $\sqsubseteq_\alpha$ (cf.\ remark~\ref{r:coalg-order}) is a finite linear order. Recall from section~\ref{s:fact-syst} that the factorisation system (surjective, strong injective) on $\Rel$ lifts to the category of $\Ek$-coalgebras.
We call a morphism $h\colon (A,\alpha) \to (B,\beta)$ in $\EM{\Ek}$ a \df{pathwise-embedding}, if for any path embedding
\begin{wrapfigure}[4]{r}{9em}
    \vspace{-1em}
    \centering
    $
    \begin{tikzcd}
    (P,\pi) \rar[>->] \dar[>->] & (Q,\rho)\dar[>->]\\
    (A,\alpha) \rar{h} & (B,\beta)
    \end{tikzcd}
    $
\end{wrapfigure}
$e\colon (P,\pi)\embed (A,\alpha)$ (with $(P,\pi)\in \Paths$),
the composite $h\circ e$ is also an embedding.
Further, we say that $h$ is \df{open} if every commutative square as shown on the right,
where $(P,\pi)$ and $(Q,\rho)$ are from $\Paths$, has a diagonal filler $d\colon (Q,\rho) \to (A,\alpha)$ making the two triangles commute. Finally, define
\begin{itemize}
    \item $A \bisim_{\C} B$ if there exists a span of open pathwise-embeddings $\EMF{\C}(A) \leftarrow R \rightarrow \EMF{\C}(B)$. % in $\EM{\C}$.
\end{itemize}

\begin{proposition}[by adapting~{\cite[theorem 10.5]{abramsky2021relating}}]
    \label{p:ef-logic-2}
    $A \equiv_{\FO_k} B$  \ee{iff} $\trI A \bisim_{\Ek} \trI B$. 
\end{proposition}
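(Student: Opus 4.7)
The plan is to reduce to the classical Ehrenfeucht--Fra\"iss\'e theorem, which states that $A \equiv_{\FO_k} B$ iff Duplicator has a winning strategy in the $k$-round EF game on $A$ and $B$. Given this, it suffices to establish a bijective correspondence between Duplicator winning strategies and spans of open pathwise-embeddings $\EMF{\Ek}(\trI A) \leftarrow (R,\rho) \rightarrow \EMF{\Ek}(\trI B)$ in $\EM{\Ek}$. The guiding idea is that, thanks to Remark~\ref{r:coalg-order}, path embeddings $(P,\pi)\embed \EMF{\Ek}(\trI A)$ with $(P,\pi)\in\Paths$ are in one-to-one correspondence with partial plays on $A$ (a linearly ordered sequence of chosen elements), openness encodes the ability to respond to a Spoiler move, and pathwise-embedding ensures that the accumulated play is a partial isomorphism on the $\sg\I$-reduct, i.e., a partial isomorphism on $A$ and $B$ that also respects equality.

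For the forward direction, given a winning Duplicator strategy $S$, I would take $R$ to be the coalgebra whose elements are the reachable positions when Duplicator plays according to $S$, i.e.\ sequences $[(a_1,b_1),\dots,(a_n,b_n)]$ with $n\leq k$ consistent with $S$. The coalgebra map $\rho$ sends each position to its prefix-list, using the universe of $\Ek(R)$, and the two legs of the span are the componentwise $A$- and $B$-projections. Each projection is a pathwise-embedding because the EF winning condition forces each played sequence to induce a partial isomorphism, which under the $\trI$ enrichment becomes precisely a strong injective coalgebra morphism from the path into the cofree coalgebra. Openness translates directly: extending a path in $\EMF{\Ek}(\trI A)$ corresponds to Spoiler picking a new element in $A$, and $S$ provides Duplicator's response, yielding the diagonal filler $(Q,\rho)\to R$.

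For the reverse direction, given such a span, I would define Duplicator's strategy recursively by maintaining, at each round, a path embedding into $R$ whose images under the two legs are the current partial plays in $A$ and $B$. A Spoiler move in $A$ yields an extended path embedding into $\EMF{\Ek}(\trI A)$; openness of the left leg lifts this to an extension inside $R$, and applying the right leg returns Duplicator's response in $B$. Spoiler moves in $B$ are handled symmetrically using openness of the right leg. The pathwise-embedding conditions on both legs guarantee that after any finite number of rounds the matched tuple is a partial isomorphism on the $\sg\I$-reducts, hence Duplicator wins.

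The main obstacle will be the careful bookkeeping around the $\trI$-enrichment and the lifted factorisation system: one must verify that strong injective morphisms of $\sg\I$-structures encode partial injective maps (via reflection of $I$) in addition to the usual preservation and reflection of the original relations, and that these two facets together agree exactly with the notion of partial isomorphism used in the EF game. Coupled with the symmetric role the span plays with respect to Spoiler's choice of structure, this is where the technical heart of the adaptation of \cite[Theorem~10.5]{abramsky2021relating} lies; once it is in place, the translation between strategies and spans is essentially a routine inductive unfolding of game positions along coalgebra coextensions.
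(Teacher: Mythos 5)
Your proposal is correct and follows essentially the same route as the paper's own argument (given in the appendix for the many-sorted generalization, propositions~\ref{prop:bisim-game-logic} and~\ref{prop:bisim-game-comonad}): reduce to the \ef{} game characterisation of $\equiv_{\FO_k}$, build the span from the set of positions reachable under a winning strategy viewed as a substructure of $\Ek(\trI A)\times \Ek(\trI B)$ with the projections as legs, and recover a strategy from a span by using openness of the legs to answer Spoiler's moves, with the $I$-relation ensuring the partial-isomorphism (injectivity) condition. The only cosmetic overstatement is the claim of a \emph{bijective} correspondence between strategies and spans; only the equivalence of existence is needed and is what your argument actually establishes.
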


The pebbling comonad $\Pk$ captures the corresponding $k$-variable fragments (as opposed to quantifier rank fragment) in the same way and, similarly, $\Mk$ captures the fragments of modal logic restricted to formulas of modal depth ${\leq}\,k$.

\begin{remark}
    \label{r:coalg-order-abstract}
    We always assume that the subcategory of paths is given implicitly.
    In particular, for any category $\CC$ equipped with a factorisation system $(\Ec,\Mc)$, we take the full subcategory of paths $\Paths \subseteq \CC$ to be the objects whose class of non-isomorphic subobjects (w.r.t $\Mc$-morphisms) forms a finite chain under the order induced by $\Mc$-morphisms. In particular, for the game comonads we discuss, the subcategory of $\Paths \subseteq \EM{\C}$ is equivalent to subcategory of the coalgebras $(\As,\alpha)$ where the order $\sqsubseteq_\alpha$ is a finite linear order ~\cite{AbramskyR21}. 
    %If paths $\Paths \sue \EM{\C}$, for a comonad $\C$, have not been specified explicitly, we always assume that paths are the coalgebras $(\As,\alpha)\in \EM{\C}$ for which the poset of (non-equivalent) sub-coalgebras $(\Bs,\beta) \embed (\As,\alpha)$ forms a finite chain. Note that this is equivalent to our explicit definition in terms of the order $\sqsubseteq_\alpha$ when $\C$ is one of our game comonads~\cite{AbramskyR21}.
\end{remark}

\section{Comonadic FVM Theorems}

\subsection{A Motivating FVM-type Example}
\label{sec:motivating}
In the approach to algorithmic meta-theorems such as Courcelle's theorem adopted in later sections,
we work with classes of structures built up by a suitable family of operations. These operations must interact well
with logical equivalence in the sense of Feferman--Vaught--Mostowski. 
We now outline a simple motivating example, forming coproducts of structures,
and the relationship to first order logic. We use the \ef{} comonad, highlighting the key themes, and emphasizing connections to game comonads. The finer technical details are developed in later sections.

Fix a relational signature~$\sg$ without constants. For two structures~$\As$ and~$\Bs$, their coproduct~$\As + \Bs$ is simply
the disjoint union of the two components, with the induced relational structure.
We would like to show that, if $A_1 \equiv_{\FO_k} B_1$ and $A_2 \equiv_{\FO_k} B_2$, then also $\As_1 + \As_2 \equiv_{\FO_k} B_1+ B_2$. Equivalently, in view of proposition~\ref{p:ef-logic-2}, given spans of open pathwise-embeddings
$
    \EMF{\Ek}(\trI(\As_i)) \xleftarrow{f_i} R_i \xrightarrow{g_i} \EMF{\Ek}(\trI(\Bs_i)) 
$ for~$i \in \{1,2\}$,
we would like to construct a span of open pathwise embeddings of the form
\begin{align}
    \EMF{\Ek}(\trI(\As_1 + \As_2)) \leftarrow R \rightarrow \EMF{\Ek}(\trI(\Bs_1 + \Bs_2)).
    \label{eq:coprod-span-2}
\end{align}
Observe that, even though the category of coalgebras has coproducts for general reasons, a span of the form in~\eqref{eq:coprod-span-2} cannot be formed using the coproduct of the two component spans. We would obtain a span of type
\[ \EMF{\Ek}(\trI(\As_1)) + \EMF{\Ek}(\trI(\As_2)) \xleftarrow{f_1+f_2} R_1 + R_2 \xrightarrow{g_1+g_2} \EMF{\Ek}(\trI(\Bs_1)) + \EMF{\Ek}(\trI(\Bs_2)) \]
In general $\EMF{\Ek}(\trI(\As_1)) + \EMF{\Ek}(\trI(\As_2))$ and $\EMF{\Ek}(\trI(\As_1 + \As_2))$ need not be isomorphic, so our span is of the wrong type. Conceptually, this approach fails because the span we require should encode a strategy for play that may switch back and forth between two components, and forming coproducts keeps all the data entirely independent.

We observe that if we have a natural transformation
$\kappa\colon \Ek(\As + \Bs) \to \Ek(\As) + \Ek(\Bs) $
that satisfies some natural equations
with respect to the counit and comultiplication, then this
induces a lifted functor~$\kblift{+} : \Klei{\Ek} \times \Klei{\Ek} \rightarrow \Klei{\Ek}$ such that~$\KLF{\Ek}(\As + \Bs) = \KLF{\Ek}(\As) \kblift{+} \KLF{\Ek}(\Bs)$.
As~$\KLF{\Ek}(A) = A$, the existence of such a natural transformation yields both:
\begin{align*}
    \As_1 \equiv_{\PE_k} \As_2 \;\mbox{ and }\; \Bs_1 \equiv_{\PE_k} \Bs_2 \;&\Rightarrow\; \As_1 + \Bs_1 \equiv_{\PE_k} \As_2 + \Bs_2\\
    \As_1 \equiv_{\CFO_k} \As_2 \;\mbox{ and }\; \Bs_1 \equiv_{\CFO_k} \Bs_2 \;&\Rightarrow\; \As_1 + \Bs_1 \equiv_{\CFO_k} \As_2 + \Bs_2
\end{align*}
So we have FVM type results for the positive existential fragment, and~$\FO$ with counting quantifiers. For full first-order logic, we require an additional observation. As~$\EM{\Ek}$ has equalizers, we can lift coproducts to a functor~$\EM{\Ek} \times \EM{\Ek} \rightarrow \EM{\Ek}$ by forming the equalizer:
\begin{equation*}
    \begin{tikzcd}
        \As \blift+ \Bs
            \rar
        & \EMF{\Ek}(\As + \Bs)
            \ar[yshift=0.5em]{rr}{\EMF{\Ek}(\kappa)\circ \delta}
            \ar[swap,yshift=-0.5em]{rr}{\EMF{\Ek}(\alpha + \beta)}
        &
        & \EMF{\Ek}(\Ek(\As) + \Ek(\Bs))
    \end{tikzcd}
    \label{eq:plus-lift}
\end{equation*}
Crucially, this functor satisfies~$\EMF{\Ek}(\As + \Bs) \cong \EMF{\Ek}(\As) \blift{+} \EMF{\Ek}(\Bs)$. We can then verify the following by direct calculation.
\begin{restatable}{lemma}{InterleavingSumsPreserveOPE}
    \label{l:coprod-ope-preservation}
    Given a pair of open pathwise-embeddings $f,g$ in $\EM{\Ek}$, the morphism $f\blift+ g$ is also an open pathwise-embedding.
\end{restatable}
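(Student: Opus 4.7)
The plan is to exploit the concrete form of $\blift+$ as the equalizer of $\EMF{\Ek}(\kappa)\circ \delta$ and $\EMF{\Ek}(\alpha+\beta)$, and to reduce the claim to two separate applications of the hypotheses on $f$ and $g$. Since the equalizer inclusion $\iota\colon A\blift+ B \to \EMF{\Ek}(A+B)$ is a (regular) monomorphism in $\EM{\Ek}$, any path embedding $e\colon (P,\pi)\embed A_1 \blift+ B_1$ composes to yield a path embedding into $\Ek(A_1+B_1)$. The underlying forest of such a path is a finite linear chain of elements of $A_1+B_1$, which canonically splits into two sub-chains, one lying over $A_1$ and the other over $B_1$. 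The equalizer condition witnessed by $\kappa$ ensures that each sub-chain inherits a coherent $\Ek$-coalgebra structure and thus constitutes a path embedding into $\EMF{\Ek}(A_1)$ or $\EMF{\Ek}(B_1)$, respectively.

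For the pathwise-embedding property, I would apply this decomposition to each path embedding $e$ into $A_1\blift+ B_1$ to extract component embeddings $e_A\colon (P_A,\pi_A)\embed \EMF{\Ek}(A_1)$ and $e_B\colon (P_B,\pi_B)\embed \EMF{\Ek}(B_1)$. By the pathwise-embedding hypotheses, $f\circ e_A$ and $g\circ e_B$ are embeddings into $\EMF{\Ek}(A_2)$ and $\EMF{\Ek}(B_2)$. Reassembling these and invoking the universal property of the equalizer defining $A_2\blift+ B_2$, together with the stability of strong injective homomorphisms under coproducts and the $\Ek$-action, then shows that $(f\blift+ g)\circ e$ is an embedding.

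For openness, consider a commutative square with paths $(P,\pi)$ over $A_1 \blift+ B_1$ and $(Q,\rho)$ over $A_2 \blift+ B_2$ as in the definition. Splitting $P$ and $Q$ into their $A$- and $B$-components by the same decomposition induces two component squares to which the openness hypotheses on $f$ and $g$ apply, yielding diagonal fillers $d_A$ and $d_B$ landing in $\EMF{\Ek}(A_1)$ and $\EMF{\Ek}(B_1)$. Interleaving $d_A$ and $d_B$ along the coalgebra structure of $(Q,\rho)$, which records the order in which the $A$- and $B$-components alternate in the chain, produces a candidate morphism $d\colon (Q,\rho)\to A_1 \blift+ B_1$; naturality of $\kappa$ and the equalizer universal property ensure both that $d$ actually lands in $A_1\blift+ B_1$ and that the two resulting triangles commute.

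The principal obstacle is justifying the decomposition-and-reassembly step uniformly: I must verify that restricting the coalgebra structure of a path to its $A$- and $B$-parts yields bona fide $\Ek$-coalgebras (hence paths in the sense of Remark~\ref{r:coalg-order}), and that interleaving two filler morphisms lands in the equalizer subobject $A_1\blift+ B_1$ rather than merely in $\EMF{\Ek}(A_1+B_1)$. The axioms relating $\kappa$ to $\counit$ and $\delta$, together with the equalizer definition of $\blift+$, are precisely what make these operations well-defined; once this is in place, the remaining verifications reduce to routine diagram chases that use preservation of monomorphisms by $\Ek$ and by $+$ on $\Rel$.
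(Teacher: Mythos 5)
Your proposal is correct and follows essentially the same route as the paper: both identify $\As\blift+\Bs$ concretely as the words in $\Ek(\As+\Bs)$ whose $A$- and $B$-subwords are paths in the respective forest orders (the paper cites the forest-order presentation of $\EM{\Ek}$ for this), and then reduce both the pathwise-embedding and openness conditions for $f\blift+ g$ to the corresponding properties of $f$ and $g$ via this decompose-and-interleave argument. One small notational point: the component path embeddings you extract land in the coalgebras $(\As_1,\alpha_1)$ and $(\Bs_1,\beta_1)$ themselves, not in the cofree coalgebras $\EMF{\Ek}(\As_1)$, $\EMF{\Ek}(\Bs_1)$, since $f$ and $g$ are arbitrary morphisms of $\EM{\Ek}$.
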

We see that the following span
\[ \EMF{\Ek}(\trI(\As_1)) \blift+ \EMF{\Ek}(\trI(\As_2)) \xleftarrow{f_1\blift+f_2} R_1 \blift+ R_2 \xrightarrow{g_1\blift+g_2} \EMF{\Ek}(\trI(\Bs_1)) \blift + \EMF{\Ek}(\trI(\Bs_2)) \]
is a span of open pathwise-embeddings, by lemma~\ref{l:coprod-ope-preservation}. Moreover, $\EMF{\Ek}(\trI(\As_1)) \blift+ \EMF{\Ek}(\trI(\As_2))$ is isomorphic to $\EMF{\Ek}(\trI(\As_1) + \trI(\As_2))$ and
similarly for $B_1$ and $B_2$,
% $\EMF{\Ek}(\trI(\Bs_1)) \blift+ \EMF{\Ek}(\trI(\Bs_2)) \cong \EMF{\Ek}(\trI(\Bs_1) + \trI(\Bs_2))$,
and~$\trI(A) + \trI(B) \cong \trI(A + B)$,
yielding a span of shape \eqref{eq:coprod-span-2}, and so $\As_1 + \As_2 \equiv_{\FO_k}\Bs_1 + \Bs_2$.

% \begin{remark}
%     The lifting of coproducts to the Kleisli and Eilenberg-Moore categories arises via a theory, probably better known in the setting of monads.
%     A commutative monad~\cite{kock1970monads} is a lax monoidal monad. For a commutative monad, the monoidal structure of the base category lifts to the Kleisli category, and to the Eilenberg-Moore category if it is sufficiently complete~\cite{day1970closed, kock1971closed}. The liftings in this section can be seen as arising from the dual of these results, as~$\Ek$ is opmonoidal with respect to coproducts.
% \end{remark}
\begin{remark}
    The above analysis follows from a well-known theory of commutative (resp.\ lax monoidal) monads~\cite{kock1970monads}. They lift the monoidal structure to the Kleisli and Eilenberg-Moore categories (if sufficiently complete~\cite{day1970closed, kock1971closed}). The liftings of coproducts in this section is arising from the dual of these results, as~$\Ek$ is opmonoidal with respect to coproducts.
\end{remark}

Motivated by this example, in the sequel we introduce a general framework for lifting logical equivalences. The general results deal with operations of arbitrary finite arity, with unary and binary transformations being of primary interest in applications. We will also need to deal with varying base categories and comonads, as signatures, resource parameters or even logics may vary under model transformations. As was the case in this example, this leads to FVM theorems for a given logic, its positive existential fragment, and its extension with counting quantifiers, parametric in the game comonad of interest. The main technical challenge is giving sufficient conditions for the analogue of the critical lemma~\ref{l:coprod-ope-preservation} to hold.

\subsection{FVM-Theorems from Kleisli-laws}
\label{s:fvm-thm-from-klei-laws}
As we saw in section~\ref{sec:motivating}, in order to develop FVM-type theorems within the comonadic framework, it is natural to consider functorial transformations of models.
We consider categories $\CC_1$, \dots, $\CC_n$ and $\CD$, equipped with comonads $\C_1,\dots,\C_n$, and $\D$, respectively, and a functorial operation $\op\colon \prod_i \CC_i \to \CD$. 

\begin{remark}
To simplify our notation for objects and morphisms in product categories, we shall write $\vec{A_i}$ and $\vec{f_i}$ instead of the longer expressions $A_1,\dots,A_n$ and $f_1,\dots,f_n$, respectively.
\end{remark}

A \df{Kleisli-law} is a natural transformation $ \kappa\colon \D \circ H \to \op \circ \prod_i \C_i $
such that:
\begin{axioms}
    \item[\textbf{\namedlabel{ax:kl-law-counit}{(F1)}}] $\op(\vec{\counit_i}) \circ \kappa = \counit_{\op}$
    \item[\textbf{\namedlabel{ax:kl-law-comultiplication}{(F2)}}] $\op(\vec{\delta_i}) \circ \kappa = \kappa \circ \D\kappa \circ \delta_{\op}$
\end{axioms}

These laws are direct generalisations of Kleisli-laws lifting of unary functor to the Kleisli category, see e.g. \cite{manes2007monad}. The following result is probably folklore~\cite{jacobs1994semantics}:
\begin{restatable}{proposition}{KleisliCorrespondence}
There is a bijective correspondence between:
\begin{enumerate}
    \item Kleisli-laws of type
    $ \D \circ \op \to \op \circ \prod_i \C_i $.
    \item Functors~$\klop : \prod_i \Klei{\C_i} \rightarrow \Klei{\D}$ such that:
    $\klop \circ \prod_i \KLF{\C_i} = \KLF{\D} \circ \op$.
    \label{eq:kleisli-lift}
\end{enumerate}
\end{restatable}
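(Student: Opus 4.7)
The plan is to exhibit mutually inverse constructions between the two sides of the correspondence.

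\emph{Kleisli-laws to lifted functors.} Given $\kappa$ satisfying \ref{ax:kl-law-counit} and \ref{ax:kl-law-comultiplication}, define $\klop(\vec{A_i}) := \op(\vec{A_i})$ on objects, and on a morphism $\vec{f_i}\colon \vec{A_i}\to \vec{B_i}$ in $\prod_i \Klei{\C_i}$ (i.e.\ $\CC_i$-morphisms $f_i\colon \C_i A_i\to B_i$) by
\begin{equation*}
    \klop(\vec{f_i}) \;:=\; \op(\vec{f_i}) \circ \kappa_{\vec{A_i}}
    \;\colon\; \D \op(\vec{A_i}) \to \op(\vec{B_i}),
\end{equation*}
viewed as a Kleisli arrow $\op(\vec{A_i}) \to \op(\vec{B_i})$ in $\Klei{\D}$. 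Identity preservation is exactly \ref{ax:kl-law-counit}, while composition preservation, after expanding via naturality of $\kappa$ and the formula $f^* = \C f \circ \delta$, reduces to \ref{ax:kl-law-comultiplication}. The required identity $\klop \circ \prod_i \KLF{\C_i} = \KLF{\D}\circ \op$ then follows at once from \ref{ax:kl-law-counit}, since $\KLF{\C_i}(f) = f \circ \counit_i$.

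\emph{Lifted functors to Kleisli-laws.} Conversely, write $\eta_{A_i}\colon A_i \to \C_i A_i$ for the canonical Kleisli morphism whose underlying $\CC_i$-morphism is $\id_{\C_i A_i}$, and set
\begin{equation*}
    \kappa_{\vec{A_i}} \;:=\; \klop(\vec{\eta_{A_i}}) \;\colon\; \D \op(\vec{A_i}) \to \op(\vec{\C_i A_i}).
\end{equation*}
Naturality of $\kappa$ is inherited from naturality of $\eta$ and functoriality of $\klop$. For \ref{ax:kl-law-counit}, observe that $\KLF{\C_i}(\counit_{A_i}) \klcomp \eta_{A_i} = \id_{A_i}$ in $\Klei{\C_i}$; applying $\klop$ componentwise and using the compatibility $\klop\circ \prod_i\KLF{\C_i} = \KLF{\D}\circ \op$ yields the axiom, after translating Kleisli composition via $g \klcomp f = g \circ f^*$. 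For \ref{ax:kl-law-comultiplication}, one computes that both $\KLF{\C_i}(\delta^{\C_i}_{A_i}) \klcomp \eta_{A_i}$ and $\eta_{\C_i A_i} \klcomp \eta_{A_i}$ equal $\delta^{\C_i}_{A_i}$ as morphisms in $\Klei{\C_i}$; functoriality of $\klop$ then forces the two sides of (F2) to agree.

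The round trip $\kappa \mapsto \klop \mapsto \kappa'$ returns $\op(\vec{\id}) \circ \kappa = \kappa$. In the other direction, every Kleisli morphism $f_i\colon A_i \to B_i$ decomposes as $\KLF{\C_i}(f_i) \klcomp \eta_{A_i}$ (reading $f_i$ as a $\CC_i$-morphism on the left), so by functoriality of $\klop$ and the compatibility equation the reconstructed $\klop'$ agrees with $\klop$. The main bookkeeping obstacle is consistently distinguishing each arrow as a $\CC_i$-morphism versus a $\Klei{\C_i}$-morphism and translating Kleisli composition uniformly via $g \klcomp f = g \circ f^*$. Once this notational discipline is in place, no structural assumptions on $\CD$ are needed, and the remaining verifications are routine diagram chases.
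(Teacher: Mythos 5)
Your proposal is correct and follows essentially the same route as the paper: the same formula $\klop(\vec{f_i}) = \op(\vec{f_i})\circ\kappa$ in one direction, the same definition $\kappa = \klop(\vec{\id_{\C_i A_i}})$ in the other, and the same round-trip checks. Your habit of first isolating small identities in $\Klei{\C_i}$ (such as $\KLF{\C_i}(\counit)\klcomp\eta = \id$ and the two Kleisli factorisations of $\delta$) and then applying $\klop$ is just a tidier packaging of the equational chains the paper writes out in full.
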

It is encouraging for the game comonad approach that this classical theory immediately yields results for logical equivalences. 
To this end, let $\approx_{\C_i}$ and~$\approx_{\D}$ be binary relations on~$\CC_i$ and~$\CD$ objects respective. We shall say that \df{$\op$ preserves~$\approx$} if for all~$1 \leq i \leq n$, $\As_i \mathbin{\approx_{\C_i}}\Bs_i$, then 
$\op(\As_1,\ldots, \As_n) \mathbin{\approx_{\D}} \op(\Bs_1,\ldots,\Bs_n)$.

\begin{restatable}{theorem}{KleisliLogicalEquivalences}
    If $\kappa : \D \circ \op \rightarrow \op \circ \prod_i \C_i$t is a Kleisli-law then~$\op$ preserves both~$\leftrightarrows$ and~$\cong_{\KleiEmpty}$.
    \label{t:FVM-from-kleisli-laws}
\end{restatable}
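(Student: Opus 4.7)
The plan is to reduce the statement to the bijective correspondence established in the preceding Proposition. A Kleisli-law $\kappa$ yields a lifted functor $\klop : \prod_i \Klei{\C_i} \to \Klei{\D}$ satisfying the coherence equation
\[ \klop \circ \textstyle\prod_i \KLF{\C_i} = \KLF{\D} \circ \op. \]
Both preservation claims then follow because the defining data of $\leftrightarrows$ and $\cong_{\KleiEmpty}$ are intrinsically Kleisli-categorical notions, and hence automatically transported by $\klop$.

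Concretely, for $\cong_{\KleiEmpty}$, suppose $\As_i \cong_{\Klei{\C_i}} \Bs_i$ for each $i$, i.e.\ there are isomorphisms $\KLF{\C_i}(\As_i) \cong \KLF{\C_i}(\Bs_i)$ in $\Klei{\C_i}$. Assembling these into a tuple and applying $\klop$, which as a functor preserves isomorphisms, gives an iso $\klop(\KLF{\C_1}\As_1,\dots,\KLF{\C_n}\As_n) \cong \klop(\KLF{\C_1}\Bs_1,\dots,\KLF{\C_n}\Bs_n)$ in $\Klei{\D}$. By the coherence equation this is precisely $\KLF{\D}(\op(\vec{\As_i})) \cong \KLF{\D}(\op(\vec{\Bs_i}))$, that is, $\op(\vec{\As_i}) \cong_{\Klei{\D}} \op(\vec{\Bs_i})$.

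For $\leftrightarrows$, the key observation is that, by the very definition of the Kleisli category, a base morphism $\C_i(\As_i) \to \Bs_i$ is the same data as a Kleisli morphism $\KLF{\C_i}(\As_i) \to \KLF{\C_i}(\Bs_i)$. So given $\As_i \leftrightarrows_{\C_i} \Bs_i$ for each $i$, we pair up the Kleisli morphisms in each direction into tuples in $\prod_i \Klei{\C_i}$ and apply the functor $\klop$, producing Kleisli morphisms $\KLF{\D}(\op(\vec{\As_i})) \to \KLF{\D}(\op(\vec{\Bs_i}))$ and $\KLF{\D}(\op(\vec{\Bs_i})) \to \KLF{\D}(\op(\vec{\As_i}))$ in $\Klei{\D}$, again using the coherence equation to identify the endpoints. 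Unwinding the Kleisli convention, these are base morphisms $\D(\op(\vec{\As_i})) \to \op(\vec{\Bs_i})$ and $\D(\op(\vec{\Bs_i})) \to \op(\vec{\As_i})$, witnessing $\op(\vec{\As_i}) \leftrightarrows_{\D} \op(\vec{\Bs_i})$.

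There is essentially no obstacle beyond bookkeeping: all the work is absorbed into the previous Proposition, which packages \ref{ax:kl-law-counit} and \ref{ax:kl-law-comultiplication} into a functor that respects the free coalgebra assignment. The logical equivalences in question are defined purely in terms of Kleisli-morphism data (an isomorphism in one case, a morphism in each direction in the other), so any such Kleisli lifting preserves them automatically.
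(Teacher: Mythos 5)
Your proposal is correct and follows essentially the same route as the paper's own proof: both invoke the lifted functor $\klop$ from the preceding bijective correspondence, transport the Kleisli morphisms (respectively isomorphisms) through it, and use the coherence equation $\klop \circ \prod_i \KLF{\C_i} = \KLF{\D} \circ \op$ to identify the endpoints. Nothing is missing.
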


% Not mentioning adjoints in this example for now, as it is a distraction.
\begin{example}[Comonad Morphisms]
    \label{e:comonad-morphisms}
    A \df{comonad morphism}~$\sigma : \C \Rightarrow \D$ is a natural transformation such that
    $\counit \circ \sigma = \counit$ and $\delta \circ \sigma = \D(\sigma) \circ \sigma \circ \delta$.
    A comonad morphism is a special case of a Kleisli-law, of type~$\C \circ \Id \Rightarrow \Id \circ \D$, corresponding to liftings of the identity functor. 
    % Comonad morphisms between game comonads allow us to translate winning strategies between different games.
    For example, there is a comonad morphism of type~$\comonad{M} \rightarrow \comonad{P}_2$, with action preserving constants, and mapping lists to lists pebbled by the parity of their position:
    \[ [a_1,a_2,a_3\ldots,a_n] \mapsto [(1,a_1),(0,a_2),(1,a_3)\ldots,(p_n,a_n)] \]
    where $p_n = n \mod 2$. 
  %  As we use the comonad~$\comonad{M}$ without resource bound,the lifted functor $\Klei{\comonad{P}_2} \to \Klei{\comonad{M}}$ maps winning strategies for the existential and counting two-pebble games, to strategies for the \emph{full} simulation and graded bisimulation game. 
    Consequently, utilizing the logical characterizations  analogous to proposition \ref{p:ef-logic-1} in the cases of (full, unbounded depth) $\comonad{M}$ and $\comonad{P}_2$ \cite{AbramskyDW17,abramsky2021relating}, we obtain a semantic translation of positive $\square$-free modal logic into positive existential two variable logic and graded modal logic into two variable logic with counting quantifiers.
\end{example}

\subsection{Lifting operations to categories of coalgebras}
\label{s:lifting-ops-to-coalgs}
Assuming appropriate equalisers in $\EM{\D}$ exist, we can extend the above lifting to the categories of coalgebras $\lop\colon \prod_i \EM{\C_i} \rightarrow \EM{\D}$. We define $\lop$ pointwise as the equaliser, for coalgebras $(\As_1,\alpha_1)\in \EM{\C_1}$, \dots, $(\As_n,\alpha_n)\in \EM{\C_n}$,
\begin{equation}
    \begin{tikzcd}[column sep=2.5em]
        \lop(\vec{\alpha_i})
            \rar{\iota_{\vec{\alpha_i}}}
        & \EMF{\D}(\op(\vec{A_i}))
            \ar[yshift=0.5em]{rr}{\EMF{\D}(\kappa)\circ \delta}
            \ar[swap,yshift=-0.5em]{rr}{\EMF{\D}(\op(\vec{\alpha_i}))}
        &
        & \EMF{\D}(\op(\vec{\C_i(\As_i)}))
    \end{tikzcd}
    \label{eq:op-lift}
\end{equation}
In fact, the two parallel morphisms are a \df{reflexive pair}. That is, they have a mutual post-inverse.
\begin{restatable}{proposition}{EMLifting}
    \label{p:kl-law}
    If $\kappa$ is a Kleisli-law, and~$\EM{\D}$ has equalizers of reflexive pairs, then $\lop$ is a well-defined functor such that~$\KLF{\D} \circ \op \cong \lop \circ \prod_i \EMF{\C_i}$.
\end{restatable}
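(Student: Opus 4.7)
The plan is to verify that diagram~\eqref{eq:op-lift} is a well-formed equaliser of a reflexive pair in $\EM{\D}$, extend $\lop$ canonically to coalgebra morphisms, and then establish the claimed isomorphism on cofree coalgebras (reading the displayed equation as $\EMF{\D}\circ \op \cong \lop \circ \prod_i \EMF{\C_i}$, in line with Section~\ref{sec:motivating}).

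First I would check well-formedness and exhibit a common post-inverse. Both parallel arrows are genuine coalgebra morphisms: $\EMF{\D}(\op(\vec{\alpha_i}))$ is $\EMF{\D}$ applied to a base morphism, while $\EMF{\D}(\kappa) \circ \delta$ factors through $\EMF{\D}(\D(\op(\vec{A_i})))$ by viewing $\delta$ as a coalgebra morphism (a direct use of coassociativity). For reflexivity take $r := \EMF{\D}(\op(\vec{\counit_i}))$; then $r \circ \EMF{\D}(\op(\vec{\alpha_i})) = \EMF{\D}(\op(\vec{\counit_i \circ \alpha_i})) = \id$ by the coalgebra counit law, while $r \circ \EMF{\D}(\kappa) \circ \delta = \EMF{\D}(\op(\vec{\counit_i}) \circ \kappa) \circ \delta = \EMF{\D}(\counit_\op) \circ \delta = \id$ by axiom~\ref{ax:kl-law-counit} and the comonad counit axiom. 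The assumed equalisers then produce $\lop(\vec{\alpha_i})$ on objects.

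For functoriality, given coalgebra morphisms $h_i : (A_i,\alpha_i) \to (B_i,\beta_i)$, I would define $\lop(\vec{h_i})$ as the unique factorisation of $\EMF{\D}(\op(\vec{h_i})) \circ \iota_{\vec{\alpha_i}}$ through $\iota_{\vec{\beta_i}}$. The equalising condition reduces by a short chase to naturality of $\kappa$ together with the coalgebra-morphism equation $\beta_i \circ h_i = \C_i h_i \circ \alpha_i$; preservation of identities and composition is then immediate from uniqueness in the universal property.

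The main hurdle is the isomorphism on cofree coalgebras. To build the comparison map $\theta : \EMF{\D}(\op(\vec{A_i})) \to \lop(\vec{\delta_{A_i}})$ I would apply the universal property to the candidate $\EMF{\D}(\kappa_{\vec{A_i}}) \circ \delta_{\op(\vec{A_i})}$; showing that this equalises the cofree-level pair unfolds, via naturality of $\delta$ and coassociativity, into a single application of axiom~\ref{ax:kl-law-comultiplication}. For the inverse I would take $\theta^{-1} := \EMF{\D}(\op(\vec{\counit_{A_i}})) \circ \iota_{\vec{\delta_{A_i}}}$: the triangle $\theta^{-1} \circ \theta = \id$ follows from~\ref{ax:kl-law-counit} and the counit axiom just as in the reflexive-pair computation above, while $\theta \circ \theta^{-1} = \id$ reduces, by monicness of $\iota$, to checking $\EMF{\D}(\kappa_{\vec{A_i}}) \circ \delta \circ \theta^{-1} = \iota$, and this unpacks using naturality of $\kappa$ at $\vec{\counit_{A_i}}$, the equalising equation $\EMF{\D}(\kappa_{\vec{\C_i A_i}}) \circ \delta \circ \iota = \EMF{\D}(\op(\vec{\delta_{A_i}})) \circ \iota$, and the comonad axiom $\C_i \counit \circ \delta = \id$. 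The persistent bookkeeping subtlety is tracking which instance of $\kappa$ ($\kappa_{\vec{A_i}}$ or $\kappa_{\vec{\C_i A_i}}$) is in play when invoking \ref{ax:kl-law-comultiplication}.
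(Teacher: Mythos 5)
Your proposal is correct and follows essentially the same route as the paper's proof: the same reflexive-pair witness $\EMF{\D}(\op(\vec{\counit_i}))$, the same definition of $\lop$ on morphisms via the equaliser's universal property, and the same pair of comparison maps $\EMF{\D}(\kappa)\circ\delta$ and $\EMF{\D}(\op(\vec{\counit_i}))\circ\iota$ for the isomorphism (the paper packages this by showing $\D(\kappa)\circ\delta$ is split mono and itself an equaliser of the parallel pair, which amounts to the same triangle computations you describe). Your reading of the displayed equation as $\EMF{\D}\circ\op\cong\lop\circ\prod_i\EMF{\C_i}$ agrees with what the paper actually constructs; the only step you leave implicit that the paper spells out is naturality of the resulting isomorphism, which follows routinely from naturality of $\iota$ and the universal property.
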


Note that existence of said equalisers is guaranteed for our game comonads.

\begin{restatable}{lemma}{EMequalisers}
    \label{l:EM-equalisers}
    For a comonad $\C$ over $\Rel$. If $\C$ preserves embeddings then the category $\EM{\C}$ has equalisers.

    The same is true for a comonad $\C$ over any wide\footnote{Recall that a subcategory is \emph{wide} if it contains the same objects, with only morphisms restricted.} subcategory $\CC$ of $\Rel$ with coproducts, all embeddings of $\Rel$ and satisfying that if $f\circ g\in \CC$, for some $f,g\in \Rel$, then also $g\in \CC$.
\end{restatable}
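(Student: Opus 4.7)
The plan is to construct the equalizer of a parallel pair $f, g : (\As, \alpha) \to (\Bs, \beta)$ in $\EM{\C}$ as a canonical sub-coalgebra of $\As$. First, I would consider the set-theoretic equalizer $\tilde E = \{a \in A : f(a) = g(a)\}$ equipped with the substructure inherited from $\As$; this gives an embedding $\tilde E \embed \As$ in $\Rel$, realizing the equalizer of $f, g$ in $\Rel$. However $\alpha$ need not restrict to a coalgebra structure on $\tilde E$, since $\alpha(a)$ for $a \in \tilde E$ may refer to elements outside $\tilde E$.

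To fix this, I would define $E \sue \tilde E$ as the largest subset satisfying the closure condition $\alpha(a) \in \C(E)$ for every $a \in E$, where $\C(E)$ is identified with its image under the embedding $\C(e) : \C(E) \to \C(\As)$ (an embedding by the hypothesis on $\C$, with $e : E \embed \As$ the inclusion). Such a largest $E$ exists because the family of subsets of $\tilde E$ satisfying this condition is closed under arbitrary unions: if each $F_j$ satisfies it and $a \in \bigcup_j F_j$, then $a \in F_{j_0}$ for some $j_0$, so $\alpha(a) \in \C(F_{j_0}) \sue \C(\bigcup_j F_j)$, the last inclusion coming from applying $\C$ to the embedding $F_{j_0} \embed \bigcup_j F_j$ and tracking images inside $\C(\As)$. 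Then $\alpha \circ e$ corestricts uniquely along $\C(e)$ to a map $\alpha_E : E \to \C(E)$, and the two coalgebra axioms for $\alpha_E$ follow from those of $\alpha$ by cancelling $\C(e)$ and $\C(\C(e))$, both of which are mono by the preservation hypothesis (applied once and twice, respectively).

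For the universal property, given any coalgebra morphism $h : (X, \xi) \to (\As, \alpha)$ with $f \circ h = g \circ h$, I would show $h[X] \sue E$ by maximality of $E$. Clearly $h[X] \sue \tilde E$; factoring $h$ as a surjection followed by an embedding $h = e' \circ q$ in $\Rel$, the coalgebra morphism identity gives $\alpha(h(x)) = \C(h)(\xi(x)) = \C(e')(\C(q)(\xi(x))) \in \C(h[X])$ viewed inside $\C(\As)$, so $h[X]$ satisfies the closure condition too. Then $h$ corestricts to a unique $k : X \to E$ with $e \circ k = h$, automatically a $\sg$-morphism since $E$ carries the induced substructure, and automatically a coalgebra morphism by cancelling the mono $\C(e)$ on both sides of $\C(e) \circ \alpha_E \circ k = \alpha \circ h = \C(h) \circ \xi = \C(e) \circ \C(k) \circ \xi$.

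The main obstacle is the closure step for unions: establishing that unions of subsets satisfying the closure condition remain closed is precisely where the hypothesis that $\C$ preserves embeddings enters, letting us identify $\C$ applied to each substructure with its image in $\C(\As)$ and compare these as actual subsets. For the wide subcategory statement, the identical construction applies within $\CC$: since $\CC$ is wide and contains every embedding of $\Rel$, the object $E$, the inclusion $e$, the image factor $e'$, and all intermediate inclusions among the $F_j$ used above already lie in $\CC$; and the factorization $h = e \circ k$ with $h, e \in \CC$ forces $k \in \CC$ by the stated right-closure property of $\CC$, yielding the universal property inside $\CC$.
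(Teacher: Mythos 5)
Your proof is correct, but it takes a genuinely different route from the paper's. The paper reduces the lemma to an abstract Linton-type result (its Proposition A.5): it lifts the (surjective, strong injective) factorisation system to $\EM{\C}$ using preservation of embeddings, notes that $\EM{\C}$ is well-powered with respect to the lifted embeddings and has coproducts (reflected by the forgetful functor from $\CC$), and then builds the equalizer of $f,g$ by forming the coproduct of \emph{all} equalizing subobjects and taking the embedding part of the factorisation of the induced map. You instead construct the equalizer directly as the largest sub-coalgebra of $(\As,\alpha)$ contained in the set-theoretic equalizer $\tilde E$, using preservation of embeddings to corestrict $\alpha$ along $\C(e)$ and to compare images of $\C(F_j)$ inside $\C(\As)$ when checking closure under unions. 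Your construction never uses the coproducts hypothesis, which is harmless (it is only needed for the paper's route), and it is more elementary; the trade-off is that it leans on concrete set-level features of $\Rel$ (underlying sets, images, unions of substructures), whereas the paper's argument is packaged as a reusable categorical proposition. Two small points worth making explicit in your write-up: when $\sg$ has constants the empty subset is not a substructure, so you should note that the set of constant interpretations is itself closed (hence the family of closed subsets is nonempty and the largest member contains the constants); and in the wide-subcategory case you should also record that $\alpha_E$ itself lies in $\CC$, which follows from $\C(e)\circ\alpha_E = \alpha\circ e \in \CC$ and the stated right-cancellation property, by the same argument you give for $k$.
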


\subsection{Smooth Kleisli laws}
\label{s:smooth-ops}
In this section we observe that for an operation to preserve $\bisim$ we require the lifted operation to preserve open pathwise-embeddings. Since open pathwise-embeddings are defined in terms of embeddings and paths, we require that the categories $\CC_1,\dots,\CC_{n + 1}$\footnote{It is notationally convenient to denote the codomain category and comonad~$\CC_{n + 1}$ and~$\C_{n + 1}$, rather than~$\CD$ and~$\D$ is this section. As before, we assume that $\EM{\C_{n+1}}$ has equalisers of reflexive pairs.} 
are equipped with proper factorisation systems and the comonads preserve embeddings (cf.\ lemma~\ref{l:lifting-EMfs}).
    % \item There are designated objects $\Pc, \Qc, \dots$ in $\EM{\C_i}$, for $i=1,\dots,n$ and $\EM{\D}$ which we call \df{paths}.
We then say an operation $\op\colon \prod \CC_i \to \CC_{n + 1}$ with a Kleisli law $\kappa$ is \df{smooth}\footnote{Mirroring the terminology introduced in~\cite{makowsky2004algorithmic}.} if $\lop(f_1,\dots,f_n)$ is an open pathwise-embedding for every tuple of open pathwise-embeddings $f_1,\dots,f_n$. Smoothness immediately guarantees FVM-type theorem for back-and-forth equivalence.

\begin{restatable}{proposition}{SmoothKleisliPreservesEquivalence}
    \label{p:smooth-Kleisli-law}
    If $\op$ admits a smooth Kleisli law, then $\op$ preserves $\bisim$.
\end{restatable}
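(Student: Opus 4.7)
The plan is straightforward once the smoothness hypothesis is unpacked. Suppose $\As_i \bisim_{\C_i} \Bs_i$ for each $1 \leq i \leq n$, witnessed by spans of open pathwise-embeddings
\[
    \EMF{\C_i}(\As_i) \xleftarrow{f_i} R_i \xrightarrow{g_i} \EMF{\C_i}(\Bs_i)
\]
in $\EM{\C_i}$. Applying the lifted operation $\lop\colon \prod_i \EM{\C_i} \to \EM{\C_{n+1}}$ from proposition~\ref{p:kl-law} pointwise yields a span
\[
    \lop(\vec{\EMF{\C_i}(\As_i)}) \xleftarrow{\lop(\vec{f_i})} \lop(\vec{R_i}) \xrightarrow{\lop(\vec{g_i})} \lop(\vec{\EMF{\C_i}(\Bs_i)}).
\]

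The first key step invokes the smoothness hypothesis directly: because each $f_i$ and each $g_i$ is an open pathwise-embedding, smoothness of $\op$ immediately gives that $\lop(\vec{f_i})$ and $\lop(\vec{g_i})$ are open pathwise-embeddings in $\EM{\C_{n+1}}$.

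The second step is to reshape the endpoints of this span into the form required by the definition of $\bisim_{\C_{n+1}}$. By proposition~\ref{p:kl-law} we have natural isomorphisms
\[
    \lop(\vec{\EMF{\C_i}(\As_i)}) \cong \EMF{\C_{n+1}}(\op(\vec{\As_i})) \qquad\text{and}\qquad \lop(\vec{\EMF{\C_i}(\Bs_i)}) \cong \EMF{\C_{n+1}}(\op(\vec{\Bs_i})).
\]
Composing the two ends of the span with these isomorphisms (which, being isomorphisms of coalgebras, are themselves open pathwise-embeddings, and which compose with open pathwise-embeddings to open pathwise-embeddings, using that both open morphisms and pathwise-embeddings are closed under composition) produces a span of open pathwise-embeddings with the required shape
\[
    \EMF{\C_{n+1}}(\op(\vec{\As_i})) \leftarrow \lop(\vec{R_i}) \rightarrow \EMF{\C_{n+1}}(\op(\vec{\Bs_i})),
\]
witnessing $\op(\vec{\As_i}) \bisim_{\C_{n+1}} \op(\vec{\Bs_i})$.

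There is really no hard step here; the proof is essentially bookkeeping once the right framework is in place. The conceptual work has already been done in building $\lop$ via the equaliser in~\eqref{eq:op-lift} and in packaging the nontrivial preservation property into the definition of smoothness. The only minor point of care is verifying that open pathwise-embeddings are closed under composition with isomorphisms, which is immediate from the defining diagonal-filler and factorisation conditions.
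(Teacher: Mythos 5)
Your proof is correct and follows essentially the same route as the paper's: lift the witnessing spans through $\lop$, invoke smoothness for openness and the pathwise-embedding property, and use the isomorphism $\lop(\vec{\EMF{\C_i}(\As_i)}) \cong \EMF{\C_{n+1}}(\op(\vec{\As_i}))$ from proposition~\ref{p:kl-law} to put the span in the required shape. The only difference is that you spell out the (routine) composition with the isomorphisms, which the paper leaves implicit.
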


Recall that $\equiv_{\FO_k}$ is obtained as the $\bisim_{\Ek}$ relation preceded by the $\trI$ translation. The following is one of our main definitions in this paper, tailored to account for this translation.
%(we denote $\D$ by $\C_{n+1}$ and $\CD$ by  $\CC_{n+1}$, for simplicity).

\begin{definition}
    \label{d:FO-smoothness}
    We say that an `operation' functor $\op\colon \prod_{i=1}^n \R_0(\sg_i) \to \R_0(\sg_{n+1})$ is \df{$(\C_1,\dots,\C_{n+1})$-smooth relative to $\tr_1,\dots,\tr_{n+1}$}, where $\tr_i\colon \R_0(\sg_i) \to \CC_i$ is a functor and $\C_i$ a comonad on $\CC_i$, for $1\leq i \leq n+1$, if there exists a functor $\op^\tr\colon \prod_{i=1}^n \CC_i \to \CC_{n+1}$ and natural transformations
    % with a smooth Kleisli law $\C_{n+1} \circ \op^\tr \to \op^\tr \circ \prod_i \C_i$ and natural transformation $\theta\colon \op\I \circ \prod_i\tr_i \rightarrow \tr_{n+1} (\op)$ 
    % \begin{equation}
    %     \begin{tikzcd}[column sep=3.6em]
    %         \prod_i \R_0(\sg_i) \rar{\prod_i \tr} \dar[swap]{\op} & \prod_i \CC_i \dar{\op\I} \ar[Rightarrow,dl,"\theta"]\\
    %         \R_0(\tau) \rar[swap]{\tr} &
    %         \CC_{n+1}
    %     \end{tikzcd}
    % \end{equation}
    \[  \theta\colon \op^\tr \circ \prod\nolimits_{i=1}^n \tr_i \to \tr_{n+1} \circ \op
        \qtq{and}
        \kappa\colon \C_{n+1} \circ \op^\tr \to \op^\tr \circ \prod\nolimits_{i=1}^n \C_i
    \]
    % and natural transformations $\kappa$ and $\theta$ as shown in the diagram below
    % \[
    %     \begin{tikzcd}[column sep=3.6em]
    %         \prod_i \R_0(\sg_i)
    %             \rar{\prod_i \tr_i}
    %             \dar[swap]{\op}
    %         & \prod_i \CC_i
    %             \dar[swap]{\op^\tr}
    %             \ar[Rightarrow,swap,dl,"\theta"]
    %             \rar{\prod_i\C_i}
    %         & \prod_i \CC_i
    %             \dar{\op^\tr}
    %         \\
    %         \R_0(\sg_{n+1})
    %             \rar[swap]{\tr_{n+1}}
    %         & \CC_{n+1}
    %             \rar[swap]{\C_{n+1}}
    %             \ar[Rightarrow]{ur}{\kappa}
    %         & \CC_{n+1}
    %     \end{tikzcd}
    % \]
    such that $\kappa$ is a smooth Kleisli law and $\EMF{\C_{n+1}}(\theta)$ is a collection of open pathwise embeddings. 

    Lastly, we say that $\op$ is \df{FO-$(\C_1,\dots,\C_{n+1})$-smooth} if it is $(\C_1,\dots,\C_{n+1})$-smooth relative to $\tr_1,\dots,\tr_{n+1}$ where each $\tr_i\colon  \R_0(\sg_i) \to \CC_i$ is a codomain restriction of the usual FO translation functor $\trI\colon \R_0(\sg_i) \to \Rs(\sgI_i)$.
    %  Lastly, we say that $\op$ is \df{FO-$(\C_1,\dots,\C_{n+1})$-smooth} if it is $(\C_1,\dots,\C_{n+1})$-smooth relative to $\tr_1,\dots,\tr_{n+1}$ where the codomain $\CC_i$ of $\tr_i$ is a subcategory of some $\R(\sgI_i)$ and $\tr_i\colon  \R_0(\sg_i) \to \CC_i$ is a factor of the usual FO translation functor $\trI\colon \R_0(\sg_i) \to \R(\sgI_i)$.
\end{definition}
%\begin{definition}
%    \label{d:FO-smoothness}
%    Given a functor $\op\colon \prod_i \R(\sg_i) \to \R(\tau)$ and comonads $\D$ on $\R(\tau\I)$ and $\C_i$ on $\R(\sg_i\I)$, for $1\leq i \leq n$, we say that $\op$ is \df{$(\vec{\C_i},\D)$-smooth}\footnote{Mirroring the terminology of~\cite{makowsky2004algorithmic}.} if there exists a functor $\op\I\colon \prod_i \R(\sg_i\I) \to \R(\tau\I)$ with a smooth Kleisli law (w.r.t.\ the said comonads) such that the following diagram commutes up-to isomorphism.
%    \begin{equation}
%        \begin{tikzcd}[column sep=3.6em]
%            \prod_i \R(\sg_i) \rar{\prod_i \trI} \dar[swap]{\op} & \prod_i \R(\sg_i\I) \dar{\op\I} \\
%            \R(\tau) \rar[swap]{\trI} & \R(\tau\I)
%        \end{tikzcd}
%    \end{equation}
%\end{definition}

% By convention, we do not repeat the comonad name if it is the same game comonad but different instances for different signatures. For example, we just say that an operation is $\Ek$-smooth instead of $(\Ek,\dots,\Ek)$-smooth.
By convention, call an $n$-ary operation $\C$-smooth if it is $(\C,\dots,\C)$-smooth, for the same game comonad $\C$ but possibly $n$ different instances for $n$ different signatures.

\begin{restatable}{corollary}{EkSmoothOpPreserve}
    FO-$\Ek$-smooth operations preserve $\equiv_{\FO_k}$.
\end{restatable}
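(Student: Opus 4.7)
The plan is to chain three results: the semantic characterisation $\As \equiv_{\FO_k} \Bs$ iff $\trI \As \bisim_{\Ek} \trI \Bs$ (proposition~\ref{p:ef-logic-2}), the preservation of $\bisim$ by operations admitting smooth Kleisli-laws (proposition~\ref{p:smooth-Kleisli-law}), and the natural transformation $\theta$ supplied by FO-smoothness, which bridges $\op^\tr \circ \prod \trI$ with $\trI \circ \op$.

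First I would unfold the definition of FO-$\Ek$-smoothness (definition~\ref{d:FO-smoothness}) to extract a lifted operation $\op^\tr\colon \prod_i \Rs(\sgI_i) \to \Rs(\sgI_{n+1})$ equipped with a smooth Kleisli-law $\kappa$, together with a natural transformation $\theta\colon \op^\tr \circ \prod_i \trI \to \trI \circ \op$ whose components $\EMF{\Ek}(\theta_{\vec{\As_i}})$ are open pathwise-embeddings. Assuming $\As_i \equiv_{\FO_k} \Bs_i$ for each $i$, proposition~\ref{p:ef-logic-2} supplies spans of open pathwise-embeddings $\EMF{\Ek}(\trI \As_i) \leftarrow R_i \rightarrow \EMF{\Ek}(\trI \Bs_i)$. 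I would then invoke proposition~\ref{p:smooth-Kleisli-law} on $\op^\tr$ to obtain $\op^\tr(\vec{\trI \As_i}) \bisim_{\Ek} \op^\tr(\vec{\trI \Bs_i})$, i.e.\ a single span of open pathwise-embeddings between $\EMF{\Ek}(\op^\tr(\vec{\trI\As_i}))$ and $\EMF{\Ek}(\op^\tr(\vec{\trI\Bs_i}))$.

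To finish, I would post-compose this span with $\EMF{\Ek}(\theta_{\vec{\As_i}})$ on the left and $\EMF{\Ek}(\theta_{\vec{\Bs_i}})$ on the right, producing a span of open pathwise-embeddings of the shape $\EMF{\Ek}(\trI \op(\vec{\As_i})) \leftarrow S \rightarrow \EMF{\Ek}(\trI \op(\vec{\Bs_i}))$ required by proposition~\ref{p:ef-logic-2}, from which $\op(\vec{\As_i}) \equiv_{\FO_k} \op(\vec{\Bs_i})$ follows at once.

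The only non-bookkeeping ingredient is the closure of open pathwise-embeddings under composition: openness composes by a standard diagonal-fill argument, and pathwise-embeddings compose because for any path embedding $e\colon (P,\pi) \embed \cdot$ the morphism $h \circ e$ is an embedding out of a path, hence itself a path embedding, so post-composing with a second pathwise-embedding $h'$ again yields an embedding. I expect this to be the main (but minor) obstacle needing explicit verification; the rest amounts to threading the spans through the machinery already developed in sections~\ref{s:fvm-thm-from-klei-laws} and~\ref{s:smooth-ops}.
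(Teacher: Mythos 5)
Your proposal is correct and follows essentially the same route as the paper's own proof: unfold FO-$\Ek$-smoothness, apply proposition~\ref{p:smooth-Kleisli-law} (which packages the lifted span $\lop(\vec{R_i})$ and the isomorphism from proposition~\ref{p:kl-law}) to the spans given by proposition~\ref{p:ef-logic-2}, and then post-compose with the open pathwise-embeddings $\EMF{\Ek}(\theta)$, using closure of open pathwise-embeddings under composition (which the paper simply cites from the literature rather than reproving).
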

Similar statements also hold for $\Pk$, $\Mk$ and their corresponding logical equivalences.

\begin{example}
In the case of first-order equivalence, and the coproduct operation, definition \ref{d:FO-smoothness} is instantiated with every comonad $\comonad{C}_i$ as~$\Ek$, and $\tr_i = \trI$. The natural transformation~$\kappa$ is as mentioned in section \ref{sec:motivating} and $\theta$ is the isomorphism $\trI(A) + \trI(B) \cong \trI(A + B)$. Therefore, coproducts are FO-$\Ek$-smooth and preserves $\equiv_{\FO_k}$.
\end{example}

\subsection{Practical axioms for smoothness}
\label{s:sufficient-axioms}
The notion of smoothness in definition~\ref{d:FO-smoothness} is extremely general, but the required conditions can be cumbersome to verify.
We now identify sufficient conditions for a Kleisli-law to be smooth, that are easier to establish in practice. We introduce the following axioms:
\begin{axioms}
    \item[\textbf{\namedlabel{ax:s1}{(S1)}}]
    $\op\colon \prod_i \CC_i \to \CD$ restricts to embeddings.

    \item[\textbf{\namedlabel{ax:s2}{(S2)}}]
    Any path embedding $e\colon \Pc \embed \lop(\vec{\Ac_i})$ has a \df{minimal decomposition} as $e_0\colon \Pc \to \lop(\vec{\Pc_i})$ followed by $\lop(\vec{e_i})\colon \lop(\vec{\Pc_i}) \to \lop(\vec{\Ac_i})$,
        % \[
        %     \begin{tikzcd}
        %         \Pc \rar{e_0} & \lop(\vec{\Pc_i}) \rar{\lop(\vec{e_i})} & \lop(\vec{\Ac_i}),
        %     \end{tikzcd}
        % \]
        for some path embeddings $e_i\colon \Pc_i \embed \Ac_i$, for $1 \leq i \leq n$.
\end{axioms}

Minimality in \ref{ax:s2} expresses that for any other decomposition of $e$ as $g_0 \colon \Pc \to \lop(\vec{\Qc_i})$ followed by $\lop(\vec{g_i}) \colon \lop(\vec{\Qc_i}) \to \lop(\vec{\Ac_i})$,
% \[
%     \begin{tikzcd}
%         \Pc \rar{g_0} & \lop(\vec{\Qc_i}) \rar{\lop(\vec{g_i})} & \lop(\vec{\Ac_i}),
%     \end{tikzcd}
% \]
for some path embeddings $g_i\colon \Qc_i \embed \Ac_i$, there exist (necessarily unique) morphisms $h_i\colon \Pc_i \to \Qc_i$ such that $e_i = g_i \circ h_i$, for $i=1,\dots,n$. 

A necessary step before showing sufficiency of \ref{ax:s1} and \ref{ax:s2}, is checking the following.

\begin{restatable}{proposition}{PEPreserved}
    If $f_1,\dots,f_n$ are embeddings (or pathwise-embeddings) in $\EM{\C_1}$, \dots, $\EM{\C_n}$, respectively, then $\lop(f_1,\dots,f_n)$ is an embedding (or pathwise-embedding).
    \label{p:pe-preserved}
\end{restatable}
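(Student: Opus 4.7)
The plan is to prove the embedding case first, and then bootstrap to the pathwise-embedding case using axiom \ref{ax:s2}. A common ingredient in both steps is the cancellation property of a proper factorisation system: if $m$ and $m\circ h$ both lie in $\Mc$, then so does $h$ (factor $h = m' \circ e$ with $m'\in\Mc$, $e\in\Ec$, then $m\circ h = (m\circ m')\circ e$ and $m\circ h = (m\circ h)\circ \id$ are two factorisations, so uniqueness together with $e$ being epi forces $e$ to be iso). By Lemma \ref{l:lifting-EMfs} the factorisation system lifts properly to $\EM{\C_{n+1}}$, so cancellation is available there as well. I will also use that equaliser inclusions are regular monos, hence embeddings.

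For the embedding case I exploit the defining equaliser \eqref{eq:op-lift}. The equaliser inclusions $\iota_{\vec{\alpha_i}}$ and $\iota_{\vec{\beta_i}}$ are embeddings by the remark above. Axiom \ref{ax:s1} makes $\op(\vec{f_i})$ an embedding in $\CC_{n+1}$, and since $\C_{n+1}$ preserves embeddings, $\EMF{\C_{n+1}}(\op(\vec{f_i}))$ is an embedding in $\EM{\C_{n+1}}$. The universal property of the equaliser delivers the commuting square
\[
\iota_{\vec{\beta_i}} \circ \lop(\vec{f_i}) \;=\; \EMF{\C_{n+1}}(\op(\vec{f_i})) \circ \iota_{\vec{\alpha_i}},
\]
whose right-hand side is a composition of embeddings. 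Cancelling against $\iota_{\vec{\beta_i}}$ yields that $\lop(\vec{f_i})$ is an embedding.

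For the pathwise-embedding case, consider a path embedding $e\colon \Pc \embed \lop(\vec{\alpha_i})$. Axiom \ref{ax:s2} supplies path embeddings $e_i\colon \Pc_i \embed \Ac_i$ and a morphism $e_0\colon \Pc \to \lop(\vec{\Pc_i})$ with $e = \lop(\vec{e_i}) \circ e_0$. The embedding case, applied to $\vec{e_i}$, shows that $\lop(\vec{e_i})$ is an embedding, and cancellation forces $e_0$ to be an embedding too. Because each $f_i$ is a pathwise-embedding and each $e_i$ a path embedding, every $f_i\circ e_i$ is an embedding, so another application of the embedding case gives that $\lop(\vec{f_i\circ e_i}) = \lop(\vec{f_i})\circ \lop(\vec{e_i})$ is an embedding. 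Composing with $e_0$ expresses $\lop(\vec{f_i}) \circ e$ as $\lop(\vec{f_i\circ e_i}) \circ e_0$, a composition of embeddings, which is the required conclusion.

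I do not anticipate a substantial technical obstacle: the argument reduces pathwise-embedding preservation to embedding preservation via \ref{ax:s2}, and embedding preservation to properness of the lifted factorisation system plus \ref{ax:s1}. Pleasantly, only the existence part of the decomposition in \ref{ax:s2} is used here; the minimality clause will presumably be needed later, for instance to establish preservation of open morphisms.
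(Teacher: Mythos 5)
Your proposal is correct and follows essentially the same route as the paper: the embedding case is proved exactly as in the paper's auxiliary Lemma~\ref{l:I1} (the commuting square with the equaliser inclusions $\iota$, which are embeddings by Lemma~\ref{l:fs-basics}, followed by left-cancellation in the proper lifted factorisation system), and the pathwise-embedding case uses the decomposition from \ref{ax:s2} and the identity $\lop(\vec{f_i})\circ e = \lop(\vec{f_i\circ e_i})\circ e_0$ just as the paper does. Your closing observation that only the existence part of \ref{ax:s2} is needed here, with minimality reserved for openness, also matches how the paper deploys the axiom in Theorem~\ref{t:smoothness}.
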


The main difficulty is in verifying that openness for pathwise-embeddings is also preserved, which is the case.

\begin{restatable}{theorem}{SmoothnessTheorem}
    \label{t:smoothness}
    A Kleisli law which satisfies \ref{ax:s1} and \ref{ax:s2} is smooth.
\end{restatable}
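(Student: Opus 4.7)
My plan is to reduce smoothness to Proposition~\ref{p:pe-preserved} together with openness of the individual~$f_i$, applied to carefully chosen path squares built from axiom~\ref{ax:s2}. By Proposition~\ref{p:pe-preserved} (which relies on~\ref{ax:s1}), $\lop(\vec{f_i})$ is already a pathwise-embedding, so only openness remains. So fix a commuting square
\begin{equation*}
\begin{tikzcd}
(P,\pi) \rar[>->]{a} \dar[>->, swap]{b} & (Q,\rho) \dar[>->]{c} \\
\lop(\vec{\alpha_i}) \rar{\lop(\vec{f_i})} & \lop(\vec{\beta_i})
\end{tikzcd}
\end{equation*}
with $(P,\pi),(Q,\rho)$ paths; the task is to produce a diagonal filler $d\colon (Q,\rho)\to \lop(\vec{\alpha_i})$.

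I would start by applying~\ref{ax:s2} to decompose $b$ and $c$ as $b = \lop(\vec{b_i})\circ b_0$ and $c = \lop(\vec{c_i})\circ c_0$, with path embeddings $b_i\colon P_i \embed \alpha_i$ and $c_i\colon Q_i \embed \beta_i$. The composite path embedding $c\circ a = \lop(\vec{f_i})\circ b$ then acquires two further presentations of the form demanded by~\ref{ax:s2}: namely $\lop(\vec{c_i})\circ(c_0\circ a)$, with components $c_i$, and $\lop(\vec{f_i\circ b_i})\circ b_0$, with components $f_i\circ b_i$ (which are path embeddings from the path $P_i$ because $f_i$ is a pathwise-embedding). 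Let $\lop(\vec{e_i})\circ e_0$ be the minimal decomposition of $c\circ a$ granted by~\ref{ax:s2}, with $e_i\colon M_i \embed \beta_i$. Minimality supplies unique morphisms $h_i\colon M_i \to Q_i$ with $c_i\circ h_i = e_i$ and $k_i\colon M_i \to P_i$ with $f_i\circ b_i\circ k_i = e_i$. Left-cancellation of $\Mc$-morphisms in the proper factorisation system then forces both $h_i$ and $k_i$ to be embeddings themselves.

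Each square
\begin{equation*}
\begin{tikzcd}
M_i \rar[>->]{h_i} \dar[>->, swap]{b_i\circ k_i} & Q_i \dar[>->]{c_i} \\
\alpha_i \rar{f_i} & \beta_i
\end{tikzcd}
\end{equation*}
has paths on top and path embeddings going down, so openness of $f_i$ produces a diagonal $d_i\colon Q_i \to \alpha_i$ satisfying $d_i\circ h_i = b_i\circ k_i$ and $f_i\circ d_i = c_i$. I would then take $d := \lop(\vec{d_i})\circ c_0$.

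The triangle $\lop(\vec{f_i})\circ d = c$ is immediate from functoriality of $\lop$ and $f_i\circ d_i = c_i$. The triangle $d\circ a = b$ is the main obstacle and the payoff of the earlier bookkeeping: both $\lop(\vec{c_i})$ and $\lop(\vec{f_i\circ b_i})$ are embeddings (hence monic) by Proposition~\ref{p:pe-preserved}, so equating $\lop(\vec{e_i})\circ e_0$ first with $\lop(\vec{c_i})\circ(c_0\circ a)$ and then with $\lop(\vec{f_i\circ b_i})\circ b_0$ and cancelling these monos yields $c_0\circ a = \lop(\vec{h_i})\circ e_0$ and $b_0 = \lop(\vec{k_i})\circ e_0$. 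Chaining these identities with $d_i\circ h_i = b_i\circ k_i$ and functoriality of $\lop$ then collapses $d\circ a$ to $\lop(\vec{b_i})\circ b_0 = b$, completing the filler. The conceptually hardest step is the careful tracking of the three decompositions of $c\circ a$; once in place, the monicity of $\lop$-images of embeddings does the rest.
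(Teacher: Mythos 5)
Your proof is correct, and its skeleton matches the paper's: both proofs reduce openness of $\lop(\vec{f_i})$ to componentwise openness of the $f_i$ by taking minimal \ref{ax:s2}-decompositions of the two vertical legs and of the common diagonal $c\circ a = \lop(\vec{f_i})\circ b$, then reassemble the filler as $\lop(\vec{d_i})$ post-composed with the right leg's first factor. The difference is in the middle. The paper routes through an auxiliary lemma (lemma~\ref{l:I4}) to manufacture \emph{forward} comparison maps $f'_i\colon P_i \to M_i$ satisfying $f_i\circ b_i = e_i\circ f'_i$, and then feeds openness of $f_i$ the square with left leg $b_i$ and top $h_i\circ f'_i$; that lemma itself needs an extra round of minimality plus a mono-cancellation to reverse the direction of the canonically available map. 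You instead work directly with the \emph{backward} maps $k_i\colon M_i\to P_i$ and $h_i\colon M_i\to Q_i$ that minimality of the diagonal's decomposition hands you against the two induced decompositions $\lop(\vec{f_i\circ b_i})\circ b_0$ and $\lop(\vec{c_i})\circ(c_0\circ a)$, feed openness the square with left leg $b_i\circ k_i$, and recover both triangles by cancelling the monos $\lop(\vec{c_i})$ and $\lop(\vec{f_i\circ b_i})$. The two key observations you need --- that $f_i\circ b_i$ is again a path embedding (so your second decomposition is admissible for minimality) and that $h_i,k_i$ land in $\Mc$ by left-cancellation --- are both in place, so the argument closes. The net effect is a slightly more economical proof that dispenses with lemma~\ref{l:I4} entirely, at no loss of generality; the paper's version isolates the forward comparison maps as a standalone lemma, which is reusable but not actually reused elsewhere.
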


% In the following we check smoothness in a few examples. Checking \ref{ax:s2} is usually easier done by first translating to the setting of multilinear maps, as explained in section~\ref{s:multilin}.

\subsection{Example: edge creation}
\label{s:FO-example-edge-creation}

We apply our comonadic analysis to the investigation of bounded clique-width structures in section \ref{s:cw-thm}. 
An operation important in the construction of bounded clique-width structures is edge-creation $\etaRi$, for some $r$-ary $R \in \sigma$, where $\sg$ is without constants.
The operation $\etaRi$, as all the operations for forming a $\sg$-structure of clique-width $\leq m$, apply to structures in an expanded signature\footnote{In fact, this is slightly more general. Clique-width is typically defined for graphs rather than relational structures.} $\sg_m = \sg \cup \{P_1,\dots,P_m\}$ where $P_i$ are unary relations, or `colours'. The operation $\etaRi$ takes a list of colours $i = (i_1,\dots,i_r)$ with $i_j \in \setm$ and adds all tuples in $P_{i_1} \times \dots \times P_{i_r}$ to the interpretation of $R$.

It is clear that $\etaRi \colon \R(\sgm) \to \R(\sgm)$ is a functor. Furthermore, we have a natural transformation
$\kappa\colon \Ek \circ \etaRi \to \etaRi \circ \Ek$
with components $\kappa_{\As}$ that send $w \in \Ek(\etaRi(\As))$ to the same word $w$ in $\etaRi(\Ek \As)$ which satisfies
\ref{ax:kl-law-counit}, \ref{ax:kl-law-comultiplication}, \ref{ax:s1} and \ref{ax:s2}.

\begin{restatable}{lemma}{etaRiSmooth}
    \label{l:etaRi-FO-Ek-smooth}
    $\kappa\colon \Ek \circ \etaRi \to \etaRi \circ \Ek$ is a smooth Kleisli law.
\end{restatable}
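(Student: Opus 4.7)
The plan is to verify the Kleisli-law axioms \ref{ax:kl-law-counit}, \ref{ax:kl-law-comultiplication} together with the smoothness axioms \ref{ax:s1}, \ref{ax:s2}, and then invoke Theorem~\ref{t:smoothness}. Because $\kappa_\As$ is the identity on underlying sets, \ref{ax:kl-law-counit}, \ref{ax:kl-law-comultiplication}, and naturality of $\kappa$ reduce to the corresponding identities for $\Ek$ alone, and one need only check that all maps preserve the $\sg_m$-structure, which is immediate since $\etaRi$ merely enlarges $R$ and leaves unary relations untouched. For \ref{ax:s1}, a tuple is in $R^{\etaRi(\As)}$ iff it is in $R^\As$ or lies in $P_{i_1}^\As\times\cdots\times P_{i_r}^\As$, and both disjuncts are preserved and reflected by any strong injective homomorphism $f\colon \As\embed\Bs$, so $\etaRi(f)$ is again strong injective.

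The substance of the proof is \ref{ax:s2}. A direct inspection of the equaliser defining $\lift{\etaRi}(\alpha)$ shows that its underlying set is in bijection with $A$ via $a\mapsto\alpha(a)$; transported along this bijection the coalgebra and unary-relation structure coincide with $(\Ac,\alpha)$, while $R^{\lift{\etaRi}(\alpha)}(a_1,\dots,a_r)$ holds iff either $R^\Ac(a_1,\dots,a_r)$, or each $a_j\in P_{i_j}^\Ac$ \emph{and} the $a_j$ lie on a common chain of $\sqsubseteq_\alpha$. Given a path embedding $e\colon \Pc\embed\lift{\etaRi}(\Ac)$, take $\Pc_1$ to be the image of $\Pc$ realised as a substructure of $\Ac$, equipped with the coalgebra structure inherited from $\Ac$; this is well-defined because $e$ is a coalgebra morphism, so its image is downward-closed in $\sqsubseteq_\Ac$. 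Then $\Pc_1$ is a path, $e_1\colon \Pc_1\embed\Ac$ is a path embedding, and the bijection induced by $e$ at the level of underlying sets yields a morphism $e_0\colon \Pc\to\lift{\etaRi}(\Pc_1)$ satisfying $\lift{\etaRi}(e_1)\circ e_0 = e$. The preservation of $R$ by $e_0$ holds because $\Pc_1$ inherits both the $\sg_m$-interpretations and the forest order from $\Ac$, so both cases defining $R$ in $\lift{\etaRi}(\Ac)$ transport across to $\lift{\etaRi}(\Pc_1)$.

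Minimality is forced: any competing decomposition $\lift{\etaRi}(g_1)\circ g_0 = e$ with $g_1\colon \Qc_1\embed\Ac$ satisfies $e(\Pc)\subseteq g_1(\Qc_1)$ on underlying sets, and the injectivity of $g_1$ yields a unique underlying map $h_1\colon \Pc_1\to\Qc_1$ with $g_1\circ h_1 = e_1$; that $h_1$ is a morphism in $\EM{\Ek}$ follows from $g_1$ being a strong monomorphism. Theorem~\ref{t:smoothness} then concludes that $\kappa$ is a smooth Kleisli law. The main technical step is the explicit description of the equaliser defining $\lift{\etaRi}(\alpha)$; once this is in hand, the construction of the minimal decomposition is essentially forced by taking $\Pc_1$ to be the image substructure inside $\Ac$.
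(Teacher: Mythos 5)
Your proof is correct, and it lands on the same minimal decomposition as the paper's (the induced substructure on the image of the path, included back into $\Ac$), but it gets there by a genuinely different route. The paper does not verify \ref{ax:s2} directly on the lifted functor; instead it invokes lemma~\ref{l:s2-multilin}, which (given \ref{ax:s1}) replaces \ref{ax:s2} by the condition \ref{ax:s2p} phrased in terms of multilinear maps $f\colon \Pc \to [\Ac]$. Because $\kappa$ is the identity on underlying sets, the multilinearity square reads $\alpha(f(r_i)) = [f(r_1),\dots,f(r_i)]$ for each $i$, which says immediately that $f(r_1),\dots,f(r_n)$ is a chain in $\sqsubseteq_\alpha$; the factorisation through the induced path substructure and its minimality then fall out without ever describing $\lift{\etaRi}(\Ac)$ explicitly. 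You instead compute the equaliser defining $\lift{\etaRi}(\Ac)$: your identification of its universe with $A$ via $a \mapsto \alpha(a)$ is right (the equalising condition forces every prefix $[a_1,\dots,a_i]$ to equal $\alpha(a_i)$), and your description of $R^{\lift{\etaRi}(\Ac)}$ is correct, using implicitly that $R^{\Ac}(a_1,\dots,a_r)$ already forces pairwise $\sqsubseteq_\alpha$-comparability in any coalgebra. From there your construction of $e_0$, $e_1$ and the minimality argument are sound. What your approach buys is a concrete picture of the lifted operation on coalgebras (it is just $\etaRi$ restricted to $\sqsubseteq_\alpha$-comparable tuples), at the cost of redoing by hand the equaliser analysis that the paper packages once and for all in proposition~\ref{p:multilin-maps-correspondence} and lemma~\ref{l:s2-multilin}; the paper's route is the one that scales to the other operations ($\plus$, $\forgCD$, the MSO variants), where the equaliser is less transparent.
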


\smallskip
For any signature $\sg$, $\etaRi$ does not change the interpretation of any relation in $\sg \backslash \{R\}$, so we obtain a natural isomorphism $\theta:\etaRi(\trI) \cong \trI(\etaRi)$ with components given by the identity as a set function. Therefore, $\etaRi$ is FO-$\Ek$-smooth. 

\subsection{Example: parallel sums}
\label{s:FO-example-coproducts}

Maybe the most prominent example of a FVM theorem is the one for coproducts. Here we verify smoothness for a generalisation of coproducts. Let $\sg$ be a relational signature without constants and let $\sg_1$, $\sg_2$ and $\sg_3$ be extensions of $\sg$ with constants such that $\const{\sg_1}\cup \const{\sg_2} = \const{\sg_3}$. 
This operation is slightly more subtle than the previous examples. In order for our Kleisli law to be natural, we need to restrict the homomorphisms in our categories of interest. Therefore, we work with the restriction $\Rnc(\sg)$ of $\Rel$ to the morphisms which send non-constant elements to non-constant elements. Define a binary operation
\[ \plus\colon \R_0(\sg_1)\times \R_0(\sg_2) \to \R_0(\sg_3) \]
sending $\As\in \R_0(\sg_1)$ and $\Bs\in \R_0(\sg_2)$ to the quotient $(A+B)/_{\approx}$ where $A + B$ is the $\sg$-structure coproduct and $\approx$ the least equivalence relation identifying pairs $(c^\As,c^\Bs)$ such that $c\in \const{\sg_1}\cap \const{\sg_2}$. For a constant $c\in \const{\sg_1}$ we set $c^{\As\plus\Bs} = \inc(c^\As)$ and, similarly, for  $c\in \const{\sg_2}$ set $c^{\As\plus\Bs} = \inc(c^\Bs)$, where $\inc$ is the quotient map $\As + \Bs \to \As \plus \Bs$. Observe that $\inc(\As)\cap \inc(\Bs) = \{ c^{\As\plus\Bs} \mid c\in \const{\sg_1}\cap\const{\sg_2} \}$.

The operation $\plus$ extends to a functor $\Rnc(\sg_1)\times \Rnc(\sg_2) \to \Rnc(\sg_3)$. Furthermore, it admits a smooth Kleisli law w.r.t.\ a comonad $\Eknc$ on $\Rnc(\sg_i)$ (for $i=1,2,$ or $3$), which is the restriction of $\Ek$ to words that do not contain any constant letter. Then, the Kleisli law is the following mapping, for every $\As\in \R(\sg_1)$ and $\Bs\in \R(\sg_2)$, defined on non-constants by
\begin{align*}
    \kappa\colon \Eknc(\As \plus \Bs) \longrightarrow \Eknc(\As) \plus \Eknc(\Bs),
    &
    &
    w \in \wrdk{(A\plus B)}
    &\mapsto
    \begin{cases}
        \ex_1(w) & \text{if } \counit(w) \in \inc(A) \\
        \ex_2(w) & \text{if } \counit(w) \in \inc(B) \\[0.5em]
    \end{cases}
\end{align*}
where $\ex_1([x_1,\dots,x_n])$ is the word\footnote{We use the Haskell-style list comprehension notation, for a word $w = [x_1,\dots,x_n]$ the word $[ x_i \mid \varphi(i)]$ is the restriction of $w$ to the letters on positions $i$ such that $\varphi(i)$ holds.} $[x_i \mid x_i \in \inc(A)]$ and, similarly, $\ex_2([x_1,\dots,x_n])$ is $[x_i \mid x_i \in \inc(B)]$. One can verify that $\kappa$ satisfies axioms \ref{ax:s1} and \ref{ax:s2}, which gives us:

\begin{restatable}{lemma}{KappaCoproducts}
    \label{l:kappa-coproducts}
    $\kappa$ is a smooth Kleisli law $(\Eknc \circ \plus) \rightarrow (\plus \circ (\Eknc \times \Eknc))$.
\end{restatable}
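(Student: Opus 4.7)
The plan is to verify that $\kappa$ is a natural transformation satisfying the Kleisli-law axioms \ref{ax:kl-law-counit} and \ref{ax:kl-law-comultiplication}, and then invoke Theorem~\ref{t:smoothness} by checking axioms \ref{ax:s1} and \ref{ax:s2} for $\plus$. The restriction to $\Rnc$ is essential throughout: non-constant-preserving morphisms respect the dichotomy between $\inc(A)$- and $\inc(B)$-elements of $\As\plus\Bs$, since the overlap $\inc(A)\cap\inc(B)$ consists entirely of constants (which are not letters of any non-constant word in $\Eknc$). This is precisely what makes the case-analysis defining $\kappa$ natural, and constants on the two sides are handled by the canonical identification, so the interesting content is on non-constants.

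\textbf{Kleisli-law axioms.} Naturality follows from the observation above, applied letter-by-letter. For \ref{ax:kl-law-counit}, when $\counit(w)\in\inc(A)$ the last letter of $w$ is retained by $\ex_1$, so $\counit(\ex_1(w))=\counit(w)$, and dually for $\ex_2$; the counit axiom follows case-by-case. For \ref{ax:kl-law-comultiplication} the key identity is that $\ex_i$ commutes with prefix-extraction: the list of prefixes of $\ex_i(w)$ coincides with the list obtained from the prefixes of $w$ by discarding those whose last letter is outside the chosen component and applying $\ex_i$ to the rest. A short computation then shows that both $(\delta_\As\plus\delta_\Bs)\circ\kappa$ and $\kappa\circ\Eknc(\kappa)\circ\delta$ produce this same list in each component.

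\textbf{Axiom \ref{ax:s1}.} Given strong injective homomorphisms $f\colon\As\embed\As'$ in $\Rnc(\sg_1)$ and $g\colon\Bs\embed\Bs'$ in $\Rnc(\sg_2)$, the induced map $f\plus g$ is injective: non-constants map to non-constants, preventing collisions between the two components, while the $\approx$-identifications on shared constants are respected on both sides. Reflection of relations reduces componentwise to the same property for $f$ and $g$.

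\textbf{Main obstacle: axiom \ref{ax:s2}.} Given a path embedding $e\colon\Pc\embed\lplus(\alpha,\beta)$ with $\Pc=(P,\pi)$, I partition $P$ into $P_1=\{x\in P \mid \counit(e(x))\in\inc(A)\}$ and $P_2=\{x\in P \mid \counit(e(x))\in\inc(B)\}$, each inheriting a linear order from $\sqsubseteq_\pi$. For $x\in P_i$, the coalgebra map $\pi_i(x)$ is obtained by restricting $\pi(x)$ via $\ex_i$ and transporting the resulting constant-free word to $\Eknc(\As)$ or $\Eknc(\Bs)$ along the canonical injection. That each $(P_i,\pi_i)$ is a well-defined coalgebra, and hence a path, uses the equalizer condition satisfied by every $e(x)\in\lplus(\alpha,\beta)$, which directly links $\pi$ to $\alpha$ and $\beta$. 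The inclusions yield path embeddings $e_1\colon\Pc_1\embed(\As,\alpha)$ and $e_2\colon\Pc_2\embed(\Bs,\beta)$, and the interleaving morphism $e_0\colon\Pc\to\lplus(\Pc_1,\Pc_2)$ maps $x\in P$ to the word $\pi(x)$ with each letter relabelled along the canonical surjections $P_i\to\inc(A)$, $P_i\to\inc(B)$; it lands in the inner equalizer because $e(x)$ lands in the outer one. Minimality is then forced: in any alternative factorisation $e=\lplus(g_1,g_2)\circ g_0$ via path embeddings $g_1\colon\Qc_1\embed(\As,\alpha)$ and $g_2\colon\Qc_2\embed(\Bs,\beta)$, the side on which each $x\in P$ is routed is determined by $\counit(e(x))$, yielding unique comparison morphisms $h_i\colon\Pc_i\to\Qc_i$ with $e_i=g_i\circ h_i$. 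The technically subtle steps are the verifications that $\pi_i$ satisfies the coalgebra laws and that $e_0$ factors through the equalizer; both reduce to the same equalizer condition on $e(x)$ and carry the bulk of the calculation.
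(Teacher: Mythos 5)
Your proposal is correct, and the core decomposition for the minimal-factorisation axiom --- partitioning the path according to whether $\counit$ of the image lands in $\inc(A)$ or $\inc(B)$, restricting via $\ex_1,\ex_2$ to obtain paths in $(\As,\alpha)$ and $(\Bs,\beta)$, and reading minimality off the fact that the routing of each element is forced --- is exactly the paper's. The genuine difference is in which form of the axiom you verify. The paper does not work with \ref{ax:s2} directly: it first invokes lemma~\ref{l:s2-multilin} to replace \ref{ax:s2} by the multilinear-map condition \ref{ax:s2p}, so the object being decomposed is an ordinary homomorphism $f\colon P\to \As\plus\Bs$ satisfying the multilinearity square $\kappa([f(r_1),\dots,f(r_i)])=(\alpha\plus\beta)(f(r_i))$, and the equalizer subobject $\lplus(\alpha,\beta)$ never appears in the verification. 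You instead take a path embedding $e\colon\Pc\embed\lplus(\alpha,\beta)$ and must explicitly build the interleaving map $e_0$ into $\lplus(\Pc_1,\Pc_2)$ and check it factors through the inner equalizer. Both routes are legitimate inputs to theorem~\ref{t:smoothness}, and the ``equalizer condition on $e(x)$'' you repeatedly appeal to is literally the multilinearity square; but the two steps you flag as carrying the bulk of the calculation (that $\pi_i$ is a coalgebra, and that $e_0$ lands in the equalizer) are precisely what proposition~\ref{p:multilin-maps-correspondence}, lemma~\ref{l:multi-decomp} and lemma~\ref{l:s2-multilin} establish once and for all, so the paper's route lets the example-specific proof consist only of the combinatorics of $\ex_1,\ex_2$. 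One small omission on your side: the paper's proof also spends its first paragraph checking that each component of $\kappa$ is a well-defined homomorphism, i.e.\ that a related tuple in $R^{\Eknc(\As\plus\Bs)}$ (whose witnessing tuple lies wholly in $\As$ or wholly in $\Bs$, and whose non-constant entries are prefix-comparable) is sent to a related tuple in $\Eknc(\As)\plus\Eknc(\Bs)$; your writeup treats naturality and the comonad axioms but never addresses relation preservation, which you should add for completeness, though it is routine and does not affect the viability of your argument.
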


\smallskip
In fact, $\plus$ is FO-$\Eknc$-smooth as $\trI(A) \plus\I \trI(B) \cong \trI(A \plus B)$, where $\plus\I$ is the $\sgI$ variant of $\plus$. Further, it also follows that $\trI(A) \bisim_{\Eknc} \trI(B)$ iff $A \equiv_{\FO_k} B$, giving us the following.

\begin{restatable}{proposition}{FOlogicEquivCoproducts}
    If $A_i \equiv_{\FO_k} B_i$ in $\Rnc(\sg_i)$, for $i=1,2$, then $A_1 \plus A_2 \,\equiv_{\FO_k}\, B_1 \plus B_2$.
\end{restatable}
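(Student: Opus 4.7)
The plan is to assemble the three ingredients flagged in the paragraph preceding the statement: (i) the smooth Kleisli law $\kappa$ for $\plus\I$ over $\Eknc$ (the $\sgI$-signature variant of Lemma \ref{l:kappa-coproducts}); (ii) the natural isomorphism $\theta\colon \trI(A) \plus\I \trI(B) \cong \trI(A \plus B)$; and (iii) the logical characterisation $A \equiv_{\FO_k} B$ iff $\trI A \bisim_{\Eknc} \trI B$. Chaining these through the functor $\lift{\plus\I}$ on Eilenberg--Moore coalgebras constructed from $\kappa$ will yield the claim.

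Concretely, I would first unfold the hypothesis via (iii): for each $i\in\{1,2\}$ the assumption $A_i \equiv_{\FO_k} B_i$ produces a span of open pathwise-embeddings
\[
    \EMF{\Eknc}(\trI A_i) \xleftarrow{f_i} R_i \xrightarrow{g_i} \EMF{\Eknc}(\trI B_i)
\]
in $\EM{\Eknc}$. Next I apply the lifted operation $\lift{\plus\I}\colon \EM{\Eknc}\times\EM{\Eknc}\to\EM{\Eknc}$ provided by Proposition \ref{p:kl-law}, noting that the required equalisers of reflexive pairs exist by Lemma \ref{l:EM-equalisers} (its wide-subcategory hypothesis is immediate for $\Rnc$: it contains all coproducts, all strong injective homomorphisms are non-constant-preserving, and non-constant-preservation is left-cancellative). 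Proposition \ref{p:smooth-Kleisli-law}, applied to the smoothness asserted for $\plus\I$, guarantees that $\lift{\plus\I}(f_1,f_2)$ and $\lift{\plus\I}(g_1,g_2)$ are open pathwise-embeddings. Using the natural isomorphism $\EMF{\Eknc}\circ \plus\I \cong \lift{\plus\I}\circ(\EMF{\Eknc}\times\EMF{\Eknc})$ from Proposition \ref{p:kl-law}, this produces
\[
    \EMF{\Eknc}(\trI A_1 \plus\I \trI A_2) \,\leftarrow\, \lift{\plus\I}(R_1,R_2) \,\rightarrow\, \EMF{\Eknc}(\trI B_1 \plus\I \trI B_2).
\]

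Finally, I post-compose both legs with the coalgebra morphism $\EMF{\Eknc}(\theta)$; since $\theta$ is an isomorphism, so is its image under $\EMF{\Eknc}$, and isomorphisms are trivially open pathwise-embeddings. The resulting span
\[
    \EMF{\Eknc}(\trI(A_1 \plus A_2)) \,\leftarrow\, R \,\rightarrow\, \EMF{\Eknc}(\trI(B_1 \plus B_2))
\]
is then a span of open pathwise-embeddings, witnessing $\trI(A_1 \plus A_2) \bisim_{\Eknc} \trI(B_1 \plus B_2)$. A second invocation of (iii) converts this back to $A_1 \plus A_2 \equiv_{\FO_k} B_1 \plus B_2$.

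The principal obstacle is not the chaining itself, which is essentially a direct specialisation of the abstract machinery, but rather justifying that this machinery transports cleanly from the ambient $\Rel$ setting to the restricted category $\Rnc$ on the $I$-extended signature. In particular, one must confirm that the (surjective, strong injective) factorisation system, the paths subcategory, and the equaliser construction in Section \ref{s:lifting-ops-to-coalgs} all behave as expected inside $\Rnc(\sgI_i)$, and that the Kleisli law of Lemma \ref{l:kappa-coproducts} is genuinely smooth in this enriched signature (the argument is identical since $I$ is just an additional binary relation, unaffected by the case-split on $\counit(w)$ in the definition of $\kappa$).
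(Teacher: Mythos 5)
Your proposal is correct and follows essentially the same route as the paper's proof: both reduce the statement to FO-$\Eknc$-smoothness of $\plus$ via the smooth Kleisli law of Lemma~\ref{l:kappa-coproducts} (in its $\sgI$ variant), the isomorphism $\trI(A)\plus\I\trI(B)\cong\trI(A\plus B)$, and the characterisation $A\equiv_{\FO_k}B$ iff $\trI A\bisim_{\Eknc}\trI B$. The only cosmetic difference is that you unfold the span construction that Proposition~\ref{p:smooth-Kleisli-law} encapsulates (and attribute the preservation of open pathwise-embeddings to that proposition rather than to the definition of smoothness), while the paper invokes it as a black box.
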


\subsection{Other examples}
\begin{example}[Products for $\FO$]
\label{e:fo-products}
For any comonad, there is a canonical Kleisli law for products, and when the Eilenberg-Moore category has equalizers, the lifted functor yields products.
% As $\EMF{\comonad{C}}$ is a right adjoint for all comonads $\comonad{C}$, $\EMF{\comonad{C}}$ preserves limits and in particular binary products, i.e\ $\EMF{\comonad{C}}(\As \times \Bs) \cong \EMF{\comonad{C}}(\As) \times \EMF{\comonad{C}}(\Bs)$. 
By observing that pairs are equal if and only if each component is equal, we obtain an isomorphism $\trI(\As \times \Bs) \cong \trI(\As) \times \trI(\Bs)$.
We can confirm that products of coalgebras preserve open pathwise embeddings,
and therefore products are $\FO$-$\Ek$-smooth.

% Therefore, $\EMF{\Ek}(\trI(\As \times \Bs)) \cong \EMF{\Ek}(\trI(\As)) \times \EMF{\Ek}(\trI(\Bs))$ demonstrating that products are $\FO$-$\Ek$-smooth. In fact, these arguments apply to all comonads, so products are $\comonad{M}_k$-smooth and $\comonad{P}_k$-smooth relative to $\trI$ yielding FVM product theorems for modal logic up to depth $\leq k$ and $k$-variable logic.
\end{example}
\begin{example}[Comonad Morphisms]
Example \ref{e:comonad-morphisms} exhibited a comonad morphism $\kappa:\comonad{M} \rightarrow \comonad{P}_2$ yielding a semantic translation of positive $\square$-free modal logic into two-variable positive existential logic and graded modal logic into two-varable logic with counting quantifiers. The intermediate translation from full modal logic into two-variable logic is witnessed by $\kappa$ being $(\comonad{M},\comonad{P}_2)$-smooth. 
  % This technique is entirely general; semantic translations of a logic captured by a game comonad $\comonad{C}$ to a logic captured by a game comonad $\comonad{D}$ is witnessed by a $(\comonad{C},\comonad{D})$-smooth comonad morphism.
\end{example}

% Future work:
% - $\Pk$ and products?
% - something that works for $\Mk$? something concurrency-theoretic?
% - example of something which has a Kleisli law which is not smooth?

% Remark: it is not interesting to check if concatenation of words is $\Pk$-smooth because $\equiv_{\FO^3}$-equivalence already implies isomorphism for words (p.c. by Anuj) and checking $\Ek$-smoothness would not be new (see Corollary 3.9 of [Blumensath--Wolf, 2020])

\section{Comonads for MSO}

\subsection{\texorpdfstring{$\Ek$}{Ek} for many-sorted structures}
\label{s:Ek-multi-sort}
In order to extend our use of $\Ek$ to $\MSO$ we interpret the syntax of $\MSO$ as a two-sorted first-order logic. 
We then use a translation functor to fix the semantics of $\MSO$ to coincide with the standard semantics.
This requires us to `upgrade' the $\Ek$ comonad to handle two-sorted structures. For simplicity and generality, we extend our definition of $\Ek$ to accommodate all many-sorted structures for fixed set $S$ of sorts. 

A signature $\sg$ on sorts $S$ contains a set of constant symbols $\consts{\sg}{s}$ for each $s \in S$ and a set of relation symbols $R$ with an associated arity given by a tuple of sorts ${\langle s_1,\dots,s_r \rangle \in S^r}$.
An object $\As$ in the category of $\sg$-structures $\RelS$ has a universe given by an $S$-indexed family of sets $\{A_{s}\}_{s \in S}$ along with interpretations $c^{\As} \in A_s$ for every constant symbol $c \in \consts{\sg}{s}$ and $R^{\As} \sue A_{s_1} \times \dots \times A_{s_r}$ for every relation symbol with arity $\langle s_1,\dots,s_r \rangle$ in $\sg$.
A morphism $f:\As \rightarrow \Bs$ in $\RelS$ is a $S$-indexed family of functions $\{f_s:A_s \rightarrow B_s\}$ which preserves all constants and all relations.
% all constants $c \in \consts{\sg}{s}$ $f_s(c^{\As}) = c^{\Bs}$ and all relations
% $$(a_1,\dots,a_r) \in R^{\As} \Rightarrow (f_{s_1}(a_1),\dots,f_{s_r}(a_r)) \in R^{\Bs}$$
% where $R$ has arity $\langle s_1,\dots,s_r \rangle$.

These objects provide the semantics for $S$-sorted first-order logic.
Both syntax and semantics of $S$-sorted first order logic $\FO(S)$ are similar to the single-sorted case. The key difference is that terms (constants and variables) always have an associated sort. 
Consequently, the quantifiers $\exists x{:}s$, $\forall x{:}s$ range over the set $A_{s}$ when interpreted in the model $\As$. 
For subset $J \subseteq S$, $\FO_k(S,J)$ denotes $S$-sorted first order logic up to quantifier rank $k$ with equality for terms with sorts $s \in J$. 

We can now describe $\Ek$ as a comonad over $\RelS$. 
The universe of $\Ek \As$ for object $\As$ in $\RelS$ has components $(\Ek \As)_{s}$ equal to the disjoint union of $\consts{\sg}{s}$ and the set of words of the form $[a_1{:}s_1,\dots,a_n{:}s_n]$ where $n \leq k$ and, for all $i \leq n$, $s_i \in S$, $a_i \in A_{s_i}$ and $s_n = s$. 
The counit $\varepsilon_{\As}$ for $A \in \RelS$ and coextension $f^{*}:\Ek\As \rightarrow \Ek\Bs$ for $f:\Ek\As \rightarrow \Bs$ are defined as in the single-sorted case while respecting the sorts.  
We can perform a similar upgrade to the single-sorted $\Eknc$ comonad to obtain a many-sorted $\Eknc$ comonad. 
\begin{restatable}{proposition}{ManySortedEkIsComonad}
\label{p:many-sorted-ek-is-comonad}
$\Ek$ and $\Eknc$ are comonads over $\RelS$ and $\Rsnc(\sg)$, respectively.
\end{restatable}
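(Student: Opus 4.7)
The plan is to verify that the definition of $\Ek$ over $\RelS$ satisfies exactly the same axioms that were checked in the single-sorted case, observing that the many-sorted structure is preserved throughout. The statement should follow by a routine adaptation of the proof referenced in the theorem \texttt{StandardComonadsWithConstants}, with sort-tracking added. Since the single-sorted comonad laws have already been established (directly adapting~\cite{AbramskyDW17,abramsky2021relating}), the novelty here is purely bureaucratic: checking that sorts are respected at each step.

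First, I would verify that $\Ek \As$ is a well-formed $\sg$-structure on sorts $S$. For each sort $s \in S$, the component $(\Ek \As)_s$ consists of $\consts{\sg}{s}$ together with words $[a_1{:}s_1,\ldots,a_n{:}s_n]$ whose last letter has sort $s$; constants of sort $s$ are interpreted as themselves, which lives in $(\Ek \As)_s$. For an $r$-ary relation $R$ with arity $\langle s_1,\ldots,s_r \rangle$, the interpretation $R^{\Ek \As}$ is defined exactly as in the single-sorted case (prefix-comparability of non-constant letters and $R^\As(\counit(w_1),\ldots,\counit(w_r))$); the sort condition is automatically satisfied because each $w_j \in (\Ek \As)_{s_j}$ ensures $\counit(w_j) \in A_{s_j}$.

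Next, I would check the counit $\counit_\As$ is a morphism in $\RelS$: it preserves sorts by construction (the last letter of a word in $(\Ek \As)_s$ lies in $A_s$), it preserves constants, and it preserves relations because $R^{\Ek \As}(w_1,\ldots,w_r)$ is defined to require $R^\As(\counit(w_1),\ldots,\counit(w_r))$. Similarly, for $f \colon \Ek \As \to \Bs$ in $\RelS$, the coextension $f^*$ is sort-preserving because the prefix $[a_1{:}s_1,\ldots,a_j{:}s_j]$ lies in $(\Ek \As)_{s_j}$, hence $f([a_1{:}s_1,\ldots,a_j{:}s_j]) \in B_{s_j}$, so the output word $f^*(w)$ has the same sort sequence as $w$; preservation of constants and relations is then identical to the single-sorted argument.

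The three comonad equations in~\eqref{eq:comonad-axioms} are verified exactly as in the single-sorted case, since they are equational identities about operations on words and constants that do not interact with sorts other than through compatibility (already established above). Finally, for $\Eknc$ over $\Rsnc(\sg)$, I would observe that the definition restricts to words containing no constant letters, so $\Eknc \As$ lives in $\Rsnc(\sg)$; and because morphisms of $\Rsnc(\sg)$ send non-constants to non-constants, neither the counit nor any coextension $f^*$ introduces constant letters, so all structure maps remain in $\Rsnc(\sg)$. The remaining comonad axioms are then inherited from $\Ek$. The only subtlety worth flagging --- and the mildest obstacle --- is making sure that, in the $\Eknc$ case, the domain-codomain discipline on morphisms is consistent enough that $f^*$ is well-defined in $\Rsnc(\sg)$; this is immediate from the fact that non-constant prefixes of non-constant words are again non-constant.
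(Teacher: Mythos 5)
Your proposal is correct and follows essentially the same route as the paper: define the counit and coextension component-wise, check they are sort-respecting $\sg$-morphisms, note the comonad equations are inherited routinely from the single-sorted case, and observe that for $\Eknc$ the counit and coextension of an $\Rsnc(\sg)$-morphism never introduce constant letters. The paper's proof is in fact terser than yours, leaving the morphism checks and equation verifications as ``straightforward'' and ``routine''.
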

Just as with the single-sorted case, to capture equality for the sorts $J \subseteq S$, we consider an extended signature $\sgJ$ which has additional relations $I_{s}$ of arity $\langle s,s \rangle$ for every $s \in J$. Let $\trJ\colon\R^S_0(\sg) \rightarrow\RelJ$ be the functor that sends $\As$ to the same structure where the additional relations are interpreted as $(a,b) \in I^{\As}_{s}$ iff $a = b \in A_s$ for $s \in J$. 
Let $\trJnc\colon\R^S_0(\sg) \rightarrow \Rsnc(\sgJ)$ be the codomain restriction of $\trJ$.  
We then show that a version of proposition~\ref{p:ef-logic-2} still holds for these many-sorted $\Ek$ and $\Eknc$ comonads. 

\begin{restatable}{proposition}{ManySortedEkbisim}
\label{prop:many-sorted-ek-bisim}
Given $\As, \Bs \in \Rs_0(\sg)$,
\begin{itemize}
    \item $\As \equiv_{\FO_k(S,J)} \Bs$  \ee{iff} $\trJ \As \bisim_{\Ek} \trJ \Bs$. 
    \item $\As \equiv_{\FO_k(S,J)} \Bs$  \ee{iff} $\trJ_{\text{nc}} \As \bisim_{\Eknc} \trJ_{\text{nc}} \Bs$, if $s\in J$ for every sort $s\in S$ such that $\consts{\sg}{s}\not= \emptyset$.
\end{itemize}
\end{restatable}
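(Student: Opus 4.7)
The plan is to adapt the proof of Proposition~\ref{p:ef-logic-2} (due to Abramsky--Shah) to the many-sorted setting. The single-sorted proof recasts spans of open pathwise-embeddings $\EMF{\Ek}(\trI\As) \leftarrow R \rightarrow \EMF{\Ek}(\trI\Bs)$ as winning Duplicator strategies in the $k$-round \ef{} game: pathwise-embeddings encode structure preservation (with equality detected via the distinguished $I$ relation), and openness captures the back-and-forth extension condition.

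For the first statement, the $S$-sorted EF game for $\FO_k(S,J)$ differs from the single-sorted game only in that each Spoiler move specifies a sort $s \in S$ and picks an element of that sort, with the equality winning condition applying only to sorts $s \in J$. Paths in the $S$-sorted $\EM{\Ek}$ are coalgebras $(P,\pi)$ whose order $\sqsubseteq_\pi$ is a finite linear order, corresponding to sort-tagged plays $[a_1{:}s_1, \dots, a_n{:}s_n]$. Since relations in $\Ek\As$ only hold on tuples of matching sorts by definition of the $S$-sorted signature, pathwise-embeddings automatically respect sorts, while the $I_s$ relations added by $\trJ$ enforce equality precisely for $s \in J$. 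The Abramsky--Shah argument then carries over with essentially cosmetic modification.

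For the second statement, it suffices, by the first part, to establish under the stated hypothesis a bijection between spans of open pathwise-embeddings for $\Ek$ over $(\trJ\As,\trJ\Bs)$ and spans for $\Eknc$ over $(\trJnc\As,\trJnc\Bs)$. The forward direction restricts each morphism in the span to the non-constant positions of its domain; this is well-defined because $\sg$-homomorphisms preserve constants rigidly. The backward direction canonically extends an $\Eknc$-span to an $\Ek$-span by sending each constant $c^{\trJ\As}$ to $c^{\trJ\Bs}$; the hypothesis that the constant's sort lies in $J$ ensures this match is uniquely determined by the $I_s$ relation in $\trJ$.

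The main obstacle is verifying that openness is preserved under the backward translation of the second statement: whenever Spoiler extends a play by a constant in the $\Ek$ game, Duplicator's forced response must be the matching constant, and this response must be coherent with the existing $\Eknc$-strategy along the path structure. The assumption that every sort with nonempty constants lies in $J$ is exactly what forces this extension to be canonical; without it, two distinct constants of the same sort could be indistinguishable at the level of $\Eknc$ while genuinely behaving differently under constant-preserving extensions, and no canonical choice of Duplicator response would exist.
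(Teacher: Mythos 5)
Your overall route coincides with the paper's: for the first item the paper introduces the $S$-sorted game $\EF^{J}_k(\As,\Bs)$, proves it captures $\equiv_{\FO_k(S,J)}$ by adapting the standard single-sorted argument, and proves it is equivalent to $\trJ\As \bisim_{\Ek} \trJ\Bs$ by adapting the Abramsky--Shah span-of-open-pathwise-embeddings construction; your sketch of that part is accurate, if terse. For the second item the paper likewise passes between $\Ek$-spans and $\Eknc$-spans by restriction and extension. The problem is that you have located the role of the hypothesis in the wrong direction, and the direction you wave through is the one that actually fails without it.

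The backward translation (extending an $\Eknc$-span to a full strategy) needs essentially nothing: when Spoiler plays $c^{\As}$, Duplicator answers $c^{\Bs}$, and since the pair $(c^{\As},c^{\Bs})$ already belongs to every winning position by definition of the game, the partial-isomorphism condition is unchanged and one simply resumes the $\Eknc$-strategy. The genuine obstruction is in the forward translation. Your justification --- ``well-defined because $\sg$-homomorphisms preserve constants rigidly'' --- addresses constant elements being sent to constant elements, but the real issue is \emph{non-constants being answered by constant letters}: an $\Ek$-strategy may respond to a Spoiler move on a non-constant $a$ with the element $c^{\Bs}$, producing a word containing a constant letter, and such a word does not exist in $\Eknc(\Bs)$, so ``restricting to non-constant positions'' does not yield a span over $\Eknc$ at all. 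This is not a hypothetical worry: take a single sort $s\notin J$, one constant $c$, no relation symbols, $\As=\{a\}$ with $c^{\As}=a$, and $\Bs=\{b_1,b_2\}$ with $c^{\Bs}=b_1$. Then $\As\equiv_{\FO_k(S,\emptyset)}\Bs$ (there are no atomic formulas) and an $\Ek$-span exists, but no $\Eknc$-span can exist, since $\Eknc(\As)$ has no non-constant elements while openness of the right leg forces a lift of the path $[b_2]$ into the span object, whose image under the left leg would have to be non-constant in $\Rsnc(\sg)$. The hypothesis that $\consts{\sg}{s}\neq\emptyset$ implies $s\in J$ is exactly what repairs the forward direction: with $I_s$ available, answering a non-constant $a$ by $c^{\Bs}$ contradicts the partial isomorphism condition against the pair $(c^{\As},c^{\Bs})$ already in play, so Duplicator's answers to non-constants are forced to be non-constants and the restriction to constant-free plays remains a winning strategy. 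You should redo the second item with this division of labour.
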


%A key aspect of our framework is to understand the subcategory of paths $\Paths \sue \EM{\Ek}$ for the many-sorted $\Ek$ comonad. 
%If we instantiate the definition given in remark \ref{r:coalg-order}, a coalgebra $(\As,\alpha)$, $\sqsubseteq_{\alpha}$ preorders the disjoint union of $A_s$ for all sorts $s \in S$ . Consequently, the objects $(\Ps,\pi) \in \Paths$ are those coalgebras where $\sqsubseteq_{\pi}$ is a finite linear order on the disjoint union of $P_s$ for all $s \in S$. 

\subsection{The two-sorted translation for \texorpdfstring{$\MSO$}{MSO1}}
\label{s:two-sorted-transl}
Second-order logic can be treated syntactically as a many-sorted first-order logic (see e.g. \cite{vaananen2021Stanford,manzanoBook}). 
In particular, we consider $\MSO_k$ on signature $\sg$ as $\FO_k(S,J)$ with $S = \{\fst,\snd\}$ and $J = \{\fst\}$ on an expanded signature $\sgE$ where $\fst$ is sort for ordinary {\underline f}irst-order terms, and $\snd$ is a sort for {\underline m}onadic second-order set variables. The expanded $S$-sorted signature $\sgE$ has constants $\consts{\sgE}{\fst} = \const{\sg}$ and $\consts{\sgE}{\snd} = \varnothing$; and relations $R$ of arity $\langle \fst,\dots,\fst \rangle$ ($\fst$ repeated $r$ times) for every $R \in \sg$ of arity $r$, and also $e$ of arity $\langle \fst,\snd \rangle$ and $I$ of arity $\langle \fst,\fst \rangle$.

The additional expressive power of $\MSO$ over $\FO(S,J)$ is obtained by restricting the semantics of $\MSO$ to only include the class of  `standard' models of $\MSO$ rather than all $\sgE$-structures.
We pick out these standard models using a functor $\trE\colon\R_0(\sg) \rightarrow \RelE$ where, for a $\sg$-structure $\As$, the two sorts are instantiated as $\trE(\As)_\fst = A$ and $\trE(\As)_\snd = \Pw(A)$, the constants $c^{\trE\As} = c^{\As}$ and relations
\[ R^{\trE(A)} = R^A, \text{ for } R\in \sg, \qquad e^{\trE(\As)}(a,M) \ete{iff} a\in M \qquad I^{\trE(\As)}(a,a') \ete{iff} a = a'\] 
Let $\trEnc\colon\R^S_0(\sgE) \rightarrow \Rsnc(\sgE)$ be the codomain restriction of $\trE$.
We can adapt the syntactic translation stated in \cite{vaananen2021Stanford} to translate $\MSO_k$ sentences in signature $\sg$ (i.e.\ $\MSO$ sentences of quantifier rank ${\leq}\,k$) to $\FO_k(S,J)$ sentences in $\sgE$.
Linking this syntactic translation to the semantic translation~$\trE$ and utilizing proposition \ref{prop:many-sorted-ek-bisim}, we are able to capture $\equiv_{\MSO_k}$ using the many-sorted $\Ek$ and~$\Eknc$:

\begin{restatable}{theorem}{MSOManySortedEk}
\label{thm:mso-manysortedEk}
    $\As \equiv_{\MSO_k} \Bs$ \ee{iff} $\trE \As \bisim_{\Ek} \trE \Bs$ \ee{iff} $\trE \As \bisim_{\Eknc} \trE \Bs$
\end{restatable}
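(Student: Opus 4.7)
The plan is to reduce the theorem to proposition~\ref{prop:many-sorted-ek-bisim} by carefully connecting the syntactic translation of $\MSO$ into two-sorted first-order logic with the semantic translation $\trE$ into many-sorted structures. Once $\MSO_k$-equivalence on $\sg$-structures is matched with $\FO_k(S,J)$-equivalence on $\trE$-translated $\sgE$-structures, both back-and-forth characterisations drop out directly.

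The first step is to establish the semantic counterpart of the cited syntactic translation: for every $\MSO_k$-sentence $\varphi$ over $\sg$ and every $\sg$-structure $\As$,
\[
    \As \models \varphi \qtq{iff} \trE \As \models \varphi^{*},
\]
where $\varphi^{*}$ is the associated $\FO_k(S,J)$-sentence over $\sgE$, with $S=\{\fst,\snd\}$ and $J=\{\fst\}$. This is by a routine induction on $\varphi$, matching the constructs clause by clause: first-order quantifiers become sort-$\fst$ quantifiers, ranging over $\trE(\As)_\fst = A$; monadic second-order quantifiers become sort-$\snd$ quantifiers, ranging over $\trE(\As)_\snd = \Pw(A)$; set membership $x \in X$ becomes $e(x,X)$, for which $e^{\trE \As}(a,M)$ iff $a\in M$; equality $x=y$ becomes $I(x,y)$, for which $I^{\trE \As}(a,a')$ iff $a=a'$; and $\sg$-relations are preserved. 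Running the analogous back-translation from $\FO_k(S,J)$-sentences over $\sgE$ into $\MSO_k$-sentences over $\sg$ yields
\[
   \As \equiv_{\MSO_k} \Bs \qtq{iff} \trE \As \equiv_{\FO_k(S,J)} \trE \Bs.
\]

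The second step is to apply proposition~\ref{prop:many-sorted-ek-bisim} to $\trE \As$ and $\trE \Bs$ viewed as $\sgE$-structures. For the first equivalence this is immediate. Bookkeeping-wise, since $\trE$ already places $I$ on sort $\fst$ into $\sgE$, one factors $\trE = \trJ \circ \trE'$ through the equality-free translation $\trE'$ to match the signature convention of the proposition. For the second equivalence, I verify the side condition of the $\Eknc$ clause: the only sort with non-empty constants in $\sgE$ is $\fst$, since $\consts{\sgE}{\snd}=\varnothing$, and $\fst \in J$, so the clause applies and gives $\trE \As \equiv_{\FO_k(S,J)} \trE \Bs$ iff $\trE \As \bisim_{\Eknc} \trE \Bs$. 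The main obstacle is the inductive step for second-order quantifiers in the first part: one must confirm that second-order quantifiers in $\MSO_k$ on $\As$ and sort-$\snd$ quantifiers in $\FO_k(S,J)$ on $\trE \As$ range over the same family of subsets. This hinges precisely on $\trE$ instantiating $\trE(\As)_\snd$ as the \emph{full} power set $\Pw(A)$, rather than a proper sub-family; once this is noted, everything else is a direct transport through proposition~\ref{prop:many-sorted-ek-bisim}.
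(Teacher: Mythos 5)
Your proposal is correct and follows essentially the same route as the paper: the paper likewise factors $\trE$ as $\trJ$ composed with an equality-free translation, establishes the forward and backward syntactic translations between $\MSO_k$ over $\sg$ and $\FO_k(S,J)$ over $\sgE$ by structural recursion on formulas, and then invokes proposition~\ref{prop:many-sorted-ek-bisim}, checking the same side condition ($\consts{\sgE}{\snd}=\varnothing$ and $\fst\in J$) for the $\Eknc$ clause. Your emphasis on the second-order quantifiers ranging over the full power set $\Pw(A)$ is exactly the point that makes the semantic restriction to standard models work, and matches the paper's argument.
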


The last equivalence follows from the second item of proposition~\ref{prop:many-sorted-ek-bisim} as the condition is satisfied by $J = \{\fst\}$ and $\consts{\sgE}{\fst} = \const{\sg}$. Next we also adapt definition~\ref{d:FO-smoothness} to this setting.

\begin{definition}
\label{d:MSO-smoothness}
We say that $\op$ is \df{$\MSO$-$(\C_1,\dots,\C_{n+1})$-smooth} if it is $(\C_1,\dots,\C_{n+1})$-smooth relative to $\tr_1,\dots,\tr_{n+1}$ where each $\tr_i\colon  \R_0(\sg_i) \to \CC_i$ is a codomain restriction of the MSO translation functor $\trE\colon \R_0(\sg_i) \to \Rs(\sgE_i)$.
% We say that $\op$ is \df{$\MSO$-$(\C_1,\dots,\C_{n+1})$-smooth} if it is $(\C_1,\dots,\C_{n+1})$-smooth relative to $\tr_1,\dots,\tr_{n+1}$ where the codomain $\CC_i$ of $\tr_i$ is a subcategory of some $\Rs(\sgE_i)$ and $\tr_i\colon  \R_0(\sg_i) \to \CC_i$ is a factor of the MSO translation functor $\trE\colon \R_0(\sg_i) \to \Rs(\sgE_i)$.
\end{definition}

As before, proposition~\ref{p:smooth-Kleisli-law} and theorem~\ref{thm:mso-manysortedEk} imply that the MSO-$\Ek$-smooth (or MSO-$\Eknc$-smooth) functors preserve $\equiv_{\MSO_k}$.

\subsection{Example: edge creation}
\label{s:MSO-example-edge-creation}
In this section, we demonstrate the $\etaRi$ operation is $\MSO$-$\Ek$-smooth. Since this operation is used in generating structures of bounded clique-width, this is a step towards recovering Courcelle's theorem for bounded clique-width structures using our comonadic framework. 

By definition the functor $\etaRi\colon\Relk \rightarrow \Relk$ only changes the interpretation of $R \in \sgm$. 
In the two-sorted setting of $\sgE$, we define a variant of this functor $\etaRiE$ of type $\RelEm \rightarrow \RelEm$ which only changes interpretation of the corresponding $R \in \sgEm$ with arity $\langle \fst,\dots,\fst \rangle$. This ensures that there is an isomorphism $\theta:\etaRiE(\trE(\As)) \cong \trE(\etaRi(\As))$, for any $\sgm$-structure $\As$. In addition to the existence of $\theta$, in order for $\etaRi$ to be $\MSO$-$\Ek$-smooth, we need a smooth Kleisli law
\[ \kappa\colon\Ek \circ \etaRiE \rightarrow \etaRiE \circ \Ek. \]
The components of $\kappa_{\As}$ are given by $S$-sorted functions where $\kappa_{\As,t}$ sends a word $w \in (\Ek(\etaRiE(\As)))_{t}$ to the same word $w \in (\etaRiE(\Ek(\As)))_{t}$ for $t \in S = \{\fst,\snd\}$.
The proof that this $\kappa$ satisfies \ref{ax:s1} and \ref{ax:s2} is a direct adaptation of the single-sorted proof (cf.\ section~\ref{s:FO-example-edge-creation}).

\begin{restatable}{lemma}{EdgeCreationMSOSmoothKleisli}
    \label{l:etaRi-MSO-Ek-smooth}
    For every $r$-ary relation $R \in \sg$ and $r$-tuple of labels $\vec{i}$, the operation $\etaRi$ is $\MSO$-$\Ek$-smooth.
% $\kappa\colon\Ek \circ \etaRiE \rightarrow \etaRiE \circ \Ek$ is a smooth Kleisli law. 
\end{restatable}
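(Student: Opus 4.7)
The proof plan is to verify the three ingredients required by definition~\ref{d:MSO-smoothness}: the operation functor $\etaRiE$ (already defined in the paragraph preceding the lemma), a natural transformation $\theta$ whose lifting along $\EMF{\Ek}$ consists of open pathwise embeddings, and a smooth Kleisli law $\kappa$. The preceding paragraph already exhibits $\theta$ as a natural isomorphism $\etaRiE \circ \trE \cong \trE \circ \etaRi$, so each component of $\EMF{\Ek}(\theta) = \Ek(\theta)$ is an isomorphism in $\EM{\Ek}$, and isomorphisms are trivially open pathwise embeddings: a diagonal filler is obtained by composing the bottom map with the inverse. Therefore the substance of the proof is to show that $\kappa$, defined pointwise as the identity on words in each sort, is a smooth Kleisli law.

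First I would verify that each $\kappa_\As$ is a well-defined homomorphism. Since relations other than $R$ are untouched by $\etaRiE$, only $R$-preservation requires attention. A tuple of words lies in $R^{\Ek(\etaRiE(\As))}$ iff its components are prefix-comparable and their counits lie in $R^\As \cup \prod_j P_{i_j}^\As$, while a tuple lies in $R^{\etaRiE(\Ek(\As))}$ iff either it is already in $R^{\Ek(\As)}$ (prefix-comparable with counits in $R^\As$) or each component's counit lies in $P_{i_j}^\As$ (since $P_{i_j}$ is unary). The former condition clearly implies the latter, so $\kappa_\As$ is a homomorphism. Naturality in $\As$ and the Kleisli-law axioms \ref{ax:kl-law-counit} and \ref{ax:kl-law-comultiplication} then reduce to identities on words, since the $\Ek$ counit and comultiplication only manipulate the word structure.

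Next I would invoke theorem~\ref{t:smoothness} by checking axioms \ref{ax:s1} and \ref{ax:s2}. For \ref{ax:s1}, given an embedding $f\colon \As \embed \Bs$, a tuple $(f(a_1),\dots,f(a_r)) \in R^{\etaRiE(\Bs)}$ is either already in $R^\Bs$ (reflected to $R^\As$ by strongness of $f$) or componentwise in $P_{i_j}^\Bs$ (reflected to $P_{i_j}^\As$); in both cases the tuple lies in $R^{\etaRiE(\As)}$. The main obstacle is \ref{ax:s2}: since $\etaRiE$ is unary and leaves universes fixed, the equaliser defining $\lift{\etaRiE}(\Ac,\alpha)$ in \eqref{eq:op-lift} collapses because the two parallel arrows agree on precisely the $\alpha$-consistent words, so $\lift{\etaRiE}(\Ac,\alpha)$ is isomorphic to $(\etaRiE(\Ac), \alpha)$. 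Given a path embedding $e\colon \Pc \embed \lift{\etaRiE}(\Ac)$, I would take $\Pc_1 = e(P)$ with the sub-coalgebra structure inherited from $\Ac$, so that $e_1\colon \Pc_1 \embed \Ac$ is a path embedding whose coalgebra order is linear, transported from $\Pc$ along the bijection $e$. The corestriction $e_0\colon \Pc \to \lift{\etaRiE}(\Pc_1)$ is itself an embedding because the $R$-edges of $\etaRiE(\Pc_1)$ coincide with those of $\etaRiE(\Ac)$ restricted to $\Pc_1$, by strongness of $e$ in $\etaRiE(\Ac)$. Minimality follows from the universal property of images: any alternative decomposition with $g_1\colon \Qc_1 \embed \Ac$ has $\Pc_1 \subseteq g_1(\Qc_1)$, yielding the unique filler $h_1\colon \Pc_1 \to \Qc_1$ with $e_1 = g_1 \circ h_1$. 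The whole construction is a direct adaptation of the single-sorted argument in section~\ref{s:FO-example-edge-creation}, with sort-labels carried along unchanged.
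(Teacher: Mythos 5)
Your proposal is essentially correct and matches the paper's overall strategy: the identity-on-words Kleisli law, \ref{ax:kl-law-counit} and \ref{ax:kl-law-comultiplication} for free, \ref{ax:s1} via strongness of embeddings, and $\theta$ as the sorted identity isomorphism (hence trivially yielding open pathwise embeddings). The one genuine divergence is in the decomposition axiom: the paper does not verify \ref{ax:s2} directly but instead passes to the equivalent multilinear-map formulation \ref{ax:s2p} of lemma~\ref{l:s2-multilin}, where the multilinearity equation $\alpha(f(r_i)) = [f(r_1),\dots,f(r_i)]$ immediately exhibits the image of the path as a chain in $\sqsubseteq_\alpha$, giving the minimal decomposition with almost no computation. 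Your route through the explicit description of the equaliser \eqref{eq:op-lift} reaches the same place but requires you to identify $\lop(\Ac)$ concretely, which is where you slip.

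The slip: it is not true that $\lop(\Ac,\alpha) \cong (\etaRiE(\Ac),\alpha)$. The two parallel arrows do agree exactly on the $\alpha$-consistent words, so the underlying set of the equaliser is in bijection with the non-constant part of $A$ via $a \mapsto \alpha(a)$; but the relational structure it inherits from $\Ek(\etaRiE(\Ac))$ only contains those $R$-tuples whose components are pairwise prefix-comparable. Hence $\lop(\Ac,\alpha)$ is the structure on $A$ whose $R$-relation is $R^{\As}$ together with only those tuples of $P_{i_1}^{\As}\times\dots\times P_{i_r}^{\As}$ that are pairwise $\sqsubseteq_\alpha$-comparable. Indeed $(\etaRiE(\Ac),\alpha)$ need not even be an object of $\EM{\Ek}$: the newly created $R$-tuples can relate elements in different branches of the forest order, violating the compatibility condition of remark~\ref{r:coalg-order}. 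This error happens not to be load-bearing — the image of a path embedding is a $\sqsubseteq_\alpha$-chain, and on a chain the equaliser structure and $\etaRiE$ of the induced substructure coincide, so your construction of $\Pc_1$, the corestriction $e_0$, and the minimality argument via images all go through. You should either correct the description of the equaliser or, more economically, follow the paper and work with \ref{ax:s2p}, which sidesteps the need to describe $\lop(\Ac)$ at all.
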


\subsection{Example: twisted sums}
\label{s:twisted-sums}

Here we show that $\plus\colon \R_0(\sg_1)\times \R_0(\sg_2) \to \R_0(\sg_3)$, defined in section~\ref{s:FO-example-coproducts}, is $\MSO$-$\Eknc$-smooth.
To this end, we define a ``twisted sum'' functor
\[ \bowtie\colon \Rsnc(\sg_1\E)\times \Rsnc(\sg_2\E) \to \Rsnc(\sg_3\E) \]
as the lift of $\plus$ along $\trE$. For $\As \in \Rsnc(\sg_1\E)$ and $\Bs\in \Rsnc(\sg_2\E)$, the $\sgI_3$-structure reduct of $\As \bowtie \Bs$ is precisely $\As' \plus \Bs'$ where $\As'$ and $\Bs'$ are the $\sgI_1$-- and $\sgI_2$-reducts of $\As$ and $\Bs$, respectively. As before, we have the map $\inc\colon A_\fst + B_\fst \to (A_\fst + B_\fst)/_{\approx}$. What remains to define is the second sort $(\As \bowtie \Bs)_\snd$, which is just $\As_\snd \times \Bs_\snd$, and the relation $e^{\As \bowtie \Bs}$, which is the set $\{ (\inc(a), (a',b')) \mid e^\As(a,a'),\, b'\in \Bs_\snd \}\cup \{ (\inc(b), (a',b')) \mid a'\in \As_\snd,\, e^\Bs(b,b') \}$.

We now define the natural transformation $\theta\colon \trE(\As) \bowtie \trE(\Bs) \to \trE(\As \plus \Bs)$. In the $\fst$-sort it is just the identity function and in the $\snd$-sort we send the pair $(M,N)$ of subsets $M\sue A$ and $N \sue B$ to the subset $\inc(M)\cup \inc(N)$ of $\As \plus \Bs$. Clearly, $\theta$ is a strong onto homomorphism in $\Rsnc(\sgE_3)$ and, therefore, $F^{\Eknc}(\theta)$ is an open pathwise embedding.

Lastly, we need a smooth Kleisli law $\kappa\colon \Eknc(\As \bowtie \Bs) \to \Eknc(\As) \bowtie \Eknc(\Bs)$. Analogously to section~\ref{s:FO-example-coproducts}, define $\ex_1$ by sending a word $w = [x_1:s_1, \dots, x_n:s_n]$ from $\Eknc(\As \bowtie \Bs)$ to $\ex_1(w) = [ y_i : s_i \mid x_i \in \inc(\As_\fst) \text{ or } s_i = \snd]$, where $y_i = a$ if $x_i = \inc(a)$ or $y_i = a'$ if $s_i = \snd$ and $x_i = (a',b')$. Then, $\ex_2$ is defined dually. We define $\kappa(w)$, for a non-constant $w$, as $\ex_1(w)$ if $\counit(w) \in \inc(\As_\fst)$, $\ex_2(w)$ if $\counit(w) \in \inc(\Bs_\fst)$ and as $(\ex_1(w),\ex_2(w))$ for $w$ from the $\snd$-sort.

\begin{restatable}{proposition}{plusMSOsmooth}
    \label{p:oplus-MSO-Ek-smooth}
    $\kappa$ is a smooth Kleisli law and $\plus$ is $\MSO$-$\Eknc$-smooth.
\end{restatable}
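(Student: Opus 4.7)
The plan is to reduce both claims to verifying the hypotheses of Theorem~\ref{t:smoothness}. The second claim, that $\plus$ is $\MSO$-$\Eknc$-smooth, will then follow immediately from Definition~\ref{d:MSO-smoothness}: we have the operation functor $\bowtie$ and natural transformation $\theta$ from the preamble, and it was already noted there that $\theta$ is strong onto, so $\EMF{\Eknc}(\theta)$ consists of open pathwise embeddings. Consequently, all the work concentrates on showing that $\kappa$ is a smooth Kleisli law, and by Theorem~\ref{t:smoothness} this amounts to checking \ref{ax:kl-law-counit}, \ref{ax:kl-law-comultiplication}, \ref{ax:s1} and \ref{ax:s2}.

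For the Kleisli-law axioms, I would verify \ref{ax:kl-law-counit} and \ref{ax:kl-law-comultiplication} by a sort-wise direct computation, closely mirroring Lemma~\ref{l:kappa-coproducts}. On the $\fst$-sort the definition of $\kappa$ via $\ex_1$ or $\ex_2$ (chosen according to where $\counit(w)$ lives) is identical to the single-sorted coproduct case, so the same argument ports over verbatim. On the $\snd$-sort the output is the pair $(\ex_1(w),\ex_2(w))$; since $\ex_1$ and $\ex_2$ are defined by the same list-comprehension pattern as in the single-sorted case, compatibility with $\counit$ and $\delta$ holds coordinate-wise, using the fact that $\ex_1,\ex_2$ commute with prefix restriction of words.

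For \ref{ax:s1}, an embedding in $\Rsnc(\sgE_i)$ is a sort-wise strong injection. The $\fst$-sort reduct of $\bowtie$ coincides with $\plus$, which was already seen to preserve embeddings in Section~\ref{s:FO-example-coproducts}; the $\snd$-sort reduct is a Cartesian product of sets and so preserves injections; and the relation $e^{\bowtie}$ is set up precisely so that only tuples forced by the components appear, giving reflection of relations. The main obstacle is \ref{ax:s2}. Given a path embedding $e\colon \Pc \embed \lift{\bowtie}(\alpha,\beta)$, I would build $e_i\colon \Pc_i \embed \alpha_i$ by restricting $\Pc$ to a subpath: for $\Pc_1$, include the $\fst$-sort elements whose image under $e$ lies in $\inc(\As_\fst)$, together with the first projections of all $\snd$-sort elements of $\Pc$; symmetrically for $\Pc_2$. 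Define $e_0\colon \Pc \to \lift{\bowtie}(\Pc_1,\Pc_2)$ pointwise, sending each $\fst$-sort element to the appropriate side and each $\snd$-sort element to the pair of its two projections; this is exactly the effect of $\kappa$ on the coalgebra data, which is what guarantees that $e_0$ factors through the equalizer defining $\lift{\bowtie}(\Pc_1,\Pc_2)$. The subtlety is on the $\snd$-sort, where a single path element carries a pair in $\As_\snd \times \Bs_\snd$ and must be split across the two components while the linear order on $\Pc$ is respected: this requires that the twisted-sum relation $e^{\bowtie}$ decouple cleanly along both projections, which is immediate from its definition. Minimality of the decomposition then follows since any competing factorization through $\Qc_1,\Qc_2$ must contain the same elements up to embedding, giving the required unique mediating maps $\Pc_i \to \Qc_i$.
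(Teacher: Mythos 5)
Your proposal is correct and follows essentially the same route as the paper: verify \ref{ax:kl-law-counit}, \ref{ax:kl-law-comultiplication}, \ref{ax:s1} and the decomposition axiom by adapting the single-sorted argument of lemma~\ref{l:kappa-coproducts} while tracking sorts, and obtain $\MSO$-smoothness from the fact that $\theta$ is strong onto. Two places where you lean on facts the paper supplies as separate lemmas are worth flagging. First, for the decomposition axiom the paper does not verify \ref{ax:s2} directly but its multilinear reformulation \ref{ax:s2p} from lemma~\ref{l:s2-multilin}: one works with a multilinear map $\Pc \to [\Ac,\Bc]$ into the plain twisted sum $\As \bowtie \Bs$ rather than with a path embedding into the equalizer $\lift{\bowtie}(\alpha,\beta)$. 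Your step ``$e_0$ factors through the equalizer defining $\lift{\bowtie}(\Pc_1,\Pc_2)$ because it is exactly the effect of $\kappa$'' is precisely the content of proposition~\ref{p:multilin-maps-correspondence} together with lemma~\ref{l:multi-decomp}; stated as you have it, it is an assertion rather than an argument, so you should either invoke that machinery explicitly or prove the factorisation by hand. Second, the implication ``$\theta$ strong onto $\Rightarrow$ $\EMF{\Eknc}(\theta)$ is an open pathwise embedding'' is not automatic: the paper proves it as a standalone lemma using the adjunction bijection between strong homomorphisms $U(\pi)\to \As$ and path embeddings $\pi \to \EMF{\Eknc}(\As)$, with surjectivity of $\theta$ supplying the diagonal filler for openness. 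Citing the preamble's ``clearly'' inherits that gap rather than closing it. Neither point changes the structure of your argument, but both are where the actual work lives.
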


\subsection{Extensions of \texorpdfstring{$\MSO$}{MSO}}
The generality of definition \ref{d:FO-smoothness} and the many-sorted $\Ek$ allows us to adapt our approach to proving FVM and consequently Courcelle theorems for fragments and extensions of (monadic) second order logic. We can accomplish this by choosing the appropriate translation functors~$\tr$. For example, the logic $\MSO 2$, which allows quantification over subsets of tuples in relations, could be captured by having a sort $\mathbf{s}_R$ for each relation $R \in \sg$ and the translation functor $\tr$ on a structure $\As$ sets $\tr(\As)_{\mathbf{s}_R}$ to $\Pw(R^{\As})$. Similar translations can be tailored for weak MSO, counting MSO, and $n$-adic second-order logic.

\begin{remark}[Failure of FVM theorems]
Unlike the case of $\FO$ in example \ref{e:fo-products}, there famously fails to be a FVM product theorem for $\MSO$ (see e.g. \cite{makowsky2004algorithmic}). In our framework, the failure of $\times$ to be $\MSO$-$\Ek$-smooth is demonstrated by the non-existence of a natural transformation $\theta:\trE(\As) \times \trE(\Bs) \rightarrow \trE(\As \times \Bs)$ where the components of $\EMF{\Ek}(\theta)$ are open pathwise embeddings. Similarly, the non-existence of a suitable $\theta$ or smooth Kleisli law $\kappa$ could be used to demonstrate the impossibility of a FVM product theorem for $\MSO 2$ and a FVM coproduct theorem for $n$-adic second-order logic with $n \geq 2$. 
\end{remark}

\section{Comonadic Courcelle's theorems}
\label{s:courcelle}

In the following we state an abstract version of Courcelle's theorem, for classes of $\tau$-structures given by a \df{$\tau$-presentation $\left< \Gens, \Ops\right>$ over $\Sigma$}. This is specified by
\begin{itemize}
    \item a finite set of signatures $\Sigma$, which are extensions of $\tau$,
    \item a finite set of generators $\Gens$, where each $B \in \Gens$ is an object of $\R_0(\sg)$, for some $\sg\in \Sigma$, and
    \item a finite set of operations $\Ops$, where each $H\in \Sigma$ is a functor $\prod_{i=1}^n \R_0(\sg_i) \to \R_0(\sg)$, with $\sg\in \Sigma$ and $\sg_i \in \Sigma$, for every $i$.
\end{itemize}
Intuitively, $\left< \Gens, \Ops\right>$ is a presentation of the class $\Gamma \sue \R_0(\tau)$ where $A \in \Gamma$ if and only if there is a term $t_A$ using objects $B\in \Gens$ as constants and functors $\op\in \Ops$ as operations, such that the $\tau$-structure reduct of $t_A$ is isomorphic to $A$.
% We say that a $\tau$-presentation is \df{finite} if $\Gens, \Ops, \Sigma$ are finite sets and operations in $\Ops$ have finite arity.
Examples of classes with such presentations are the classes of structures of tree-width ${\leq}\,k$ or clique-width ${\leq}\,k$. These are the focus of our attention in the following sections, once we state our comonadic Courcelle theorem.

To this end, we say that a comonad $\C$ over $\CB$ has \df{finite $\tr$-type}, for a functor $\tr\colon \CA \to \CB$, if the equivalence relation $\bisim_\C^\tr$ has finitely many equivalence classes, where $A \bisim_\C^\tr B$ iff $\tr A \bisim_\C \tr B$. 
% Further, \df{a semantically given $\C$ formula} is a class $\Delta \sue \Rel$ which is closed under $\bisim_\C\E$. 
A standard tree-automata argument yields the following.

\begin{restatable}{theorem}{ComonadicCourcelle}
    \label{t:comonadic-courcelle}
    Let $\Delta$ be a class of $\tau$-structures and let $\left< \Gens, \Ops\right>$ be a $\tau$-presentation over $\sg_1$, \dots, $\sg_n$ and, for $1 \leq i \leq n$ let $\tr_i\colon \R_0(\sg_i) \to \CC_i$ be a functor and $\C_i$ a comonad over~$\CC_i$. Assume further that
    \begin{enumerate}
        % \item $\CC_i$ is a subcategory of $\R(\sgI_i)$ or $\R(\sgE_i)$ such that $(\text{surjective} \cap \CC_i,\ \text{strong injective}\cap \CC_i)$ is a factorisation system on $\CC_i$,
        \item $\CC_i$ is a category with a proper factorisation system,
        \item $\C_i$ has finite $\tr_i$-type, preserves embeddings and $\EM{\C_i}$ has equalisers of reflexive pairs,
        \item operations in $\Ops$ are $(\vec{\C_i})$-smooth relative to $\vec{\tr_i}$, and
        \item for every $1 \leq i \leq n$, the class of $\sg_i$-structures with their $\tau$-structure reduct in $\Delta$ is closed under $\bisim_{\C_i}^{\tr_i}$.
    \end{enumerate}
    Then there exists an algorithm, which for a given $\tau$-structure $A$ given as a term $t_A$ of $\left< \Gens, \Ops \right>$, decides $A \in \Delta$ in linear time, in the size of $t_A$.
\end{restatable}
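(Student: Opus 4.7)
The plan is to construct a deterministic bottom-up tree automaton whose states are the $\bisim$-equivalence classes supplied by the finite-type assumption, and then evaluate it on $t_A$ in a single linear pass.

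First I fix the state sets. By hypothesis~(2), each comonad $\C_i$ has finite $\tr_i$-type, so the quotient $Q_i := \R_0(\sg_i)/{\bisim_{\C_i}^{\tr_i}}$ is finite; these are the states of the automaton at the $i$-th sort. Each generator $B \in \Gens$ contributes its fixed equivalence class $[B] \in Q_i$ as a nullary transition. Next, combining smoothness (hypothesis~(3)) with proposition~\ref{p:smooth-Kleisli-law} transported through the translations~$\tr_i$, every $\op \in \Ops$ preserves $\bisim^{\tr}$ in each argument: if $\As_i \bisim_{\C_i}^{\tr_i} \Bs_i$ for all $i$, then $\op(\vec{\As_i}) \bisim_{\C}^{\tr} \op(\vec{\Bs_i})$. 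Hence each $\op$ descends to a well-defined function $\bar\op\colon \prod_i Q_i \to Q$, furnishing the transition functions of the automaton. Finally, hypothesis~(4) ensures that for each $\sg_i$ the subclass of $\sg_i$-structures whose $\tau$-reduct lies in~$\Delta$ is saturated under $\bisim_{\C_i}^{\tr_i}$, so it is represented by a distinguished subset $F_i \subseteq Q_i$ of accepting states.

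The algorithm then parses $t_A$ and runs the automaton bottom-up: every leaf labelled by a generator $B$ emits its precomputed class $[B]$, and every internal node labelled by an operation $\op$ applies the finite transition table of $\bar\op$ to the classes returned by its children. Each step is a constant-time table lookup, so the total runtime is $O(|t_A|)$. The answer ``$A \in \Delta$?'' is decided by checking whether the class at the root lies in the corresponding $F_i$.

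The main obstacle is the justification that $\bar\op$ is well defined, i.e.\ that $\op$ really preserves $\bisim^{\tr}$ on $\R_0(\sg_i)$. Smoothness as stated in definition~\ref{d:FO-smoothness} only gives preservation of open pathwise-embeddings by the internal lift $\lop^\tr$ on $\CC_i$. To descend to the outer categories, given a span of open pathwise-embeddings between $\EMF{\C_i}(\tr_i \As_i)$ and $\EMF{\C_i}(\tr_i \Bs_i)$, one first applies $\lop^\tr$ to obtain a span between $\EMF{\C}(\op^\tr(\vec{\tr_i \As_i}))$ and $\EMF{\C}(\op^\tr(\vec{\tr_i \Bs_i}))$, and then post-composes each leg with $\EMF{\C}(\theta)$, whose components are open pathwise-embeddings by assumption. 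Using closure of open pathwise-embeddings under composition then yields the required span between $\EMF{\C}(\tr(\op(\vec{\As_i})))$ and $\EMF{\C}(\tr(\op(\vec{\Bs_i})))$, making $\bar\op$ unambiguous.
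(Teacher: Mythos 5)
Your proposal is correct and follows essentially the same route as the paper: a deterministic bottom-up tree automaton whose states are the finitely many $\bisim_{\C_i}^{\tr_i}$-classes, with transitions well-defined by smoothness (via proposition~\ref{p:smooth-Kleisli-law} together with post-composition by $\EMF{\C}(\theta)$, exactly as in the paper's proof that smooth operations preserve the lifted equivalence), accepting states given by hypothesis~(4), and a linear-time bottom-up evaluation of $t_A$.
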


This theorem directly applies to our setting of $\MSO$-$\Eknc$-smooth operations. Indeed, the (surjective, strong injective) factorisation system of $\Rs(\sgE)$ restricts to $\Rsnc(\sgE)$.
Further, observe that $\Eknc$ preserves embeddings, has finite $\trEnc$-type, because there are only finitely many $\equiv_{\MSO_k}$-equivalence classes (cf.\ proposition~\ref{prop:many-sorted-ek-bisim}), and the category $\EM{\Eknc}$ has equalisers (cf.\ lemma~\ref{l:EM-equalisers}).
An $\MSO$ sentence $\varphi$ of quantifier rank ${\leq}\,k$, in signature $\tau$, defines the class $\Delta_\varphi$ of $\tau$-structures such that $A\in \Delta_\varphi$ iff $A \models \alpha$. Observe that $\Delta_\varphi$ automatically satisfies item 4 in theorem~\ref{t:comonadic-courcelle}.
Indeed, $\varphi$ is a valid $\MSO$ sentence in any signature $\sg$ extending $\tau$. Consequently, if $A \bisim_{\Ek}\E B$ and the $\tau$-structure reduct of $A$ is in $\Delta_\varphi$ then so is $B$.
% Indeed, for an extension $\sg$ of $\tau$ and the \ef{} comonad $\Ek$ over $\RelI$, if $A \bisim_{\Ek}\E B$ and the $\tau$-structure reduct of $A$ is in $\Delta_\varphi$ then so is $B$ because $A \bisim_{\Ek}\E B$ expresses that $A \models \psi$ iff $B \models \psi$ for every $\MSO$ sentence $\psi$ in signature $\sg$.
Therefore:
% We obtained the following.

\begin{corollary}
    \label{c:abstract-MSO-Ek-courcelle}
    Let $\left< \Gens, \Ops\right>$ be a $\tau$-presentation and $\varphi$ be a $\MSO$ sentence in signature $\tau$ of quantifier rank ${\leq}\,k$.

    If every $\op \in \Ops$ is $\MSO$-$\Eknc$-smooth then there is an algorithm deciding $A \models \varphi$, for an arbitrary $A$ given as a term $t_A$ of $\left< \Gens, \Ops\right>$, running in linear time, in the size of $t_A$.
\end{corollary}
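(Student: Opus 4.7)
The plan is to apply Theorem~\ref{t:comonadic-courcelle} directly, instantiating each comonad $\C_i$ as $\Eknc$ on $\Rsnc(\sgE_i)$, each translation functor $\tr_i$ as $\trEnc$, and the target class as $\Delta_\varphi := \{ A \in \R_0(\tau) \mid A \models \varphi \}$. The corollary then reduces to verifying the four hypotheses of the theorem, most of which are already assembled in the paragraph preceding the statement.

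For hypothesis~(1), I check that the (surjective, strong injective) proper factorisation system on $\Rs(\sgE_i)$ restricts to one on the wide subcategory $\Rsnc(\sgE_i)$, since both classes of morphisms are preserved when one restricts to homomorphisms that send non-constants to non-constants. For hypothesis~(2), the coextension formula for $\Eknc$ preserves strong injective homomorphisms componentwise on words; Lemma~\ref{l:EM-equalisers} applied to $\Rsnc(\sgE_i)$ supplies equalisers of reflexive pairs in $\EM{\Eknc}$; and Theorem~\ref{thm:mso-manysortedEk}, combined with the standard fact that $\MSO_k$ over a finite signature admits only finitely many sentences up to logical equivalence, yields finite $\trEnc$-type. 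Hypothesis~(3) is the standing assumption on $\Ops$.

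The most delicate step is hypothesis~(4): the class of $\sg_i$-structures whose $\tau$-reduct lies in $\Delta_\varphi$ must be closed under $\bisim_{\Eknc}^{\trEnc}$. Here I argue that $\varphi$, although written over $\tau$, remains a well-formed $\MSO_k$ sentence in any extension $\sg_i \supseteq \tau$, and its truth in a $\sg_i$-structure $A$ depends only on the $\tau$-reduct of $A$, since only $\tau$-symbols appear in $\varphi$. Theorem~\ref{thm:mso-manysortedEk} identifies $\bisim_{\Eknc}^{\trEnc}$ with $\equiv_{\MSO_k}$ on $\sg_i$-structures, so $A \bisim_{\Eknc}^{\trEnc} B$ implies $A \models \varphi \Leftrightarrow B \models \varphi$, and hence the $\tau$-reduct of $A$ lies in $\Delta_\varphi$ iff that of $B$ does. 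With all four hypotheses verified, Theorem~\ref{t:comonadic-courcelle} supplies the claimed linear-time decision procedure, completing the proof.
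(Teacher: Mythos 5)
Your proposal is correct and follows essentially the same route as the paper: instantiate Theorem~\ref{t:comonadic-courcelle} with $\C_i = \Eknc$, $\tr_i = \trEnc$ and $\Delta = \Delta_\varphi$, check the restricted factorisation system, preservation of embeddings, equalisers via Lemma~\ref{l:EM-equalisers}, finite $\trEnc$-type from the finiteness of $\equiv_{\MSO_k}$-classes, and closure of $\Delta_\varphi$ under $\bisim$ because $\varphi$ remains a valid $\MSO_k$ sentence over any extension of $\tau$ and only depends on the $\tau$-reduct. No gaps.
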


\subsection{Courcelle's tree-width theorem}
\label{s:tw-thm}

Tree-width is an important, well-understood graph parameter, popularised by the work of Robertson--Seymour on graph minors \cite{robertson1986graph}. It is well-known~\cite{courcelle1993graph,courcelle2012graph} that a graph or, more generally a $\sg$-structure, has tree-width ${<}\, m$ if and only if it belongs to the class presented by $\left< \Gens_{tw},\Ops_{tw} \right>$ over $\Sigma_{tw}$ where
\begin{itemize}
    \item $\Sigma_{tw} = \{ \sg_C \mid C \sue \{c_1,\dots,c_m\}\}$ where $\sg_C$ the extension of $\sg$ by constants in $C$,
    \item $\Gens_{tw}$ consists of all $\sg_C$-structures on ${\leq}\,m$ vertices, with $C \sue \{c_1,\dots,c_m\}$, and
    \item $\Ops_{tw}$ consists of all parallel sums $\plus$ of type
    $\R(\sg_C) \times \R(\sg_D) \to \R(\sg_{C \cup D})$, for every $C,D$, and operations $\forgCD\colon \R(\sg_C) \to \R(\sg_D)$, for $D \sue C$, where $\forgCD(A)$ is the $\sg_D$-structure reduct of the $\sg_C$-structure $A$.
\end{itemize}

In proposition~\ref{p:oplus-MSO-Ek-smooth} we have established that the parallel sum operations in $\Ops$ are $\MSO$-$\Eknc$-smooth. Furthermore, checking smoothness for $\forgCD$ is also straightforward.
Consequently, corollary~\ref{c:abstract-MSO-Ek-courcelle} for $\left< \Gens_{tw},\Ops_{tw} \right>$ entails the classical Courcelle's theorem.
\begin{theorem}
    \label{t:classical-tw-recovered}
    Let $\varphi$ be a $\MSO$ sentence in signature $\tau$ of quantifier rank ${\leq}\,k$. Then, there exists an algorithm deciding $A \models \varphi$, for an arbitrary $A$ given by a term $t_A$ of $\left< \Gens_{tw}, \Ops_{tw}\right>$, running in linear time, in the size of $t_A$.
\end{theorem}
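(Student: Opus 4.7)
The plan is to obtain Theorem~\ref{t:classical-tw-recovered} as a direct instance of Corollary~\ref{c:abstract-MSO-Ek-courcelle} applied to the $\tau$-presentation $\left< \Gens_{tw}, \Ops_{tw} \right>$. The corollary supplies an MSO-decision algorithm in linear time in $|t_A|$ provided that every operation appearing in $\Ops_{tw}$ is $\MSO$-$\Eknc$-smooth. Since $\Ops_{tw}$ consists of two kinds of operations, namely the parallel sums $\plus$ of type $\R(\sg_C)\times \R(\sg_D) \to \R(\sg_{C\cup D})$ and the forgetful operations $\forgCD\colon \R(\sg_C)\to \R(\sg_D)$ for $D\sue C$, the proof reduces to checking both cases.

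For the parallel sums, I would appeal directly to Proposition~\ref{p:oplus-MSO-Ek-smooth}, which already establishes $\MSO$-$\Eknc$-smoothness of $\plus$ via the twisted-sum lift $\bowtie$, the natural transformation $\theta$, and the smooth Kleisli law $\kappa$ constructed in Section~\ref{s:twisted-sums}. The only small point to mention is that those proofs were carried out for the binary signature pair $(\sg_1,\sg_2)$ with $\const{\sg_1}\cup\const{\sg_2}=\const{\sg_3}$, and this is exactly the shape of $\sg_C$, $\sg_D$, $\sg_{C\cup D}$.

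For the forgetful operations $\forgCD$, I would verify $\MSO$-$\Eknc$-smoothness directly from Definition~\ref{d:MSO-smoothness}. The forgetful operation on $\sgE$-structures, say $\forgCD^e$, is defined analogously by discarding interpretations of constants in $C\setminus D$; this is well-defined on the non-constant-reflecting subcategory $\Rsnc(\sgE_C)$ because removing constants only enlarges the set of non-constant elements. The natural transformation $\theta\colon \forgCD^e\circ \trE \to \trE\circ \forgCD$ is the identity on underlying universes (both sorts), and is a strong isomorphism in $\Rsnc(\sgE_D)$, so $F^{\Eknc}(\theta)$ is trivially an open pathwise embedding. The required Kleisli law $\kappa\colon \Eknc\circ \forgCD^e \to \forgCD^e\circ \Eknc$ is also given by the identity on words, since the universe of $\Eknc$ does not depend on the constant interpretations; axioms~\ref{ax:kl-law-counit}, \ref{ax:kl-law-comultiplication}, \ref{ax:s1} and \ref{ax:s2} then reduce to essentially trivial verifications, so by Theorem~\ref{t:smoothness} the Kleisli law is smooth.

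With both families of operations confirmed $\MSO$-$\Eknc$-smooth, Corollary~\ref{c:abstract-MSO-Ek-courcelle} applies and yields the desired linear-time algorithm. I do not anticipate a genuinely difficult step: the parallel-sum case is already done, and the forgetful case is close to formal because $\forgCD$ acts as the identity at the level of universes and on the action of $\Eknc$. The only mild subtlety is ensuring the forgetful operation still lives in the non-constant-reflecting subcategory $\Rsnc$, which follows because demoting constants to ordinary elements preserves the property that non-constants are sent to non-constants by every morphism in the image.
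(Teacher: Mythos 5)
Your overall strategy coincides with the paper's: instantiate corollary~\ref{c:abstract-MSO-Ek-courcelle} for $\left<\Gens_{tw},\Ops_{tw}\right>$, cite proposition~\ref{p:oplus-MSO-Ek-smooth} for the parallel sums, and verify $\MSO$-$\Eknc$-smoothness of the forgetful operations $\forgCD$ by hand. The parallel-sum half is fine. However, your verification for $\forgCD$ contains a genuine gap: the Kleisli law $\kappa\colon \Eknc\circ\forgCD\E \to \forgCD\E\circ\Eknc$ cannot be the identity on words, because the underlying set of $\Eknc(X)$ \emph{does} depend on which elements of $X$ are constants. By definition, $\Eknc$ is the restriction of $\Ek$ to words containing no constant letters, so its word part is built only from the non-constant elements. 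Writing $C_0$ for the interpretations of the constants in $C\setminus D$ and $A_0$ for the remaining non-constant elements, the structure $\forgCD\E(\As)$ has $C_0$ among its non-constants, so $\Eknc(\forgCD\E(\As))$ has word part $(A_0\cup C_0)^{\leq k}$, whereas $\forgCD\E(\Eknc(\As))$ has word part $A_0^{\leq k}$. Words containing a letter from $C_0$ have no counterpart on the right, so the identity is not even a well-defined map between these objects. Your justification --- that the universe of $\Eknc$ does not depend on the constant interpretations --- is true of $\Ek$ but is exactly what fails for $\Eknc$; and it is $\Eknc$ that corollary~\ref{c:abstract-MSO-Ek-courcelle} requires, since the parallel sums are only natural on the non-constant-preserving subcategories. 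You even note yourself that forgetting constants enlarges the set of non-constants, but do not draw the consequence for the comonad.

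The repair, which is what the paper does in its appendix, is to define $\kappa$ on a non-constant word $w$ by deleting the $C_0$-letters: $w\mapsto \ex(w)\in A_0^{\leq k}$ when $\counit(w)\in A_0$, and $w\mapsto c^{\Eknc(\As)}$ (an element that is no longer a constant in $\forgCD\E(\Eknc(\As))$) when $\counit(w)=c^{\As}\in C_0$. With this $\kappa$, the verification of \ref{ax:s2} is no longer purely formal: a path embedding into the lifted operation corresponds to a chain in $(\As,\alpha)$ interleaved with elements of $C_0$, and one must check that stripping the $C_0$-letters yields the minimal path through which the multilinear map factors. This is short but is a real argument, which your proposal skips because the (incorrect) premise that $\kappa$ is the identity removes the issue artificially. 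The remainder of your proof --- the natural isomorphism $\theta$ and the final appeal to the corollary --- is correct and matches the paper.
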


\begin{remark}
    In traditional presentations of Courcelle's theorem, the algorithm first computes the term $t_\As$, given a structure $\As$ of tree-width $\leq k$. This part of the algorithm also runs in linear time in the size of $\As$ \cite{bodlaender1997treewidth}.
    % In traditional presentations of Courcelle's theorem, the algorithm is divided into two stages: (1) given a structure $\As$ of tree-width $\leq k$ producing term $t_\As$ in linear time, in the size of $\As$, and (2) deciding that $\As \vDash \varphi$ for $\MSO$ sentence $\varphi$ of quantifier rank $\leq k$ in linear time, in the size of $t_\As$. The algorithm for (1) is stated in \cite{bodlaender1997treewidth}. Theorem \ref{t:classical-tw-recovered} produces the algorithm for (2).
\end{remark}

%\begin{remark}
%    Planar graphs are also defined in terms of parallel sums, cf.~\cite{manvcinska2020quantum}. However, our theory does not apply here because, unlike with tree-width ${\leq}\, k$ graphs, the $\left< \Gens,\Ops \right>$ presentation for planar graphs in \emph{op. cit.}\ requires an unbounded number of constants and, therefore, is not finite.

%    \TODO[explain what is know and if it validates our theory or not (this depends on what happens with twisted-sums...)] 
    % note that MSO is not decidable in polynomial time over planar graphs
%\end{remark}

\subsection{Courcelle's clique-width theorem}
\label{s:cw-thm}
In order to recover Courcelle's theorem for structures of bounded clique-width as a corollary of theorem~\ref{t:comonadic-courcelle}, we note that the class of structures of clique-width $\leq m$ is given by the $\sg$-presentation $\left< \Gens_{cw},\Ops_{cw} \right>$ over $\Sigma_{cw}$ where
\begin{itemize}
    \item $\Sigma_{cw} = \{\sgm\}$, recall that $\sgm = \sg \cup \{P_1,\dots,P_m\}$ with each $P_i(\cdot)$ unary
    \item $\Gens_{cw}$ is the set of $\sgm$-strutures $\{v_{1},\dots,v_{m}\}$ where $v_{i}$ is the singleton $\{*\}$ such that $P^{v_i}_i = \{*\}$ and $R^{v_{i}} = \varnothing$ for all $R \in \sgm \setminus \{P_i\}$ 
    \item $\Ops_{cw}$ consists of coproducts $\oplus$ of type $\R_0(\sgm)\times \R_0(\sgm) \to \R_0(\sgm)$,  edge-creation operations $\etaRi$ for all $R \in \sg$ of arity $r$ with $\vec{i} \in \setm^r$, and relabeling operations $\rhoij$ for all $i,j \in \setm$. The relabeling operation $\rhoij$ `recolours' elements of colour $i$ with the colour $j$: 
 \[ P^{\rhoij(\As)}_j = P^{\As}_i \cup P^{\As}_j \qquad 
            P^{\rhoij(\As)}_i = \varnothing \qquad 
            R^{\rhoij(\As)} = R^{\As} \text{, for } R \in \sgm \setminus \{P_i,P_j\} \]
\end{itemize}
The coproduct $\oplus$ of type $\R_0(\sgm) \times \R_0(\sgm) \to \R_0(\sgm)$ is a special case of the $\MSO$-$\Ek$-smooth operation $\plus$ mentioned in proposition~\ref{p:oplus-MSO-Ek-smooth}. In lemma~\ref{l:etaRi-MSO-Ek-smooth} we established all the edge-creation operations $\etaRi$ are $\MSO$-$\Ek$-smooth. Verifying that the relabeling operations $\rhoij$ are $\MSO$-$\Ek$-smooth is straightforward.  
Consequently, corollary~\ref{c:abstract-MSO-Ek-courcelle} for  $\left< \Gens_{cw},\Ops_{cw} \right>$ entails Courcelle's theorem for bounded clique-width structures. 

\begin{theorem}
    \label{t:cw-courcelle-recovered}
    Let $\varphi$ be a $\MSO$ sentence in signature $\tau$ of quantifier rank ${\leq}\,k$. Then, there exists an algorithm deciding $A \models \varphi$, for an arbitrary $A$ given by a term $t_A$ of $\left< \Gens_{cw}, \Ops_{cw}\right>$, running in linear time, in the size of $t_A$.
\end{theorem}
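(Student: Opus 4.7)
The plan is to apply Corollary \ref{c:abstract-MSO-Ek-courcelle} to the $\sg$-presentation $\left< \Gens_{cw},\Ops_{cw} \right>$. Since that corollary already takes care of finite $\trEnc$-type, existence of the relevant factorisation system, preservation of embeddings by $\Eknc$, existence of equalisers in $\EM{\Eknc}$, and closure under $\bisim_{\Eknc}^{\trEnc}$ for any class cut out by an $\MSO$ sentence, the only thing to verify is that every operation in $\Ops_{cw}$ is $\MSO$-$\Eknc$-smooth.

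First I would deal with the coproduct $\oplus\colon \R_0(\sgm)\times \R_0(\sgm) \to \R_0(\sgm)$. This is the special case of the parallel sum $\plus$ from Section~\ref{s:FO-example-coproducts} where $\const{\sgm} = \varnothing$, so $\const{\sgm}\cap \const{\sgm} = \varnothing$ and no identifications are forced. Proposition~\ref{p:oplus-MSO-Ek-smooth} gives that $\plus$ is $\MSO$-$\Eknc$-smooth, which specialises immediately to $\oplus$. Next, the edge-creation operations $\etaRi$ are handled directly by Lemma~\ref{l:etaRi-MSO-Ek-smooth}.

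The remaining case is the relabeling operation $\rhoij\colon \R_0(\sgm) \to \R_0(\sgm)$. I would mirror the treatment of edge creation in Section~\ref{s:MSO-example-edge-creation}. Define a two-sorted variant $\rhoij^e\colon \RelEm \to \RelEm$ acting on $\sgEm$-structures in the obvious way on the $\fst$-sort (recolouring on the unary relations $P_i, P_j$) and as identity on the $\snd$-sort and all remaining structure. Since $\rhoij^e$ alters only the interpretation of $P_i$ and $P_j$ and the translation $\trE$ is identity on the first sort with $\snd$-sort independent of these colours, one obtains a natural isomorphism $\theta\colon \rhoij^e(\trE(\As)) \cong \trE(\rhoij(\As))$, so $\EMF{\Eknc}(\theta)$ is an open pathwise-embedding. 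For the Kleisli law, take $\kappa\colon \Eknc \circ \rhoij^e \to \rhoij^e \circ \Eknc$ with components the identity-on-words map: a word $w$ in $\Eknc(\rhoij^e(\As))$ and the same word in $\rhoij^e(\Eknc(\As))$ live on the same underlying set, the prefix order is unchanged, and the relations $R^\Ek$ for $R \in \sgm \setminus \{P_i, P_j\}$ are unchanged; the $P_i, P_j$ columns match by construction of $\rhoij$ at the bottom and propagate through $\counit$. Axioms \ref{ax:kl-law-counit} and \ref{ax:kl-law-comultiplication} are then immediate since $\kappa$ is the identity on elements.

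The main thing to check is smoothness of this $\kappa$ via the practical axioms of Section~\ref{s:sufficient-axioms}. Axiom \ref{ax:s1} holds because $\rhoij^e$ only relabels unary predicates and preserves underlying functions; so if $f$ is an embedding (injective and reflects relations), then so is $\rhoij^e(f)$, and taking binary products of embeddings in $\EM{\Eknc}$ preserves embeddings by Proposition~\ref{p:pe-preserved}. For \ref{ax:s2}, a path embedding into $\lift{\rhoij^e}(\Ac)$ is, by construction of $\lift{\rhoij^e}$ as an equaliser, essentially a path embedding into $\EMF{\Eknc}(\rhoij^e(\As))$ compatible with the coalgebra structure $\alpha$, which corresponds via $\theta^{-1}$ to the same datum into $\EMF{\Eknc}(\As)$; the minimal decomposition is then copied directly from the $\As$ side, producing a path embedding $e\colon \Pc \embed \Ac$ whose image under $\lift{\rhoij^e}$ recovers the original path embedding. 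This is essentially the same combinatorial argument as for $\etaRi$, only simpler since $\rhoij^e$ is unary. Theorem~\ref{t:smoothness} then yields smoothness of $\kappa$, so $\rhoij$ is $\MSO$-$\Eknc$-smooth. With all three families of operations verified, Corollary~\ref{c:abstract-MSO-Ek-courcelle} delivers the desired linear-time algorithm, and the hardest aspect in practice is bookkeeping the two-sorted translations and confirming the minimality clause of \ref{ax:s2} for the relabeling case.
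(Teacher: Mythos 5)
Your proposal is correct and follows essentially the same route as the paper: reduce to Corollary~\ref{c:abstract-MSO-Ek-courcelle}, obtain $\oplus$ as a special case of $\plus$ via Proposition~\ref{p:oplus-MSO-Ek-smooth}, cite Lemma~\ref{l:etaRi-MSO-Ek-smooth} for $\etaRi$, and verify $\MSO$-smoothness of $\rhoij$ by copying the edge-creation argument with an identity-on-words Kleisli law and an identity natural transformation $\theta$ (which is exactly what the paper's appendix lemma for Section~\ref{s:cw-thm} does). The only blemish is the stray reference to ``binary products of embeddings'' in your \ref{ax:s1} check for the unary operation $\rhoij$, which is harmless since Proposition~\ref{p:pe-preserved} covers arbitrary arity.
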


\section{Conclusion}
We presented a general semantic framework for FVM-type theorems, based on distributive laws. By varying the choice of game comonads, our results are parametric in the logic of interest, and yield FVM-type results for the logic. Moreover, we proved several new FVM-type results for positive existential and counting quantifier variants of these logics by applying theorem~\ref{t:FVM-from-kleisli-laws} to the relevant game comonad. We combined this with a game comonadic
semantics for monadic second order model equivalence. Both were illustrated by many examples, including FVM-type results spanning these different logics, and detailed expositions of the key operations for constructing graphs of bounded tree-width and clique-width. Combining our accounts of FVM-type theorems and monadic second order logic, we presented an abstract Courcelle theorem. This theorem was then instantiated to recover concrete tree-width and clique-width Courcelle theorems.

For reasons of exposition we have concentrated on relational structures. In fact theorem~\ref{t:comonadic-courcelle}, and other results, can be stated much more generally. It would be interesting to explore algorithmic meta-theorems in such broader settings, for example~\cite{BurtonD17}.

The original Feferman-Vaught theorem~\cite{feferman1967first} deals with operations on indexed families of structures, with the indexing itself carrying structure. We have not needed this level of generality in the present work, but the extension to account for this wider class of operations would be a natural next step.

\hides{
\section{Conclusion}

\TODO[mention new stuff we can do: FVM for counting, previously unknown?]

We presented a general semantic framework for FVM-type theorems, based on distributive laws. This yields FVM-type results for a logic, and it's existential positive and counting quantifier variants in a uniform way. We combined this with a game comonadic
account of monadic second order. Both where illustrated by many examples, including detailed expositions of the key operations for constructing graphs of bounded tree-width and clique-width.

This was combined with a comonadic semantics for MSO model comparison games to yield abstract Courcelle theorems, parametric in the logic and operations of interest. There are many directions for further developments, we mention a small selection.

The original Feferman-Vaught theorem~\cite{feferman1967first} deals with operations on indexed families of structures, with the indexing itself carrying structure. We have not needed this level of generality in the present work, but the extension to this level of generality would be interesting.

Our framework combines algebra, comonads and distributive laws. At least superficially this hints at connections to methods in bialgebraic semantics~\cite{turi1997towards}. We leave exploring this question to future work.

The comonadic methods of this paper extend beyond relational structures. It would be interesting to explore algorithmic meta-theorems in broader settings, for example~\cite{BurtonD17}.

% generality of our framework might be beneficial when considering applications outside of relational structures, akin to Courcelle's theorem for triangulations by Benjamin A. Burton, Rodney G. Downey \url{https://arxiv.org/abs/1403.2926}
% mention: Achim Blumensath, Algebraic Language Theory for Eilenberg–Moore Algebras, Logical Methods in Computer Science, 17(2021).

% $\Pk$ and MSO, many choices here
}

% Bibliography
\interlinepenalty=10000 % prevents references being broken apart because of paging, hereby addressing the error "\pdfendlink ended up on a different page" 
\bibliography{fvmc}

\begin{thebibliography}{10}

\bibitem{AbramskyDW17}
Samson Abramsky, Anuj Dawar, and Pengming Wang.
\newblock The pebbling comonad in finite model theory.
\newblock In {\em 32nd Annual {ACM/IEEE} Symposium on Logic in Computer
  Science, {LICS} 2017, Reykjavik, Iceland, June 20-23, 2017}, pages 1--12.
  {IEEE} Computer Society, 2017.
\newblock \href {https://doi.org/10.1109/LICS.2017.8005129}
  {\path{doi:10.1109/LICS.2017.8005129}}.

\bibitem{AbramskyM21}
Samson Abramsky and Dan Marsden.
\newblock Comonadic semantics for guarded fragments.
\newblock In {\em 36th Annual {ACM/IEEE} Symposium on Logic in Computer
  Science, {LICS} 2021, Rome, Italy, June 29 - July 2, 2021}, pages 1--13.
  {IEEE}, 2021.
\newblock \href {https://doi.org/10.1109/LICS52264.2021.9470594}
  {\path{doi:10.1109/LICS52264.2021.9470594}}.

\bibitem{AbramskyR21}
Samson Abramsky and Luca Reggio.
\newblock Arboreal categories and resources.
\newblock In Nikhil Bansal, Emanuela Merelli, and James Worrell, editors, {\em
  48th International Colloquium on Automata, Languages, and Programming,
  {ICALP} 2021, July 12-16, 2021, Glasgow, Scotland (Virtual Conference)},
  volume 198 of {\em LIPIcs}, pages 115:1--115:20. Schloss Dagstuhl -
  Leibniz-Zentrum f{\"{u}}r Informatik, 2021.
\newblock \href {https://doi.org/10.4230/LIPIcs.ICALP.2021.115}
  {\path{doi:10.4230/LIPIcs.ICALP.2021.115}}.

\bibitem{abramsky2021relating}
Samson Abramsky and Nihil Shah.
\newblock Relating structure and power: Comonadic semantics for computational
  resources.
\newblock {\em Journal of Logic and Computation}, 31(6):1390--1428, 2021.

\bibitem{abramskytzevelekos2010introduction}
Samson Abramsky and Nikos Tzevelekos.
\newblock Introduction to categories and categorical logic.
\newblock In {\em New structures for physics}, pages 3--94. Springer, 2010.

\bibitem{awodey2010category}
Steve Awodey.
\newblock {\em Category theory}.
\newblock Oxford University Press, 2010.

\bibitem{bodlaender1997treewidth}
Hans~L. Bodlaender.
\newblock Treewidth: Algorithmic techniques and results.
\newblock In {\em International Symposium on Mathematical Foundations of
  Computer Science}, pages 19--36. Springer, 1997.

\bibitem{BurtonD17}
Benjamin~A. Burton and Rodney~G. Downey.
\newblock Courcelle's theorem for triangulations.
\newblock {\em J. Comb. Theory, Ser. {A}}, 146:264--294, 2017.
\newblock \href {https://doi.org/10.1016/j.jcta.2016.10.001}
  {\path{doi:10.1016/j.jcta.2016.10.001}}.

\bibitem{ConghaileD21}
Adam~{\'{O}} Conghaile and Anuj Dawar.
\newblock Game comonads {\&} generalised quantifiers.
\newblock In Christel Baier and Jean Goubault{-}Larrecq, editors, {\em 29th
  {EACSL} Annual Conference on Computer Science Logic, {CSL} 2021, January
  25-28, 2021, Ljubljana, Slovenia (Virtual Conference)}, volume 183 of {\em
  LIPIcs}, pages 16:1--16:17. Schloss Dagstuhl - Leibniz-Zentrum f{\"{u}}r
  Informatik, 2021.
\newblock \href {https://doi.org/10.4230/LIPIcs.CSL.2021.16}
  {\path{doi:10.4230/LIPIcs.CSL.2021.16}}.

\bibitem{courcelle1990monadic}
Bruno Courcelle.
\newblock The monadic second-order logic of graphs. {I}. {R}ecognizable sets of
  finite graphs.
\newblock {\em Information and computation}, 85(1):12--75, 1990.

\bibitem{courcelle1993graph}
Bruno Courcelle.
\newblock Graph grammars, monadic second-order logic and the theory of graph
  minors.
\newblock {\em Contemporary Mathematics}, 147:565--565, 1993.

\bibitem{courcelle2012graph}
Bruno Courcelle and Joost Engelfriet.
\newblock {\em Graph structure and monadic second-order logic: a
  language-theoretic approach}, volume 138.
\newblock Cambridge University Press, 2012.

\bibitem{courcelle2000linear}
Bruno Courcelle, Johann~A Makowsky, and Udi Rotics.
\newblock Linear time solvable optimization problems on graphs of bounded
  clique-width.
\newblock {\em Theory of Computing Systems}, 33(2):125--150, 2000.

\bibitem{DawarJR21}
Anuj Dawar, Tom{\'{a}}s Jakl, and Luca Reggio.
\newblock Lov{\'{a}}sz-type theorems and game comonads.
\newblock In {\em 36th Annual {ACM/IEEE} Symposium on Logic in Computer
  Science, {LICS} 2021, Rome, Italy, June 29 - July 2, 2021}, pages 1--13.
  {IEEE}, 2021.
\newblock \href {https://doi.org/10.1109/LICS52264.2021.9470609}
  {\path{doi:10.1109/LICS52264.2021.9470609}}.

\bibitem{day1970closed}
Brian Day.
\newblock On closed categories of functors.
\newblock In {\em Reports of the Midwest Category Seminar IV}, pages 1--38.
  Springer, 1970.

\bibitem{ehrenfeucht1961application}
Andrzej Ehrenfeucht.
\newblock An application of games to the completeness problem for formalized
  theories.
\newblock {\em Fund. Math}, 49(129-141):13, 1961.

\bibitem{eilenberg1965adjoint}
Samuel Eilenberg and John~C. Moore.
\newblock Adjoint functors and triples.
\newblock {\em Illinois Journal of Mathematics}, 9(3):381--398, 1965.

\bibitem{feferman1967first}
Solomon Feferman and Robert~L. Vaught.
\newblock The first order properties of products of algebraic systems.
\newblock {\em Journal of Symbolic Logic}, 32(2):57--103, 1967.

\bibitem{fraisse1955quelques}
Roland Fra{\"\i}ss{\'e}.
\newblock Sur quelques classifications des relations, bas{\'e}es sur des
  isomorphismes restreints.
\newblock {\em Publications Scientifiques de l'Universit{\'e} d'Alger.
  S{\'e}rie A (math{\'e}matiques)}, 2:273--295, 1955.

\bibitem{freyd1972categories}
Peter~J. Freyd and G.~Max Kelly.
\newblock Categories of continuous functors, {I}.
\newblock {\em Journal of Pure and Applied Algebra}, 2(3):169--191, 1972.

\bibitem{gurevich1979modest}
Yuri Gurevich.
\newblock Modest theory of short chains. {I}.
\newblock {\em The Journal of Symbolic Logic}, 44(4):481--490, 1979.

\bibitem{gurevich1985monadic}
Yuri Gurevich.
\newblock Monadic second-order theories.
\newblock In J.~Barwise and S.~Feferman, editors, {\em Model-theoretic logics},
  pages 479--506. Cambridge University Press, 1985.

\bibitem{hennessy1980observing}
Matthew Hennessy and Robin Milner.
\newblock On observing nondeterminism and concurrency.
\newblock In {\em International Colloquium on Automata, Languages, and
  Programming}, pages 299--309. Springer, 1980.

\bibitem{jacobs1994semantics}
Bart Jacobs.
\newblock Semantics of weakening and contraction.
\newblock {\em Annals of {P}ure and {A}pplied {L}ogic}, 69(1):73--106, 1994.

\bibitem{kleisli1965every}
Heinrich Kleisli.
\newblock Every standard construction is induced by a pair of adjoint functors.
\newblock {\em Proceedings of the American Mathematical Society},
  16(3):544--546, 1965.

\bibitem{kock1970monads}
Anders Kock.
\newblock Monads on symmetric monoidal closed categories.
\newblock {\em Archiv der Mathematik}, 21(1):1--10, 1970.

\bibitem{kock1971closed}
Anders Kock.
\newblock Closed categories generated by commutative monads.
\newblock {\em Journal of the Australian Mathematical Society}, 12(4):405--424,
  1971.

\bibitem{kolaitis1992infinitary}
Phokion~G Kolaitis and Moshe~Y Vardi.
\newblock Infinitary logics and 0--1 laws.
\newblock {\em Information and computation}, 98(2):258--294, 1992.

\bibitem{Lauchli1966}
H.~L\"auchli and J.~Leonard.
\newblock On the elementary theory of linear order.
\newblock {\em Fundamenta Mathematicae}, 59(1):109--116, 1966.
\newblock URL: \url{http://eudml.org/doc/213884}.

\bibitem{libkin2004elements}
Leonid Libkin.
\newblock {\em Elements of finite model theory}, volume~41.
\newblock Springer, 2004.

\bibitem{linton1969coequalizers}
Fred~E.J. Linton.
\newblock Coequalizers in categories of algebras.
\newblock In {\em Seminar on triples and categorical homology theory}, pages
  75--90. Springer, 1969.

\bibitem{makowsky2004algorithmic}
Johann~A Makowsky.
\newblock Algorithmic uses of the {F}eferman--{V}aught theorem.
\newblock {\em Annals of Pure and Applied Logic}, 126(1-3):159--213, 2004.

\bibitem{manes2012algebraic}
Ernest~G. Manes.
\newblock {\em Algebraic theories}, volume~26.
\newblock Springer Science, 2012.

\bibitem{manes2007monad}
Ernie Manes and Philip Mulry.
\newblock Monad compositions i: general constructions and recursive
  distributive laws.
\newblock {\em Theory and Applications of Categories}, 18(7):172--208, 2007.

\bibitem{manzanoBook}
Mar\'ia Manzano.
\newblock {\em Extensions of First Order Logic}.
\newblock Cambridge University Press, 1996.

\bibitem{mostowski1952direct}
Andrzej Mostowski.
\newblock On direct products of theories.
\newblock {\em The Journal of Symbolic Logic}, 17(1):1--31, 1952.

\bibitem{oum2006approximating}
Sang-il Oum and Paul Seymour.
\newblock Approximating clique-width and branch-width.
\newblock {\em Journal of Combinatorial Theory, Series B}, 96(4):514--528,
  2006.

\bibitem{Paine20}
Thomas Paine.
\newblock A pebbling comonad for finite rank and variable logic, and an
  application to the equirank-variable homomorphism preservation theorem.
\newblock In Patricia Johann, editor, {\em Proceedings of the 36th Conference
  on the Mathematical Foundations of Programming Semantics, {MFPS} 2020,
  Online, October 1, 2020}, volume 352 of {\em Electronic Notes in Theoretical
  Computer Science}, pages 191--209. Elsevier, 2020.
\newblock \href {https://doi.org/10.1016/j.entcs.2020.09.010}
  {\path{doi:10.1016/j.entcs.2020.09.010}}.

\bibitem{robertson1986graph}
Neil Robertson and Paul~D. Seymour.
\newblock Graph minors. {II}. {A}lgorithmic aspects of tree-width.
\newblock {\em Journal of algorithms}, 7(3):309--322, 1986.

\bibitem{seal2013}
Gavin Seal.
\newblock Tensors, monads and actions: Dedicated to the memory of {P}awet
  {W}aszkiewicz.
\newblock {\em Theory and Applications of Categories}, 28:403--433, 01 2013.

\bibitem{shelah1975monadic}
Saharon Shelah.
\newblock The monadic theory of order.
\newblock {\em Annals of Mathematics}, 102(3):379--419, 1975.

\bibitem{vaananen2021Stanford}
Jouko Väänänen.
\newblock {Second-order and Higher-order Logic}.
\newblock In Edward~N. Zalta, editor, {\em The {Stanford} Encyclopedia of
  Philosophy}. Metaphysics Research Lab, Stanford University, {F}all 2021
  edition, 2021.

\end{thebibliography}

\newpage
\appendix
\counterwithin{theorem}{section}
\counterwithin{proposition}{section}
\counterwithin{lemma}{section}
\counterwithin{definition}{section}

\section{Omitted Proofs}

In the following we make use of the following elementary properties of proper factorisations systems.
\begin{lemma}[e.g.\ Section 2 in \cite{freyd1972categories}]
    \label{l:fs-basics}
    Given a proper factorisation system $(\Ec,\Mc)$,
    \begin{enumerate}
        \item $\Ec\cap \Mc = \mathrm{Iso}$,
        \item if $g\circ f \in \Ec$ then $g\in \Ec$ and, dually,\\ if $g\circ f\in \Mc$ then $f\in \Mc$,
        \item $\Ec$ contains all coequalisers and $\Mc$ all equalisers.
              % in fact, they contain all strong epis/monos, but we prefer to
              % not define them here
    \end{enumerate}
\end{lemma}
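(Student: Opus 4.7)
The plan is to prove all three items by repeatedly exploiting the same trick: the orthogonal diagonal filler, combined with properness (epi/mono), forces the relevant $\Mc$-part of a factorisation to be an isomorphism. I would treat $(1)$ first to warm up the technique, then $(2)$, then $(3)$.

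For item $(1)$, one inclusion is immediate: every factorisation system contains all isomorphisms in both $\Ec$ and $\Mc$ by assumption. For the converse, suppose $f \in \Ec \cap \Mc$. Apply the orthogonality property to the commuting square with $e = f \in \Ec$ on top, $m = f \in \Mc$ on the bottom, and both vertical maps equal to $\id$. The unique diagonal filler $d$ satisfies $d \circ f = \id$ and $f \circ d = \id$, so $f$ is an isomorphism.

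For item $(2)$, suppose $g \circ f \in \Ec$ with $f : A \to B$ and $g : B \to C$. Take an $(\Ec, \Mc)$-factorisation $g = m \circ e$ with $e : B \to C'$ in $\Ec$ and $m : C' \to C$ in $\Mc$. Then $g \circ f = m \circ (e \circ f)$ and $e \circ f \in \Ec$ by closure under composition. Form the commuting square with top edge $e \circ f : A \to C'$, left edge $g \circ f : A \to C$, right edge $\id_C : C \to C$, and bottom edge $m : C' \to C$; orthogonality of $g \circ f \in \Ec$ against $m \in \Mc$ yields a unique $d : C \to C'$ with $m \circ d = \id_C$. Since $m$ is a monomorphism (properness), the calculation $m \circ (d \circ m) = m = m \circ \id_{C'}$ forces $d \circ m = \id_{C'}$, so $m$ is an isomorphism. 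Hence $g = m \circ e \in \Ec$ (using that isomorphisms belong to $\Ec$ and $\Ec$ is closed under composition). The dual claim for $\Mc$ follows by reversing the roles of epi and mono throughout.

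For item $(3)$, let $q : A \to B$ coequalise a pair $f, g : X \to A$. Coequalisers are epimorphisms. Factor $q = m \circ e$ with $e \in \Ec$ and $m \in \Mc$. The equation $m \circ e \circ f = q \circ f = q \circ g = m \circ e \circ g$ together with $m$ being a monomorphism gives $e \circ f = e \circ g$, so the universal property of the coequaliser produces a unique $h : B \to \mathrm{cod}(e)$ with $h \circ q = e$. Then $m \circ h \circ q = m \circ e = q = \id_B \circ q$, and since $q$ is epi, $m \circ h = \id_B$. As before, mono-ness of $m$ promotes this right inverse to a two-sided inverse, so $m$ is an isomorphism and $q \in \Ec$. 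The claim that $\Mc$ contains all equalisers is dual.

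The main subtlety in each case is the move from \emph{one-sided} to \emph{two-sided} inverse; this is precisely where properness enters, since without the monicity (or epicness, in the dual) of the $\Mc$-part the orthogonality argument only delivers a section. Once that point is isolated, the three statements become essentially the same argument applied in slightly different configurations.
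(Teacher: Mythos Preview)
The paper does not prove this lemma at all: it is stated as a standard fact with a citation to Freyd and then used freely. Your argument is the standard one and is correct in outline for all three items.

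One small slip in item~(2): you assert that $e \circ f \in \Ec$ ``by closure under composition'', but $f$ is an arbitrary morphism, not assumed to lie in $\Ec$. Fortunately you never use this claim; the orthogonality square you actually form has $g \circ f \in \Ec$ as the top map (the hypothesis) and $m \in \Mc$ on the bottom, with $e \circ f$ and $\id_C$ as the vertical legs, and the filler $d$ with $m \circ d = \id_C$ is exactly what you obtain. So simply delete the clause about $e \circ f \in \Ec$ and tidy the description of which edge is which in the square, and the proof goes through unchanged.
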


% \begin{theorem}
% $\comonad{E}$, $\comonad{P}$, $\comonad{M}$,
% $\Ek$, $\Pk$ and~$\Mk$ are well defined comonads on~$\Rel$ where~$\sg$ may contain finitely many constant symbols.
% \end{theorem}

We confirm the following generalization to signatures with constants of results in~\cite{abramsky2021relating}.
\StandardComonadsWithConstants*
\begin{proof}
\newcommand{\PN}{\comonad{P}_{\nats}}
\newcommand{\PmN}{\comonad{P}_{m + \nats}}
    In the following $\nats$ denotes the set of natural numbers, \emph{not} a comonad.
    Fix a signature~$\sg$, potentially containing constant symbols.
    In~\cite{AbramskyDW17, abramsky2021relating} a pebbling comonad~$\Pk$ for signatures without constants is defined. In the proofs, the bound on the number of pebbles plays no essential role, so we
    can equally define a comonad $\PN$ on $\R(\rel{\sg})$, with a pebble for each natural number. 
    
    The universe of~$\PN(\As)$ is~$(\nats \times \As)^+$ with $p \in \nats$. The counit extracts the second element of the tail of its input list.
    For~$R \in \rel{\sg}$ of arity~$n$, $R^{\PN(\As)}(s_1,\ldots,s_n)$ if:
    \begin{enumerate}
        \item The~$s_i$ are pairwise comparable in the prefix order.
        \item If~$s_i < s_j$, then the pebble index of the tail of~$s_i$ does not appear in the suffix of~$s_i$ in~$s_j$.
        \item $R^{\As}(\counit(s_1),\ldots,\counit(s_n))$
    \end{enumerate}
    The coextension of~$h : \PN(\As) \rightarrow \Bs$ is given by:
    \begin{align*}
    h^*(&[(p_0, a_0),\ldots,(p_n, a_n)]) := \\
    &[(p_0, [(p_0,a_0)]),\ldots, (p_n, [(p_0,a_0),\ldots,(p_n, a_n)])]
    \end{align*}
    
    Assuming~$\const{\sg}$ has $m$ elements, with a specified linear ordering $c_0,\ldots,c_{m - 1}$, we can restrict~$\PN$ to a comonad~$\PmN$, on~$\R(\rel{\sg})$, by restricting to lists:
    \begin{enumerate}
        \item With length greater than~$m$.
        \item With the $m$ element prefix of each list being \[ [(0,c^\As_0),\ldots,(m - 1, c^\As_{m - 1})] \]
        \item The pebble indices in positions~$m$ and above are greater than or equal to~$m$.
    \end{enumerate}
    We then observe that this yields a comonad on~$\Rel$ if we define for~$0 \leq i < m$:
    \[ c^{\PmN(\As)_i} := [(0,c_0^\As,\ldots,(i,c_i^\As)] \]
    It is easy to confirm the counit and coextensions preserve the constant elements.
    
    We then define:
    \begin{itemize}
        \item A comonad~$\comonad{E}$ as the restriction of~$\PmN$ to lists of the form:
        \[ [(0,a_0),(1,a_1),\ldots,(n,a_n)] \]
        \item A comonad~$\comonad{M}$ as the restriction of~$\comonad{P}_{\nats}$ to lists of the form:
        \[ [(0,a_0),(1,a_1),(0,a_2),\ldots,(\operatorname{parity}(n),a_n)] \]
        such that~$E^{\As}(a_i,a_{i + 1})$ and~$a_0 = c^\As_0$.
        \item A comonad~$\comonad{P}$ as being~$\PmN$.
    \end{itemize}
    Then, up to isomorphism as comonads:
    \begin{itemize}
        \item The comonad~$\Ek$ is given by restricting~$\comonad{E}$ to lists of at most~$m + k$ elements.
        \item The comonad~$\Mk$ is given by restricting~$\comonad{M}$ to lists of at most~$1 + k$ elements.
        \item The comonad~$\Pk$ is given by restricting~$\comonad{P}$ to lists with pebble indices at most~$m + k$.
    \end{itemize}
    In each case, it is routine to verify that coextensions preserve the restricted class of lists.
\end{proof}

The following is a standard result about lifting factorization systems to Eilenberg-Moore categories.
\EMFactorization*
\begin{proof}[Proof sketch.]
    It is clear that both $\ol\Ec$ and $\ol\Mc$ contain all isomorphisms and are closed under compositions. The factorisation of a coalgebra morphism $h\colon (A,\alpha) \to (B,\beta)$ is computed as the $(\Ec, \Mc)$-factorisation of the underlying morphism into a quotient $e$ and an embedding $m$ as shown below.
    \[
        \begin{tikzcd}
            A \rar[->>]{e} \dar{\alpha} & C \rar[>->]{m}\dar[dashed]{\gamma} & B \dar{\beta} \\
            \C(A) \rar{\C(e)} & \C(C) \rar[>->]{\C(m)} & \C(B)
        \end{tikzcd}
    \]
    Since $\C$ restricts to embeddings, $\C(m)$ is an embedding and so there exists a diagonal filler $\gamma$. It is immediate to check that $\gamma$ is a $\C$-coalgebra.

    Next, assume we have a commutative square of coalgebra morphisms, with $e\in \ol \Ec$ and $m\in \ol\Mc$.
    \[
        \begin{tikzcd}
            (A,\alpha) \rar[->>]{e}\dar[swap]{u} & (B,\beta) \dar{v} \ar[swap,dashed]{ld}{d} \\
            (A',\alpha') \rar[>->]{m} & (B',\beta')
        \end{tikzcd}
    \]
    By orthogonality of $e$ and $m$, there exists a unique diagonal filler $d$. It is immediate to show that $d$ is a coalgebra morphism $(B,\beta) \to (A',\alpha')$ from the fact that $e$ is orthogonal to $\C(m)$.

    The ``furthermore'' part follows from the definitions.
\end{proof}

\subsection{Proofs for section~\ref{sec:motivating}}

\InterleavingSumsPreserveOPE*
\begin{proof}
    Recall from section 9 in \cite{abramsky2021relating} that $\EM\Ek$ can be identified with the category of $\sg$-structures equipped with forest order $\sqsubseteq$ of depth ${\leq}\, k$ such that, for any relation symbol $R\in \sg$, $R^\As(x_1,\dots,x_n)$ implies $x_i \sqsubseteq x_j$ or $x_j \sqsubseteq x_i$ \ ($\forall i, j \in \{1,\dots,n\}$). The morphisms in $\EM\Ek$ are then the morphisms of $\sg$-structures which preserve the covering relation $\prec$, where $x \prec y$ iff $x \sqsubseteq y$ and, for every $z$ such that $x \sqsubseteq z \sqsubseteq y$, either $z = x$ or $z = y$.

    It then follows from the definition of $\blift+$ (in terms of the equaliser) that $(\As,\sqsubseteq^\As) \blift+ (\Bs,\sqsubseteq^\Bs)$ consists of precisely the words $w\in \Ek(\As+\Bs)$ such that
    \begin{itemize}
        \item the restriction of $w$ to the sub-word consisting of only letters from $A$, is a path in $(\As,\sqsubseteq^\As)$ and
        \item similarly, the sub-word of $w$ with letters from $B$, is a path in $(\Bs,\sqsubseteq^\Bs)$
    \end{itemize}
    From this it is easy to see that, given open pathwise embeddings $f$ and $g$, also $f \blift+ g$ is an open pathwise-embedding, because checking openness and being a pathwise embeddings for $f \blift+ g$ is decomposed into the corresponding properties of $f$ and $g$.
\end{proof}

\subsection{Proofs for section~\ref{s:fvm-thm-from-klei-laws}}
We shall write~$g \klcomp f$ for composition in Kleisli categories.
The following result is well known if various forms, although full proofs are hard to find in the literature. We explicitly confirm the details for the multi-argument version that we require.
\KleisliCorrespondence*
\begin{proof}
\begingroup
\allowdisplaybreaks
Let~$\kappa : \D \circ \op \rightarrow \op \circ \prod_i \C_i$ be a Kleisli-law. For family of morphisms~$f_i : \C_i \As_i \rightarrow \Bs_i$, define~$\klop(\vec{f_i})$ as the composite:
\[
\begin{tikzcd}
    \D(H(\vec{A_i})) \rar{\kappa_{\As}} & \op(\vec{\C_i A_i}) \rar{\op(\vec{f_i})} & \op(\vec{B_i})
\end{tikzcd}
\]
That identities a preserved is immediate from~\eqref{ax:kl-law-counit}. For composition:
\begin{align*}
    \klop(\vec{g_i \klcomp f_i}) &= \klop(\vec{g_i \circ \C_i(f_i) \circ \delta})\\
    &= \op(\vec{g_i \circ \C_i(f_i) \circ \delta}) \circ \kappa\\
    &= \op(\vec{g_i}) \circ \op(\vec{\C_i(f_i)}) \circ \op(\delta) \circ \kappa\\
    &= \op(\vec{g_i}) \circ \op(\vec{\C_i(f_i)}) \circ \kappa \D(\kappa) \circ \kappa\\
    &= \op(\vec{g_i}) \circ \kappa \circ \D(\op(\vec{f_i})) \circ \D(\kappa) \circ \kappa\\
    &= \klop(\vec{g_i}) \klcomp \klop(\vec{f_i})
\end{align*}
The key step uses~\eqref{ax:kl-law-comultiplication}. To confirm~\eqref{eq:kleisli-lift}:
\begin{align*}
    \klop(\vec{\KLF{\C_i}(f_i)}) &= \klop(\vec{\KLF{\C_i}(f_i)})\\
    &= \klop(\vec{f_i \circ \counit})\\
    &= \op(\vec{f_i \circ \counit}) \circ \kappa\\
    &= \op(\vec{f_i}) \circ \op(\vec{\counit}) \circ \kappa\\
    &= \op(\vec{f_i}) \circ \counit\\
    &= \KLF{\D}(\op(\vec{f_i}))
\end{align*}

Now assume we have~$\op$ and~$\klop$ satisfying~\eqref{eq:kleisli-lift}. 
Note that for~$\As_i$ a~$\CC_i$-object, $\id_{\C_i \As_i}$ is a morphism in~$\Klei{\C_i}$. Define:
\begin{equation*}
    \kappa_{\As} := \klop(\vec{\id_{\C_i \As}})
\end{equation*}
Which by~\eqref{eq:kleisli-lift} must be a~$\CD$-morphism of type~$\D(\op(\vec{\As_i})) \rightarrow \op(\vec{\C_i(\As_i)})$. Noting~$\klop(\vec{f_i \circ \counit}) = \op(\vec{f_i}) \circ \counit$, we establish naturality as follows:
\begin{align*}
    \kappa \circ \D\op(\vec{f_i}) &= \klop(\vec{\id}) \circ \D\op(\vec{f_i})\\
    &= \klop(\vec{\id}) \circ \D\op(\vec{f_i}) \circ \D(\counit) \circ \delta\\
    &= \klop(\vec{\id}) \circ \D\op(\vec{f_i \circ \counit}) \circ \delta\\
    &= \klop(\vec{\id}) \klcomp \klop(\vec{f_i \circ \counit})\\
    &= \klop(\vec{\id \klcomp (f_i \circ \counit)})\\
    &= \klop(\vec{\id \circ \C_i(f_i) \circ \C_i(\counit) \circ \delta})\\
    &= \klop(\vec{\C_i(f_i)})\\
    &= \klop(\vec{\C_i(f_i) \circ \counit \circ \C_i(\id) \circ \delta})\\
    &= \klop(\vec{\C_i(f_i) \circ \counit)} \klcomp \klop(\id)\\
    &= \klop(\vec{\C_i(f_i)}) \circ \counit \klcomp \klop(\id)\\
    &= \op(\vec{\C_i(f_i)}) \circ \counit \circ \D\klop(\id) \circ \delta\\
    &= \op(\vec{\C_i(f_i)}) \circ \klop(\id) \circ \counit \circ \delta\\
    &= \op(\vec{\C_i(f_i)}) \circ \kappa
\end{align*}
For axiom~\eqref{ax:kl-law-counit}:
\begin{align*}
    \op(\counit) \circ \kappa &= \op(\counit) \circ \klop(\vec{\id})\\
    &= \op(\counit) \circ \klop(\vec{\id}) \circ \counit \circ \delta\\
    &= \op(\counit) \circ \counit \circ \D\klop(\vec{\id}) \circ \delta\\
    &= \klop(\counit \circ \counit) \circ \D\klop(\vec{\id}) \circ \delta\\
    &= \klop(\counit \circ \counit) \klcomp \klop(\vec{\id})\\
    &= \klop(\vec{\counit \circ \counit \klcomp \id})\\
    &= \klop(\vec{\counit \circ \counit \circ \C_i(\id) \circ \delta})\\
    &= \klop(\vec{\counit \circ \counit \circ \delta})\\
    &= \klop(\vec{\counit})\\
    &= \counit
\end{align*}
For axiom~\eqref{ax:kl-law-comultiplication}:
\begin{align*}
    \kappa \circ \D(\kappa) \circ \delta &= \klop(\vec{\id}) \circ \D\klop(\vec{\id}) \circ \delta \\
    &= \klop(\vec{\id}) \klcomp \klop(\vec{\id})\\
    &= \klop(\vec{\id \circ \id})\\
    &= \klop(\vec{\id \circ \C_i(\id) \circ \delta})\\
    &= \klop(\vec{\delta})\\
    &= \klop(\vec{\delta \circ \counit \circ \C_i(\id) \circ \delta})\\
    &= \klop(\vec{\delta \circ \counit \klcomp \id})\\
    &= \klop(\vec{\delta \circ \counit}) \klcomp \klop(\vec{\id})\\
    &= \klop(\vec{\delta \circ \counit}) \circ \D\klop(\vec{\id}) \circ \delta\\
    &= \op(\vec{\delta}) \circ \counit \circ \D\klop(\vec{\id}) \circ \delta\\
    &= \op(\vec{\delta}) \circ \klop(\vec{\id}) \circ \counit \circ \delta\\
    &= \op(\vec{\delta}) \circ \klop(\vec{\id})\\
    &= \op(\vec{\delta}) \circ \kappa
\end{align*}
Finally, we must check the two mappings exhibit a bijection. For~$\klop$ satisfying~\eqref{eq:kleisli-lift}:
\begin{align*}
    \op(\vec{f_i}) \circ \klop(\vec{\id}) &= \op(\vec{f_i}) \circ \klop(\vec{\id}) \circ \counit \circ \delta\\
    &= \op(\vec{f_i}) \circ \counit \D\klop(\vec{\id}) \circ \delta\\
    &= \klop(\vec{f_i \circ \counit}) \circ \D\klop(\vec{\id}) \circ \delta\\
    &= \klop(\vec{f_i \circ \counit}) \klcomp \klop(\vec{\id})\\
    &= \klop(\vec{(f_i \circ \counit) \klcomp \id})\\
    &= \klop(\vec{f_i \circ \counit \circ \C_i(\id) \circ \delta})\\
    &= \klop(\vec{f_i \circ \counit \circ \delta})\\
    &= \klop(\vec{f_i})
\end{align*}
For~$\klop$ induced by a Kleisli-law:
\begin{align*}
    \klop(\vec{\id}) &= \op(\vec{id}) \circ \kappa
    = \kappa
    \qedhere
\end{align*}
\endgroup
\end{proof}

\KleisliLogicalEquivalences*
\begin{proof}
For preservation of~$\leftrightarrows$, let~$\klop$ be the lifted functor induced by~$\kappa$. If for~$1 \leq i \leq n$, $\As_i \leftrightarrows_{\C_i} \Bs_i$, then equivalently there are $\Klei{\C_i}$-morphisms~$f_i : \KLF{\C_i}(\As_i) \rightarrow \KLF{\C_i}(\Bs_i)$ and~$g_i : \KLF{\C_i}(\Bs_i) \rightarrow \KLF{\C_i}(\As_i)$. Applying~$\klop$ yields $\Klei{\D}$-morphisms:
\begin{align*}
\klop(\vec{\KLF{\C_i}(\As_i)}) &\xrightarrow{\klop(\vec{f_i})} \klop(\vec{\KLF{\C_i}(B_i)}) \quad\mbox{ and} \\
\klop(\vec{\KLF{\C_i}(B_i)}) &\xrightarrow{\klop(\vec{f_i})} \klop(\vec{\KLF{\C_i}(\As_i)})
\end{align*}
Then we note that $\klop(\vec{\KLF{\C_i}(\As_i)}) = \KLF{\D}(\op(\vec{\As_i}))$ and similarly for the $B_i$'s, therefore~$\op(\vec{\As_i}) \leftrightarrows_{\D} \op(\vec{\Bs_i})$.

The argument for preservation of~$\cong_{\KleiEmpty}$ is similar, but in this case we work with isomorphisms, which are preserved by functoriality.
\end{proof}

\subsection{Proofs for section~\ref{s:lifting-ops-to-coalgs}}
The following result appears in various guises in the literature~\cite{jacobs1994semantics, seal2013}, although with varying assumptions. We provide a full proof at the level of generality we require.
\EMLifting*
\begin{proof}
We note that the parallel pair:
\[
    \begin{tikzcd}[column sep=2.5em]
        \EMF{\D}(\op(\vec{A_i}))
            \rar[yshift=0.5em]{\EMF{\D}(\kappa)\circ \delta}
            \rar[swap,yshift=-0.5em]{\EMF{\D}(\op(\vec{\alpha_i}))}
        & \EMF{\D}(\op(\vec{\C_i(\As_i)}))
    \end{tikzcd}
\]
is reflexive, with mutual post-inverse~$\D \op(\counit)$.

We define the action on morphisms of~$\lop$ by the universal property of equalizers, as in the following diagram:
\[
    \begin{tikzcd}[column sep=2.5em]
    \lop(\vec{\alpha_i}) \dar[dashed,swap]{\klop(f_i)} \rar{\iota_{\vec{\alpha_i}}} & \EMF{\D}(\op(\vec{\As_i})) \dar[swap]{\EMF{\D}(\op(\vec{f_i}))} & \\
        \lop(\vec{\beta_i})
            \rar[swap]{\iota_{\vec{\beta_i}}}
        & \EMF{\D}(\op(\vec{\Bs_i}))
            \rar[yshift=0.5em]{\EMF{\D}(\kappa)\circ \delta}
            \rar[swap,yshift=-0.5em]{\EMF{\D}(\op(\vec{\beta_i}))}
        & \EMF{\D}(\op(\vec{\C_i(\Bs_i)}))
    \end{tikzcd}
\]
The parallel pair is equalized by the following calculation:
\begin{align*}
    \EMF{\D}(\op(\vec{\beta_i})) \circ \EMF{\D}(\op(\vec{f_i})) \circ \iota_{\vec{\alpha_i}} &= \D\op(\vec{\beta_i \circ f_i}) \circ \iota_{\vec{\alpha_i}} \\
    &= \D\op(\vec{\C_i(f_i) \circ \alpha_i}) \circ \iota_{\vec{\alpha_i}}\\
    &= \D\op(\vec{\C_i(f_i)}) \circ \D\op(\vec{\alpha_i}) \circ \iota_{\vec{\alpha_i}}\\
    &= \D\op(\vec{\C_i(f_i)}) \circ \D(\kappa) \circ \delta \circ \iota_{\vec{\alpha_i}}\\
    &= \D(\op(\vec{\C_i(f_i)}) \circ \kappa) \circ \delta \circ \iota_{\vec{\alpha_i}}\\
    &= \D(\kappa \circ \D\op(\vec{f_i})) \circ \delta \circ \iota_{\vec{\alpha_i}}\\
    &= \D(\kappa) \circ \D^2(\op(\vec{f_i})) \circ \delta \circ \iota_{\vec{\alpha_i}}\\
    &= \D(\kappa) \circ \delta \circ \D(\op(\vec{f_i})) \circ \iota_{\vec{\alpha_i}}\\
    &= \EMF{\D}(\kappa) \circ \delta \circ \EMF{\D}(\op(\vec{f_i})) \circ \circ \iota_{\vec{\alpha_i}}
\end{align*}
As~$\id_{\lop(\vec{\alpha_i})}$ makes the following commute:
\[
    \begin{tikzcd}[column sep=2.5em]
    \lop(\vec{\alpha_i}) \dar[dashed,swap]{} \rar{\iota_{\vec{\alpha_i}}} & \EMF{\D}(\op(\vec{\As_i})) \dar{\EMF{\D}(\op(\vec{\id}))} \\
        \lop(\vec{\alpha_i})
            \rar[swap]{\iota_{\vec{\alpha_i}}}
        & \EMF{\D}(\op(\vec{\As_i})) 
    \end{tikzcd}
\]
By the equalizer universal property~$\lop(\vec{\id}) = \id$. The following diagram commutes:
\[
    \begin{tikzcd}[column sep=2.5em]
    \lop(\vec{\alpha_i}) \dar[dashed,swap]{\lop(\vec{f_i})} \rar{\iota_{\vec{\alpha_i}}} & \EMF{\D}(\op(\vec{\As_i})) \dar{\EMF{\D}(\op(\vec{f_i}))} \\
    \lop(\vec{\beta_i}) \dar[dashed, swap]{\lop(\vec{g_i})}
        \rar[swap]{\iota_{\vec{\beta_i}}} & 
        \EMF{\D}(\op(\vec{\Bs_i})) \dar{\EMF{\D}(\op(\vec{g_i}))}\\
    \lop(\vec{\gamma_i})
        \rar[swap]{\iota_{\vec{\gamma_i}}} & \EMF{\D}(\op(\vec{\Cs_i}))
    \end{tikzcd}
\]
Therefore by the universal property of equalizers, $\lop(\vec{g_i \circ f_i}) = \lop(\vec{g_i}) \circ \lop(\vec{f_i})$. Therefore~$\lop$ is a well-defined functor. Note also that by construction~$\iota$ is natural of type~$\lop \rightarrow \EMF{\D} \circ H \circ \prod_i \EMU{\C_i}$.

The composite~$\D(\kappa) \circ \delta$ is a morphism of type~$\EMF{\D}(\op(\vec{\As_i})) \rightarrow \EMF{\D}(\op(\vec{\C_i(\As_i)}))$ as:
\[ \delta \circ \D(\kappa) \circ \delta = \D^2(\kappa) \circ \delta \circ \delta = \D^2(\kappa) \circ \D(\delta) \circ \delta = \D(\D(\kappa) \circ \delta) \circ \delta \]
We also note~$\D(\kappa) \circ \delta$ equalizes $\EMF{\D}(\kappa) \circ \delta$ and~$\EMF{\D}(\op(\vec{\delta}))$ as:
\begin{align*}
    \D(\kappa) \circ \delta \circ \D(\kappa) \circ \delta &= \D(\kappa) \circ \D^2(\kappa) \circ \delta \circ \delta\\ 
    &= \D(\kappa) \circ \D^2(\kappa) \circ \D(\delta) \circ \delta \\
    &= \D\op(\vec{\delta}) \circ \D(\kappa) \circ \delta
\end{align*}
Assuming~$f$ equalizes~$\EMF{\D}{\kappa} \circ \delta$ and~$\EMF{\D}(\op(\vec{\delta}))$, then:
\begin{align*}
    \D(\kappa) \circ \delta \circ \D\op(\vec{\counit}) \circ f
    &= \D(\kappa) \circ \D^2(\op(\vec{\counit})) \circ \delta \circ f \\
    &= \D(\kappa \circ \D\op(\vec{\counit})) \circ \delta \circ f \\
    &= \D(\op(\C_i \counit) \circ \kappa) \circ \delta \circ f \\
    &= \D(\op(\C_i \counit)) \circ \D(\kappa) \circ \delta \circ f \\
    &= \D(\op(\C_i \counit)) \circ \D(\op(\vec{\delta})) \circ f \\
    &= f
\end{align*}
$\D(\kappa) \circ \delta$ is split mono as:
\[ \D\op(\vec{\counit})) \circ \D(\kappa) \circ \delta = \D(\counit) \circ \delta = \id \]
Therefore the following is an equalizer diagram:
\[
\begin{tikzcd}[column sep=2.5em]
        \lop(\vec{\delta})
            \rar[swap]{\D(\kappa) \circ \delta}
        & \EMF{\D}(\op(\vec{\C_i(\As_i)}))
            \rar[yshift=0.5em]{\EMF{\D}(\kappa)\circ \delta}
            \rar[swap,yshift=-0.5em]{\EMF{\D}(\op(\vec{\delta}))}
        & \EMF{\D}(\op(\vec{\C^2_i(\As_i)}))
    \end{tikzcd}
\]
with~$\D\op(\vec{\counit}) \circ f$ the unique fill-in for a given~$f$. Therefore there is an induced isomorphism:
\[ \lop(\vec{\EMF{\C}(\As_i)}) \xrightarrow{\tilde{\iota_{\vec{\delta}}}} \EMF{\D}(\op(\vec{\As_i})) . \]
Then we calculate:
\begin{align*}
    \D(\kappa) \circ \delta \circ \EMF{\D}(\op(\vec{f_i})) \circ \tilde{\iota} &= \delta \circ  \EMF{\D}(\op(\vec{\C_i(f_i)})) \circ \D(\kappa) \circ \delta \circ \tilde{\iota}\\
    &= \delta \circ  \EMF{\D}(\op(\vec{\C_i(f_i)})) \circ \iota\\
    &= \iota \circ \lop(\vec{\EMF{\C_i}(f_i)})\\
    &= \D(\kappa) \circ \delta \circ \tilde{\iota} \circ \lop(\vec{\EMF{\C_i}(f_i)})
\end{align*}
Therefore~$\tilde{\iota}$ is natural of type~$\lop \circ \prod_i \C_i \Rightarrow \EMF{\D} \circ H$.
\end{proof}

\EMequalisers*
\begin{proof}
    We prove the more general statement, when $\C$ is on a wide subcategory $\CC$ of $\Rel$. Let $(\Ec,\Mc)$ be the (usual) proper factorisation system on $\Rel$.

    We first observe that $(\Ec\cap \CC, \Mc\cap \CC) = (\Ec\cap \CC, \Mc)$ is a factorisation system on $\CC$. Let $f\in \CC$ and $m \circ e$ be is its factorisation in $\Rel$. Because $\CC$ contains embeddings, $m\in \CC$ and by the last assumption we also have that $e\in \CC$ as $f=m\circ e$ is in $\CC$. The diagonality condition of factorisation systems is immediate. Consequently, $\EM{\CC}$ is equipped with the factorisation system $(\ol{\Ec\cap \CC}, \ol \Mc)$ by lemma~\ref{l:lifting-EMfs}.

    Next, we observe that $\CC$ is closed under equalisers and coproducts in $\Rel$. For the former, let $A \xrightarrow{e} B \overset{f,g}{\rightrightarrows} C$ be an equaliser diagram in $\Rel$ with $f,g\in \CC$. By lemma~\ref{l:fs-basics}.3, $e\in \CC$. Furthermore, let $D \xrightarrow{h} B \overset{f,g}{\rightrightarrows} C$ be a commutative diagram in $\CC$ and $\ol h\colon D \to A$ the unique morphism such that $e \circ \ol h = h$. Then, $\ol h \in \CC$ by our last assumption, and therefore, $e$ also an equaliser morphism of $f,g$ in $\CC$.

    Observe that $\EM{\C}$ is $\ol\Mc$-well-powered because $\CC$ is $\Mc$-well-powered because $\Rel$ is as well. The rest follows from proposition~\ref{p:abstract-linton} below and from the fact that $\CC$ has coproducts and the forgetful functor $\EM{\C} \to \CC$ reflects them.
\end{proof}

The following is essentially an abstract form of Linton's theorem~\cite{linton1969coequalizers}, phrased using factorization systems to fit our needs.
\begin{proposition}
    \label{p:abstract-linton}
    If category~$\CC$:
    \begin{enumerate}
        \item Has a proper factorization system $(\Ec,\Mc)$.
        \item Is $\Mc$-well-powered.
        \item Has coproducts.
    \end{enumerate}
    Then~$\CC$ has equalizers.
\end{proposition}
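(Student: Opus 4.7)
The plan is to give a Linton-style construction of an equalizer from a coproduct of subobjects. Fix a parallel pair $f, g\colon A \rightrightarrows B$. By $\Mc$-well-poweredness, the collection of (isomorphism classes of) $\Mc$-subobjects $m_i\colon E_i \embed A$ satisfying $f \circ m_i = g \circ m_i$ forms a set; pick a representative for each class, indexed by $i \in I$. Form the coproduct $\coprod_{i \in I} E_i$ with injections $\iota_i$, and let $u\colon \coprod_{i \in I} E_i \to A$ be the mediating morphism with $u \circ \iota_i = m_i$. Factor $u = m \circ e$ using $(\Ec, \Mc)$, obtaining $e\colon \coprod E_i \twoheadrightarrow E$ in $\Ec$ and $m\colon E \embed A$ in $\Mc$. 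I claim that $m$ is the equalizer of $f$ and $g$.

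For the equalizing property, observe that $f \circ u \circ \iota_i = f \circ m_i = g \circ m_i = g \circ u \circ \iota_i$ for every $i$, so by the universal property of the coproduct, $f \circ u = g \circ u$, i.e., $f \circ m \circ e = g \circ m \circ e$. Because $(\Ec, \Mc)$ is proper, $e$ is an epimorphism, and cancelling yields $f \circ m = g \circ m$.

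For universality, let $h\colon X \to A$ satisfy $f \circ h = g \circ h$. Factor $h = m' \circ e'$ with $m'\colon X' \embed A$ in $\Mc$ and $e' \in \Ec$. Since $e'$ is epic (properness again), $f \circ m' = g \circ m'$, so $m'$ belongs to the collection used above; hence $m' = m_j$ for some index $j$, which gives $m' = u \circ \iota_j = m \circ (e \circ \iota_j)$. Setting $d := e \circ \iota_j \circ e'$ we obtain $h = m \circ d$. Uniqueness of $d$ follows because $m \in \Mc$ is a monomorphism.

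The only real subtlety is the well-poweredness hypothesis in step~(2), which is exactly what prevents the indexing class from being proper and thus guarantees the coproduct in step~(3) exists; the remaining work is routine and relies entirely on properness (to get the epi/mono cancellation) and on the universal properties of coproducts and $(\Ec, \Mc)$-factorizations.
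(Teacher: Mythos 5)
Your proof is correct and follows essentially the same route as the paper's: index the equalizing $\Mc$-subobjects using well-poweredness, copair them out of the coproduct, take the $(\Ec,\Mc)$-factorisation of the resulting map, and verify the universal property by factoring an arbitrary equalizing morphism and using that $\Ec$-maps are epi and $\Mc$-maps are mono. The only (harmless) imprecision is the claim that $m' = m_j$ on the nose rather than up to the isomorphism implicit in choosing subobject representatives; composing with that isomorphism gives the required fill-in, exactly as in the paper.
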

\begin{proof}
    Consider parallel pair:
    \[
    \begin{tikzcd}
    \As
    \rar[yshift=0.5em]{f}
    \rar[swap,yshift=-0.5em]{g}
    & \Bs
    \end{tikzcd}
    \]
    As~$\CC$ is well-powered and has a factorization system, there is a set~$I$ of isomorphism classes of embeddings~$m_i : \struct{M}_i \rightarrow \As$ equalizing~$f$ and~$g$. As
    \[ f \circ [m_i] = [f \circ m_i] = [g \circ m_i] = g \circ [m_i] \]
    the following commutes:
    \[
    \begin{tikzcd}
    \coprod_i \struct{M}_i \rar{[m_i]} \dar[swap]{[m_i]} & \As \dar{g} \\
    \As \rar[swap]{f} & \Bs
    \end{tikzcd}
    \]
    As~$\CC$ has a factorization system, the morphism~$[m_i]$ factors as:
    \[
    \begin{tikzcd}
    \coprod_i \struct{M}_i \rar[twoheadrightarrow]{e} & \struct{E} \rar[rightarrowtail]{m} & A 
    \end{tikzcd}
    \]
    Assume~$h : \Cs \rightarrow \As$ equalizes~$f$ and~$g$. There must exist~$i$ such that~$h$ factors as~$m_i \circ e'$. Let~$j$ be the corresponding coproduct injection. The following commutes:
    \[
    \begin{tikzcd}
    \coprod_i \struct{M}_i \rar[twoheadrightarrow]{e} & \struct{E} \rar[rightarrowtail]{m} & \As \rar[yshift=0.5em]{f} \rar[swap,yshift=-0.5em]{g}& \Bs \\
    \struct{M}_i \uar{j} & \Cs \lar{e'} \urar[swap]{h} & &
    \end{tikzcd}
    \]
    As the factorization system is proper, $m$ is a monomorphism, and so~$e \circ k \circ e'$ is the unique fill in map, and the following is an equalizer diagram:
    \[
    \begin{tikzcd}
    \struct{E} \rar[rightarrowtail]{m} &
    \As
    \rar[yshift=0.5em]{f}
    \rar[swap,yshift=-0.5em]{g}
    & \Bs
    \end{tikzcd}
    \qedhere
    \]
\end{proof}

\subsection{Proofs for section~\ref{s:smooth-ops}}

\SmoothKleisliPreservesEquivalence*
\begin{proof}
    For $i=1,\dots,n$, the relation $A_i \bisim_{\C_i} B_i$ is witnessed by a span of open pathwise-embeddings
    \[ \EMF{\C_i}(\As_i) \leftarrow R_i \rightarrow \EMF{\C_i}(B_i),\]
    in $\EM{\C_i}$. Then, by definition,
    \[
        \lop(\vec{\EMF{\C_i}(\As_i)}) \leftarrow \lop(\vec{R_i}) \rightarrow \lop(\vec{\EMF{\C_i}(B_i)})
    \]
    is a span of open pathwise-embeddings. Furthermore, by proposition~\ref{p:kl-law}, we get
    $\EMF{\C}(\op(\vec{A_i})) \cong \lop(\vec{\EMF{\C_i}(\As_i)})$ and, similarly, $\EMF{\C}(\op(\vec{B_i})) \cong \lop(\vec{\EMF{\C_i}(B_i)})$.
\end{proof}

\EkSmoothOpPreserve*
\begin{proof}
By proposition \ref{p:ef-logic-2}, $\As \equiv_{\FO_k} \Bs$ is characterized by the existence of a span of open pathwise embeddings  
    \[
        \EMF{\Ek}(\trI\As) \leftarrow R \rightarrow \EMF{\Ek}(\trI\Bs)
    \]
Therefore, it suffices to check that given an $\FO$-$\Ek$-smooth operation $\op\colon\prod_{i} R(\sg_i) \rightarrow R(\tau)$ and spans of open pathwise embeddings:
\[
        \EMF{\Ek}(\trI\As_i) \xleftarrow{f_i} R_i \xrightarrow{g_i} \EMF{\Ek}(\trI\Bs_i) \quad\text{in } \EM{\Ek} \text{ for $\Ek$ over $\R(\sg_i)$}
\]
for all $1 \leq i \leq n$, we can produce another span of open pathwise embeddings:
\begin{equation}
\label{eq:op-span} 
        \EMF{\Ek}(\trI(H(\vec{A_i})) \leftarrow R' \rightarrow \EMF{\Ek}(\trI(H(\vec{B_i})) \quad\text{in } \EM{\Ek} \text{ for $\Ek$ over $\R(\tau)$}
\end{equation}
By assumption, $H$ is a $\FO$-$\Ek$-smooth operation, so there exists a natural transformation $\theta\colon \op \circ \trI \rightarrow \trI \circ \op$ such that the components of $\EMF{\Ek}(\theta)$ are open pathwise embeddings. This allows us to compute (\ref{eq:op-span}) as the following composition in $\EM{\Ek}$ for $\Ek$ over $\R(\tau)$: 
\begin{equation}
\label{eq:op-span-comp} 
\begin{tikzcd}
    & \lop(\vec{R_i}) \ar[dl,"\lop(\vec{f_i})"'] \ar[dr,"\lop(\vec{g_i})"] & \\ 
    \lop(\vec{\EMF{\Ek}(\trI(\As_i))} \ar[d,"\cong"'] & & \lop(\vec{\EMF{\Ek}(\trI(\As_i))} \ar[d,"\cong"] \\
    \EMF{\Ek}(\op(\trI(\vec{\As_i})) \ar[d,"\EMF{\Ek}(\theta)"'] & & \EMF{\Ek}(\op(\trI(\vec{\Bs_i})) \ar[d,"\EMF{\Ek}(\theta)"] \\
    \EMF{\Ek}(\trI(\op(\vec{\As_i})) & &  \EMF{\Ek}(\trI(\op(\vec{\Bs_i}))
\end{tikzcd}
\end{equation}
By proposition \ref{p:smooth-Kleisli-law}, $\lop(\vec{f_i})$ and $\lop(\vec{g_i})$ are open pathwise embeddings. By proposition \ref{p:kl-law}, we get that $\lop(\vec{\EMF{\Ek}(\trI(\As_i))} \cong \EMF{\Ek}(\op(\trI(\vec{\As_i}))$ and $\lop(\vec{\EMF{\Ek}(\trI(\Bs_i))} \cong \EMF{\Ek}(\op(\trI(\vec{\Bs_i}))$. Moreover, $\EMF{\Ek}(\theta_{\vec{\As_i}})$ and $\EMF{\Ek}(\theta_{\vec{\Bs_i}})$ are open pathwise embedding. Since the composition of open pathwise embeddings is an open pathwise embedding \cite{AbramskyR21}, we conclude that the span in (\ref{eq:op-span-comp}) is a span of open pathwise embeddings as desired. 
\end{proof}

\subsection{Proofs for section~\ref{s:sufficient-axioms}}

\begin{lemma}% known also as (I1)
    \label{l:I1}
    If $e_1,\dots,e_n$ are embeddings in $\EM{\C_1}$, \dots, $\EM{\C_n}$, respectively, then so is $\lop(e_1,\dots,e_n)$.
\end{lemma}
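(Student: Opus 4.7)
The plan is to show $\lop(\vec{e_i}) \in \ol{\Mc}$ by working with the lifted factorisation system $(\ol{\Ec}, \ol{\Mc})$ on $\EM{\D}$. Recall from Lemma~\ref{l:lifting-EMfs} that $\ol{\Mc}$ consists of exactly those coalgebra morphisms whose underlying $\CD$-morphism lies in $\Mc$, and that $(\ol{\Ec}, \ol{\Mc})$ is again a proper factorisation system, so all the closure properties of Lemma~\ref{l:fs-basics} are available at the level of $\EM{\D}$.

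The two ingredients I would then assemble are as follows. First, by the very construction of $\lop$ in~\eqref{eq:op-lift}, the map $\iota_{\vec{\alpha_i}}\colon \lop(\vec{\alpha_i}) \to \EMF{\D}(\op(\vec{\As_i}))$ is an equaliser in $\EM{\D}$, and likewise for $\iota_{\vec{\beta_i}}$. Since $(\ol\Ec,\ol\Mc)$ is proper, Lemma~\ref{l:fs-basics}(3) tells us that equalisers lie in $\ol\Mc$; hence $\iota_{\vec{\alpha_i}}$ and $\iota_{\vec{\beta_i}}$ are embeddings of coalgebras. Second, axiom~\ref{ax:s1} yields that $\op(\vec{e_i})$ lies in $\Mc$; combined with the standing assumption that $\D$ preserves embeddings (Section~\ref{s:smooth-ops}), we deduce that $\EMF{\D}(\op(\vec{e_i})) \in \ol{\Mc}$.

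The naturality square used to define $\lop(\vec{e_i})$ in the proof of Proposition~\ref{p:kl-law} reads
\[
\iota_{\vec{\beta_i}} \circ \lop(\vec{e_i}) \;=\; \EMF{\D}(\op(\vec{e_i})) \circ \iota_{\vec{\alpha_i}}.
\]
The right-hand side is a composition of two $\ol\Mc$-morphisms, and hence is itself in $\ol\Mc$ by closure under composition. Applying Lemma~\ref{l:fs-basics}(2) to the resulting $\ol\Mc$-membership of $\iota_{\vec{\beta_i}} \circ \lop(\vec{e_i})$ then yields $\lop(\vec{e_i}) \in \ol\Mc$, as required. I do not anticipate any serious obstacle: the whole argument is a routine assembly of the universal property underlying $\lop$, axiom~\ref{ax:s1}, and the standard closure properties of a proper factorisation system, with the only care required being the consistent use of Lemma~\ref{l:lifting-EMfs} to pass between $\Mc$ and $\ol\Mc$.
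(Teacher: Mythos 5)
Your proof is correct and follows essentially the same route as the paper's: both use the defining square $\iota_{\vec{\beta_i}} \circ \lop(\vec{e_i}) = \EMF{\D}(\op(\vec{e_i})) \circ \iota_{\vec{\alpha_i}}$, observe that the right-hand side is in $\ol\Mc$ (the $\iota$'s being equalisers and $\EMF{\D}(\op(\vec{e_i}))$ an embedding via \ref{ax:s1} and preservation of embeddings by $\D$), and conclude by left-cancellation of $\Mc$-morphisms from lemma~\ref{l:fs-basics}(2). If anything, your write-up is more careful than the paper's, which elides the distinction between the equaliser morphism $\iota$ and $\lop(\vec{e_i})$ in invoking lemma~\ref{l:fs-basics}(3).
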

\begin{proof}
    Recall that $\lop(e_1,\dots,e_n)$ is defined by the universal property of equalisers, while making the following diagram commute.
    \[
        \begin{tikzcd}
            \lop(\vec{\Ac_i})\dar[swap]{\lop(\vec{e_i})} \rar{\iota_{\vec{\Ac_i}}} & \EMF{\D}(\vec{\As_i}) \dar{\EMF{\D}(\vec{e_i})} \\
            \lop(\vec{\Ac'_i}) \rar{\iota_{\vec{\Ac'_i}}} & \EMF{\D}(\vec{\As'_i})
        \end{tikzcd}
    \]
    From $e_1,\dots,e_n$ being embeddings we know that $\EMF{\D}(\vec{U(e_i)})$ is an embedding as well. Furthermore, by lemma~\ref{l:fs-basics}(3), $\lop(\vec{e_i})$ is an equaliser and hence an embedding too. Consequently, $\lop(e_1,\dots,e_n)$ is an embedding by lemma~\ref{l:fs-basics}(2).
\end{proof}

\PEPreserved*
\begin{proof}
    The first part of the statement is proved in lemma~\ref{l:I1} above. For the second part, assume $e\colon \Pc \embed \lop(\vec{\Ac_i})$ is a path embedding where, for $i=1,\dots,n$, $\Ac_i$ is the domain of $f_i\colon \Ac_i \to \Bc_i$. By \ref{ax:s2}, $e$ decomposes as $e_0\colon \Pc \embed \lop(\vec{\Pc_i})$ followed by $\lop(\vec{e_i})\colon \lop(\vec{\Pc_i}) \embed \lop(\vec{\Ac_i})$. Observe that, by lemma~\ref{l:fs-basics}(2), $e_0$ is an embedding because $e$ is. Further, since $f_i$ is a pathwise-embedding, for $i=1,\dots,n$, the morphism $f_i \circ e_i$ is an embedding. Therefore, by lemma~\ref{l:I1}, $\lop(\vec{f_i \circ e_i})$ is also an embedding. We obtain that the composite $\lop(\vec{f_i}) \circ e = \lop(\vec{f_i \circ e_i}) \circ e_0$ is an embedding because embeddings are closed under composition.
\end{proof}

\begin{restatable}{lemma}{LemmaIFour}% known also as (I4)
    \label{l:I4}
    Assume $f_i\colon \Ac_i \to \Bc_i$ in $\EM{\C_i}$ are pathwise-embeddings, for $i=1,\dots,n$. Then, for any path embeddings $e$ and $g$ making the diagram on the left below commute
    \[
    \begin{tikzcd}[ampersand replacement=\&]
        \& \Pc \ar[>->,swap]{dl}{e} \ar[>->]{dr}{g} \\
        \lop(\vec{\Ac_i}) \ar{rr}{\lop(\vec{f_i})} \& \& \lop(\vec{\Bc_i})
    \end{tikzcd}
    \qquad\qquad
    \qquad\qquad
    \begin{tikzcd}[ampersand replacement=\&]
        \Pc_i \rar{f'_i}\dar[swap,>->]{e_i} \& \Qc_i\dar[>->]{g_i}\\
        \Ac_i \rar{f_i} \& \Bc_i
    \end{tikzcd}
    \]
    there exist morphisms $f'_i\colon \Pc_i \to \Qc_i$, for $i=1,\dots,n$, such that
    the diagram on the right above commutes.
    % $g_i \circ f'_i = f_i \circ e_i$.
    Here, the $e_i$ and $g_i$ are the minimal embeddings such that $e$ and $g$ decompose through $\lop(\vec{e_i})$ and $\lop(\vec{g_i})$, respectively.
\end{restatable}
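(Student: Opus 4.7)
The plan is to apply the minimality condition of \ref{ax:s2} \emph{twice}, once to the decomposition of $g$ and once to the decomposition of $e$, in a back-and-forth manner, and to recover each $f'_i$ as a section of an auxiliary embedding produced by the first application.

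First, I will combine the hypothesis $g = \lop(\vec{f_i}) \circ e$ with the minimal decomposition $e = \lop(\vec{e_i}) \circ e_0$ to produce a second decomposition of $g$. Functoriality of $\lop$ gives $g = \lop(\vec{f_i \circ e_i}) \circ e_0$, where each $f_i \circ e_i\colon \Pc_i \to \Bc_i$ is an embedding (since $f_i$ is a pathwise-embedding and $e_i$ a path embedding) whose domain $\Pc_i$ is a path, so it is itself a path embedding. The minimality of $\vec{g_i}$ then supplies unique morphisms $h_i\colon \Qc_i \to \Pc_i$ with $g_i = (f_i \circ e_i) \circ h_i$. Since both $g_i$ and $f_i \circ e_i$ lie in $\Mc$, lemma \ref{l:fs-basics}(2) forces $h_i$ to be an embedding as well.

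Next I bootstrap this data into an alternative decomposition of $e$. By proposition \ref{p:pe-preserved} applied coordinatewise to the embeddings $f_i \circ e_i$, the morphism $\lop(\vec{f_i \circ e_i})$ is an embedding, hence monic (the factorisation system being proper). The chain $\lop(\vec{f_i \circ e_i}) \circ \lop(\vec{h_i}) \circ g_0 = \lop(\vec{g_i}) \circ g_0 = g = \lop(\vec{f_i \circ e_i}) \circ e_0$ then cancels down to $\lop(\vec{h_i}) \circ g_0 = e_0$, from which $e = \lop(\vec{e_i}) \circ e_0 = \lop(\vec{e_i \circ h_i}) \circ g_0$ is a decomposition of $e$ through the path embeddings $e_i \circ h_i \colon \Qc_i \embed \Ac_i$. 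Applying minimality of $\vec{e_i}$ to this new decomposition yields unique $f'_i\colon \Pc_i \to \Qc_i$ with $e_i = (e_i \circ h_i) \circ f'_i$. Monicity of $e_i$ then forces $h_i \circ f'_i = \id_{\Pc_i}$, and the desired square commutes by the routine calculation $g_i \circ f'_i = (f_i \circ e_i) \circ h_i \circ f'_i = f_i \circ e_i$.

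The main conceptual obstacle is that a single appeal to minimality yields $h_i$ in the \emph{wrong} direction, namely $\Qc_i \to \Pc_i$; the required $f'_i\colon \Pc_i \to \Qc_i$ must instead be constructed as a section of $h_i$, which is exactly what a second appeal to minimality (this time for $e$) produces. The supporting technical point is verifying that $h_i$ is itself an embedding so that $e_i \circ h_i$ qualifies as a path embedding; this relies on the proper-factorisation-system axiom of lemma \ref{l:fs-basics}(2), together with proposition \ref{p:pe-preserved} which ensures $\lop$ preserves embeddings and hence can be cancelled as a monomorphism.
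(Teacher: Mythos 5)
Your proof is correct and follows essentially the same two-step minimality argument as the paper: first extract $h_i\colon \Qc_i \to \Pc_i$ from the alternative decomposition $\lop(\vec{f_i \circ e_i}) \circ e_0$ of $g$, cancel the monomorphism $\lop(\vec{f_i \circ e_i})$ to get $e_0 = \lop(\vec{h_i}) \circ g_0$, and then apply minimality of $\vec{e_i}$ to the induced decomposition of $e$ to obtain $f'_i$. Your added justifications (that each $h_i$ is an embedding via lemma~\ref{l:fs-basics}(2), and that $h_i \circ f'_i = \id$) are correct refinements of details the paper leaves implicit.
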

\begin{proof}
    % most likely to be moved to the appendix
    Let $e_0\colon \Pc \embed \lop(\vec{\Pc_i})$ be the embedding such that $\lop(\vec{e_i}) \circ e_0$ is the minimal decomposition of $e$. Since $\lop(\vec{f_i}) \circ e = \lop(\vec{f_i \circ e_i}) \circ e_0$ is another decomposition of $g$, there exists a morphism $l_i\colon \Qc_i \embed \Pc_i$ such that $g_i = f_i \circ e_i \circ l_i$, for $i=1,\dots,n$.

    Next, we observe that $e_0 = \lop(\vec{l_i}) \circ g_0$ where $g_0\colon \Pc \embed \lop(\vec{\Qc_i})$ is such that $\lop(\vec{g_i}) \circ g_0$ is the minimal decomposition of $g$. Observe that $\lop(\vec{f_i \circ e_i}) \circ e_0 = g = \lop(\vec{g_i}) \circ g_0 = \lop(\vec{f_i \circ e_i \circ l_i}) \circ g_0 = \lop(\vec{f_i \circ e_i}) \circ \lop(\vec{l_i}) \circ g_0$. Therefore, since $\lop(\vec{f_i \circ e_i})$ is an embedding (by lemma~\ref{l:I1}), we obtain $e_0 = \lop(\vec{l_i}) \circ g_0$.

    Consequently, we obtain another decomposition of $e$, given by $\lop(\vec{e_i \circ l_i}) \circ g_0$. By minimality $\vec{e_i}$, there exists $f'_i\colon \Pc_i \to \Qc_i$, for $i=1,\dots,n$, such that $e_i = e_i \circ l_i \circ f'_i$. Therefore, $f_i \circ e_i = f_i \circ e_i \circ l_i \circ f'_i = g_i \circ f'_i$, for $i=1,\dots,n$.
\end{proof}

\SmoothnessTheorem*
\begin{proof}
    % Note: we don't prove (I3) as a lemma, it follows immediately from (S2) and so we just call (S2) at the place where (I3) is needed
    Given open pathwise-embeddings $f_i\colon \Ac_i \to \Bc_i$ in $\EM{\C_i}$, for $i=1,\dots,n$, we need to check that $\lop(\vec{f_i})$ is open as we already know by proposition~\ref{p:pe-preserved} that it is a pathwise-embedding. Assume that the outer square of path embeddings in the diagram below commutes, with the left-most and right-most morphisms being the minimal decompositions of some path embeddings $\Pc \embed \lop(\vec{\Ac_i})$ and $\Qc \embed \lop(\vec{\Bc_i})$, respectively, by \ref{ax:s2}.
    \[
        \begin{tikzcd}[column sep=6.0em]
            \Pc
                \ar[>->]{rr}{h}
                \ar[>->,swap]{d}{e_0}
                \ar[>->]{dr}{e'_0}
            &
            & \Qc
                \ar[>->]{d}{g_0}
            \\
            \lop(\vec{\Pc_i})
                \ar[swap,>->]{d}{\lop(\vec{e_i})}
                \ar[dashed,swap]{r}{\lop(\vec{f'_i})}
            & \lop(\vec{\Pc'_i})
                \ar[sloped,>->,swap]{rd}{\lop(\vec{e'_i})}
                \rar[dashed]{\lop(\vec{h_i})}
            & \lop(\vec{\Qc_i})
                \ar[>->]{d}{\lop(\vec{g_i})}
            \\
            \lop(\vec{\Ac_i})
                \ar{rr}{\lop(\vec{f_i})}
            &
            & \lop(\vec{\Bc_i})
        \end{tikzcd}
    \]
    The path embedding $\lop(\vec{f_i\circ e_i}) \circ e_0$ has a minimal decomposition via $\lop(\vec{e'_i})\colon \lop(\vec{\Pc'_i}) \embed \lop(\vec{\Bc_i})$ as shown above. Then, by minimality of this decomposition and by lemma~\ref{l:I4}, for $i=1,\dots,n$, there exist $f'_i\colon \Pc_i \to \Pc'_i$ and $h_i\colon \Pc'_i \to \Qc_i$ such that $f_i \circ e_i = e'_i \circ f'_i$ and $e'_i = g_i \circ h_i$. Since $f_i$ is open and $f_i \circ e_i = g_i \circ h_i \circ f'_i$, there is a morphism $d_i\colon \Qc_i \to \Ac_i$ such that $d_i \circ h_i \circ f'_i = e_i$ and $g_i = d_i \circ f_i$. Finally, because the outer rectangle and the bottom rectangle commute and $\lop(\vec{g_i})$ is a mono, the top rectangle commute as well. Consequently, $\lop(\vec{d_i})\circ g_0\colon \Qc \to \lopveci\alpha$ is the required diagonal filler of the outer square.
\end{proof}

\subsection{Multilinear maps}
\label{s:multilin}

We introduce a class of maps between coalgebras of use when checking \ref{ax:s1} and \ref{ax:s2} axioms, and establish a connection to functors lifted to Eilenberg-Moore categories.

For a Kleisli-law~$\kappa$, $\D$-coalgebra $(\As,\alpha)$, and $\C_i$-coalgebras $(\Bs_i, \beta_i)$, we shall say that a $\CD$-morphism~$h : \As \rightarrow \op(\vec{\Bs_i})$ is a \df{multilinear map} $\Ac \to [\vec{\Bc_i}]$ if the following diagram commutes.

\[
    \begin{tikzcd}[->]
        \As \arrow[rr, "h"] \dar[swap]{\alpha} & & \op(\vec{\Bs_i}) \dar{\op(\vec{\beta_i})} \\
        \D(\As) \rar[swap]{\D(h)} & \D\op(\vec{\Bs_i}) \rar[swap]{\kappa} & \op(\vec{\C_i(\Bs_i)})
    \end{tikzcd}
\]

The following result appears in various guises in the literature for the dual setting of monads~\cite{jacobs1994semantics, seal2013}. We provide a full proof as similar results often appear restricted to the binary setting, and in relation to commutative monads. Our proof clarifies that the additional assumptions of a commutative (co)monad serve no role in the properties we require. We state the formulation suitable subsequent developments.

\begin{proposition}
    \label{p:multilin-maps-correspondence}
For a Kleisli-law~$\kappa$, let~$\iota$ be the universal equalizer morphism appearing in the construction of~$\lop$. The natural transformation:
    \[ \univ_{\vec{\Ac_i}} := \lop(\vec{\Ac_i}) \xrightarrow{\ee\iota} \D(\op(\vec{\As_i})) \xrightarrow{\ee\counit} \op(\vec{\As_i}) \]
is a \df{universal multilinear map} $\lop(\vec{\Ac_i}) \to [\vec{\Ac_i}]$.
    That is, any multilinear map~$f\colon \Bc \rightarrow [\vec{\Ac_i}]$ factors through $u$ via a unique ~$\D$-coalgebra morphism~$\lmulti f\colon \Bc \to \lop(\vec{\Ac_i})$.
    % That is, for every~$\D$-coalgebra $(\Bs,\beta)$ and , $f$ factors through~$u$ via a unique~$\D$-coalgebra morphism~$\lmulti f$ as in the diagram on the right.% in~$\CD$:
    % \begin{minipage}{0.7\linewidth}
    % \end{minipage}
    % \begin{minipage}{0.3\linewidth}
    %     \vspace{-0.8em}
    %     \hspace{1.0em}
    %     $
    %     \begin{tikzcd}[ampersand replacement=\&]
    %      \lop(\vec{\Ac_i}) \rar{\univ} \& H(\vec{\As_i}) \\
    %      \Bs \uar[dashed,pos=0.45]{\lmulti f\ } \urar[swap]{f}   
    %     \end{tikzcd}
    %     $
    % \end{minipage}
\end{proposition}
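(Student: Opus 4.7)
My plan is to prove the proposition in three steps: verify that $\univ$ is multilinear, construct a candidate factorization for an arbitrary multilinear map, and establish its uniqueness. Throughout, I will exploit the fact that the equalizer $\iota$ is constructed in $\EM{\D}$, so it is automatically a $\D$-coalgebra morphism, which endows $\lop(\vec{\Ac_i})$ with a canonical coalgebra structure $\gamma$ satisfying $\delta \circ \iota = \D(\iota) \circ \gamma$.

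To show $\univ = \counit \circ \iota$ is multilinear $\lop(\vec{\Ac_i}) \to [\vec{\Ac_i}]$, I will chase the two sides of the defining square. On one side, $\op(\vec{\alpha_i}) \circ \counit \circ \iota$ equals $\counit \circ \D(\op(\vec{\alpha_i})) \circ \iota$ by naturality of $\counit$, then becomes $\counit \circ \D(\kappa) \circ \delta \circ \iota$ by the equalizer property, and finally collapses to $\kappa \circ \iota$ using naturality of $\counit$ once more together with the comonad identity $\counit \circ \delta = \id$. On the other side, $\kappa \circ \D(\counit \circ \iota) \circ \gamma$ splits as $\kappa \circ \D(\counit) \circ \D(\iota) \circ \gamma = \kappa \circ \D(\counit) \circ \delta \circ \iota$, which reduces to $\kappa \circ \iota$ by the second comonad law $\D(\counit) \circ \delta = \id$.

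For the universal property, given a multilinear $f\colon \Bc \to [\vec{\Ac_i}]$ with $\Bc = (\Bs,\beta)$, I define $\lmulti f := \D(f) \circ \beta\colon \Bs \to \D\op(\vec{\As_i})$. The crucial check is that $\lmulti f$ equalizes the two parallel morphisms defining $\lop(\vec\alpha_i)$. Using naturality of $\delta$ followed by the coalgebra axiom $\delta \circ \beta = \D(\beta) \circ \beta$ rewrites $\D(\kappa) \circ \delta \circ \D(f) \circ \beta$ as $\D(\kappa \circ \D(f) \circ \beta) \circ \beta$, which equals $\D(\op(\vec{\alpha_i})) \circ \D(f) \circ \beta$ precisely by the multilinearity of $f$. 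Thus $\lmulti f$ factors uniquely through $\iota$, and a straightforward application of naturality of $\delta$ and the coalgebra axiom for $\beta$ shows this factorization is a $\D$-coalgebra morphism into the cofree coalgebra, hence into $\lop(\vec\alpha_i)$. Finally, $\counit \circ \lmulti f = \counit \circ \D(f) \circ \beta = f \circ \counit \circ \beta = f$, giving the required factorization.

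Uniqueness will follow from a short argument that also highlights the main (mild) conceptual point of the proof: if $g\colon \Bc \to \lop(\vec\alpha_i)$ is any coalgebra morphism with $\univ \circ g = f$, then $h := \iota \circ g$ is a coalgebra morphism $\Bs \to \D\op(\vec{\As_i})$ with $\counit \circ h = f$, so $h = \D(\counit) \circ \delta \circ h = \D(\counit \circ h) \circ \beta = \D(f) \circ \beta = \lmulti f$; since $\iota$ is mono, $g$ is determined. The whole argument is a sequence of standard diagram chases, so the only real obstacle is bookkeeping: keeping track of which identities rely on the comonad axioms, the coalgebra axioms for $\beta$, the equalizer property of $\iota$, and the multilinearity hypothesis, and ensuring that the implicit coalgebra structure on $\lop(\vec{\Ac_i})$ is used consistently on both sides.
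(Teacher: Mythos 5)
Your proof is correct and follows essentially the same route as the paper's: establish multilinearity of $\counit\circ\iota$ by the same diagram chase (both sides reducing to $\kappa\circ\iota$ via the comonad laws and the equalizer identity), obtain the factorization by applying the equalizer universal property to $\D(f)\circ\beta$, and get uniqueness from the cofreeness identity $h=\D(\counit\circ h)\circ\beta$ plus $\iota$ being mono. The only cosmetic difference is that the paper packages existence and uniqueness as a two-sided bijection between multilinear maps and coalgebra morphisms, whereas you argue existence and uniqueness separately; the content is identical.
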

\begin{proof}
We first establish~$\counit \circ \iota$ is multilinear. Let~$\omega$ be the structure map of~$\lop(\vec{\alpha_i})$
\begin{align*}
    \kappa \circ \D(\counit) \circ \D(\iota) \circ \omega &= \kappa \circ \D(\counit) \circ \delta \circ \iota \\
    &= \kappa \circ \counit \circ \delta \circ \iota \\
    &= \counit \circ \D(\kappa) \circ \delta \circ \iota\\
    &= \counit \circ \D\op(\vec{\alpha_i}) \circ \iota \\
    &= \op(\vec{\alpha_i}) \circ \counit \circ \iota
\end{align*}
Let~$(\Bs,\beta)$ be a coalgebra, and~$f : \beta \rightarrow [\vec{\alpha_i}]$ a multilinear map. Then~$\D(f) \circ \beta$ is a coalgebra morphism by:
\[ \D^2(f) \circ \D(\beta) \circ \beta = \D^2(f) \circ \delta \circ \beta = \delta \circ \D(f) \circ \beta . \]
Further, it equalizes~$\EMF{\D}(\op(\vec{\alpha_i}))$ and~$\EMF{\D}(\kappa) \circ \delta$, as:
\begin{align*}
    \D(\op(\vec{\alpha_i})) \circ \D(f) \circ \beta &= \D(\kappa) \circ \D^2(f) \circ \D(\beta) \circ \beta \\
    &= \D(\kappa) \circ \D^2(f) \circ \delta \circ \beta \\
    &= \D(\kappa) \circ \delta \circ \D(f) \circ \beta
\end{align*}
Let~$\tilde{f}$ be the unique coalgebra morphism given by the equalizer universal property by~$\D(f) \circ \beta$. 
Now if~$g : \beta \rightarrow \lop(\vec{\alpha_i})$, then~$\counit \circ \iota \circ g$ is a multilinear map as:
\begin{align*}
    \op(\vec{\alpha_i}) \circ \counit \circ \iota \circ g &= \kappa \circ \D(\counit \circ \iota) \circ \omega \circ g \\
    &= \kappa \circ \D(\counit \circ \iota) \circ \D(g) \circ \beta\\
    &= \kappa \circ \D(\counit \circ \iota \circ g) \circ \beta
\end{align*}
Then combining these two operations in one direction:
\[ \counit \circ \iota \circ \tilde{f} = \counit \circ \D(f) \circ \beta = f \circ \counit \circ \beta = f \]
In the other direction, we claim~$\widetilde{\counit \circ \iota \circ f} = f$. As:
\begin{align*}
    \D(\counit \circ \iota \circ f) \circ \beta &= \D(\counit) \circ \D(\iota) \circ \D(f) \circ \beta \\
    &= \D(\counit) \circ \D(\iota) \circ \omega \circ f\\
    &= \D(\counit) \circ \delta \circ \iota \circ f\\
    &= \iota \circ f
\end{align*}
The equalizer universal property then establishes the claimed equality. The two mappings then exhibit the required bijection.
\end{proof}

\begin{remark}
    % Our multilinear map terminology is connected to an instance of the dual construction in monad theory. A bilinear map, in the conventional sense, between vector spaces, is a function~$\As_1 \times \As_2 \rightarrow \Bs$ that is linear in each component separately. This bilinearity condition, and the corresponding universal tensor product can be phrased in terms of a Kleisli map. See for example the discussion in~\cite{jacobs1994semantics, seal2013}. 
    Our multilinear map terminology is connected to an instance of the dual construction in monad theory. See for example~\cite{jacobs1994semantics, seal2013}, where the corresponding binary notion (called ``bilinearity'') is phrased in terms of Kleisli maps.
    % We use the term multilinear, as we are not restricted to the binary setting.
\end{remark}

\begin{lemma}
    \label{l:multi-decomp}
    Given a multilinear map $f\colon \Ac \to [\vec{\Bc_i}]$ which decomposes as $f_0\colon \As \to H(\vec{\As_i})$ followed by $H(\vec{e_i})$, for some coalgebra embeddings $e_i\colon \Ac_i \embed \Bc_i$, for $1 \leq i \leq n$, the morphism $f_0$ is a multilinear map $\Ac \to [\vec{\Ac_i}]$.
\end{lemma}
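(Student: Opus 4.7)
The plan is to derive the multilinearity square for $f_0$ directly from the multilinearity square for~$f$: substitute the factorisation $f = \op(\vec{e_i}) \circ f_0$ into both sides of the defining equation, use the coalgebra-morphism property of the $e_i$ together with naturality of~$\kappa$ to produce a common left-composite on both sides, and finally cancel it.

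Concretely, I would start from the defining equation of multilinearity of $f$, namely $\op(\vec{\beta_i}) \circ f = \kappa \circ \D(f) \circ \alpha$, and substitute to obtain
\[
    \op(\vec{\beta_i}) \circ \op(\vec{e_i}) \circ f_0 \;=\; \kappa \circ \D(\op(\vec{e_i})) \circ \D(f_0) \circ \alpha.
\]
On the left, functoriality of $\op$ combines the first two factors as $\op(\vec{\beta_i \circ e_i}) \circ f_0$, and the coalgebra-morphism identity $\beta_i \circ e_i = \C_i(e_i) \circ \alpha_i$ (using that each $e_i : \Ac_i \embed \Bc_i$ is a coalgebra morphism) rewrites this as $\op(\vec{\C_i(e_i)}) \circ \op(\vec{\alpha_i}) \circ f_0$. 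On the right, functoriality of~$\D$ separates the two factors under $\D$, and naturality of $\kappa$ at the tuple $(e_i)$ moves $\op(\vec{\C_i(e_i)})$ to the outside, yielding $\op(\vec{\C_i(e_i)}) \circ \kappa \circ \D(f_0) \circ \alpha$. Hence both sides of the equation have $\op(\vec{\C_i(e_i)})$ as a common left-composite, and the desired multilinearity square for $f_0$, namely $\op(\vec{\alpha_i}) \circ f_0 = \kappa \circ \D(f_0) \circ \alpha$, is obtained as soon as this common factor can be cancelled.

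The only non-routine step, and the main obstacle, is precisely this cancellation, which requires $\op(\vec{\C_i(e_i)})$ to be monic. This is where the standing assumptions of the framework enter: each game comonad~$\C_i$ preserves embeddings, so each $\C_i(e_i)$ is an embedding; assuming further that $\op$ restricts to embeddings (axiom~\ref{ax:s1}), $\op(\vec{\C_i(e_i)})$ is an embedding in $\CD$, hence a monomorphism by properness of the factorisation system (see lemma~\ref{l:fs-basics}). This monicity hypothesis is in fact the same input that makes the decomposition of $f$ into $f_0$ and $\op(\vec{e_i})$ meaningful in the first place, so no stronger condition is needed beyond what the surrounding theory of section~\ref{s:sufficient-axioms} already provides.
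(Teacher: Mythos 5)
Your proof is correct and is essentially the paper's own argument, just written equationally rather than as a diagram chase: both substitute the factorisation into the multilinearity square for $f$, use naturality of $\kappa$ and the coalgebra-morphism identity for the $e_i$ to expose $\op(\vec{\C_i(e_i)})$ as a common left factor, and cancel it using that the comonads preserve embeddings and $\op$ restricts to embeddings, so this factor is a mono. No gaps.
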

\begin{proof}
    We show that the left oblong in the diagram below commutes.
    \[
    \begin{tikzcd}[column sep=3.5em]
        \As
            \rar{f_0}
            \ar{dd}{\alpha}
        & \opveci\As
            \rar{\opveci e}
            \ar{d}{\opveci\alpha}
        & \opveci\Bs
            \ar{d}{\opveci \beta}
        \\
        & \opvec{\D(\As_i)}
            \rar{\opvec{\D(e_i)}}
        & \opvec{\D(\Bs_i)}
        \\
        \D(P)
            \rar{\D(f_0)}
        & \D(\opveci \As)
            \uar{\kappa}
            \rar{\D(\opveci e)}
        & \D(\opveci \Bs)
            \uar{\kappa}
    \end{tikzcd}
    \]
    By multilinearity of $f$ we know that the outer square commutes. Further, the two squares on the right commute by naturality of $\kappa$ and the assumption that $e_i$, for $1\leq i \leq n$, is a coalgebra morphism. A simple diagram chasing implies that $\opvec{\D(e_i)} \circ \opveci\alpha \circ f_0 = \opvec{\D(e_i)}\circ \kappa \circ \D(f_0)\circ \alpha$. Since $\D$ and $H$ preserve embeddings, $\opvec{\D(e_i)}$ is an embedding and hence a mono, proving multilinearity of $f_0$.
\end{proof}

Next, we translate axiom \ref{ax:s2} in terms of multilinear maps, which makes it easier to work with.

\begin{restatable}{lemma}{StwoMultilinear}
    \label{l:s2-multilin}
    Assuming \ref{ax:s1} holds, then \ref{ax:s2} is equivalent to
\begin{axioms}
    \item[\em\textbf{\namedlabel{ax:s2p}{(S2')}}]
    Any multilinear map $f\colon \pi \to [\vec{\Ac_i}]$, such that $\lmulti f$ is a path embedding,
    % For any path embedding $e\colon (\Ps,\pi) \embed \lop(\vec{(A_i,\alpha_i}))$ the multilinear map (i.e.\ a morphism in $\CD$)
    %     \[  \Ps \xrightarrow{\ee e} \lop(\vec{\Ac_i}) \xrightarrow{\ee{\univ_{\vec{\alpha_i}}}} \op(\vec{A_i}) \]
    %    % \[  \Ps \xrightarrow{\ee e} \lop(\vec{\Ac_i}) \xrightarrow{\ee\iota} \D(\op(\vec{\Ac_i})) \xrightarrow{\ee\counit} \op(\vec{A_i}) \]
    has a minimal decomposition through $\op(\vec{e_i})\colon \op(\vec{\Pc_i}) \to \op(\vec{\Ac_i})$, for some path embeddings $e_i\colon \Pc_i \embed \Ac_i$, with $1 \leq i \leq n$.
\end{axioms}
\end{restatable}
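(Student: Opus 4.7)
The plan is to exploit the bijection of Proposition~\ref{p:multilin-maps-correspondence} between coalgebra morphisms $\Bc \to \lop(\vec{\Ac_i})$ and multilinear maps $\Bc \to [\vec{\Ac_i}]$, translating decompositions back and forth across it. The central computational step will be the naturality identity
$\univ_{\vec{\Ac_i}} \circ \lop(\vec{e_i}) = \op(\vec{e_i}) \circ \univ_{\vec{\Pc_i}}$
for coalgebra morphisms $e_i\colon \Pc_i \to \Ac_i$, which I will derive directly from naturality of $\iota$ together with naturality of $\counit$, using that $\univ = \counit \circ \iota$.

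For the forward direction \ref{ax:s2} $\Rightarrow$ \ref{ax:s2p}, I will start with a multilinear $f\colon \pi \to [\vec{\Ac_i}]$ such that $\lmulti f$ is a path embedding, and apply \ref{ax:s2} to $\lmulti f$ to obtain a minimal decomposition $\lmulti f = \lop(\vec{e_i}) \circ e_0$. Composing with $\univ$ and invoking the naturality identity yields $f = \op(\vec{e_i}) \circ (\univ \circ e_0)$, and Lemma~\ref{l:multi-decomp} (whose hypotheses are secured by \ref{ax:s1}) confirms that $\univ \circ e_0$ is multilinear of type $\pi \to [\vec{\Pc_i}]$. For minimality, any alternative decomposition $f = \op(\vec{g_i}) \circ g_0$ translates via $\lmulti{\cdot}$ into a second decomposition $\lmulti f = \lop(\vec{g_i}) \circ \lmulti{g_0}$, verified by post-composing with $\univ$ and appealing to the uniqueness clause of the correspondence; minimality at the level of $\lmulti f$ then produces the required fillers $h_i\colon \Pc_i \to \Qc_i$ with $e_i = g_i \circ h_i$.

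For the converse \ref{ax:s2p} $\Rightarrow$ \ref{ax:s2}, I will take a path embedding $e\colon \Pc \embed \lop(\vec{\Ac_i})$ and form the multilinear map $f = \univ \circ e\colon \pi \to [\vec{\Ac_i}]$, observing that $\lmulti f = e$ by uniqueness. Applying \ref{ax:s2p} provides a minimal decomposition $f = \op(\vec{e_i}) \circ f_0$, and then $\lmulti{\cdot}$ combined with the naturality identity delivers $e = \lop(\vec{e_i}) \circ \lmulti{f_0}$. Minimality transfers symmetrically: any alternative decomposition $e = \lop(\vec{g_i}) \circ g_0$ yields a multilinear decomposition $f = \op(\vec{g_i}) \circ (\univ \circ g_0)$, where $\univ \circ g_0$ is multilinear by Proposition~\ref{p:multilin-maps-correspondence}, and minimality of the $f$-decomposition supplies the required $h_i$.

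The main obstacle will be purely organisational: verifying that the bijection between coalgebra morphisms and multilinear maps really does respect $\op$-decompositions and that the resulting ``left halves'' automatically live in the appropriate category (multilinear on one side, coalgebra morphism on the other). Both checks reduce to the naturality identity above together with Lemma~\ref{l:multi-decomp}, so once these ingredients are isolated the equivalence becomes a routine diagram chase in each direction.
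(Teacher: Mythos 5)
Your proposal is correct and follows essentially the same route as the paper's proof: both directions translate decompositions across the universal multilinear map of Proposition~\ref{p:multilin-maps-correspondence}, using the naturality identity $\univ \circ \lop(\vec{e_i}) = \op(\vec{e_i}) \circ \univ$ and Lemma~\ref{l:multi-decomp} to ensure the left-hand factors are multilinear, with minimality transferred by the uniqueness clause of the correspondence. The only cosmetic difference is that in one spot you invoke Lemma~\ref{l:multi-decomp} where the paper can appeal directly to the fact (from Proposition~\ref{p:multilin-maps-correspondence}) that $\univ$ post-composed with any coalgebra morphism is multilinear; both are valid.
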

\begin{proof}
    % TODO add some diagrams to make it more readable?
    Let $e\colon \pi \embed \lop(\vec{\Ac_i})$ be a path embedding and let $f\colon \pi \to [\vec{\Ac_i}]$ be the corresponding multilinear map (i.e.\ $f = \univ \circ e$).

    Assume $f$ has a minimal decomposition $\op(\vec{e_i})\circ f_0$ for some $f_0\colon \Ps \to \op(\vec{\pi_i})$ and path embeddings $e_i\colon \pi_i \embed \Ac_i$, with $1 \leq i \leq n$. It follows that $f_0$ is a multilinear map $\Pc \to [\vec{\Pc_i}]$ (see lemma~\ref{l:multi-decomp}). Therefore, by proposition~\ref{p:multilin-maps-correspondence}, there exists a coalgebra morphism $e_0\colon \pi \to \lop(\vec{\pi_i})$ such that $\univ \circ e_0 = f_0$. Then, by naturality of $\univ$, $f = \op(\vec{e_i}) \circ f_0 = \op(\vec{e_i}) \circ \univ \circ e_0 = \univ \circ \lop(\vec{e_i}) \circ e_0$ which, by universality of $\univ$, yields that $\lop(\vec{e_i}) \circ e_0 = e$ (and whereby $e_0$ is an embedding by lemma~\ref{l:fs-basics}). Further, if $e$ decomposes as $g_0\colon \Pc \to \lop(\vec{\Qc_i})$ followed by $\lop(\vec{g_i})$, with embeddings $g_i\colon \Qc_i \embed \Ac_i$, for $1\leq i \leq n$, then by naturality of $\univ$, $f$ decomposes as $H(\vec{g_i})\circ \univ \circ g_0$. Then, by minimality of $\op(\vec{e_i})\circ f_0$, there exists $g'_i\colon \Pc_i \to \Qc_i$ such that $e_i = g_i \circ g'_i$, for $1 \leq i \leq n$.

    Conversely, assuming $e$ has a minimal decomposition $\lop(\vec{e_i})\circ e_0$ for some $e_0\colon \pi \embed \lop(\vec{\pi_i})$ and path embeddings $e_i\colon \pi_i \embed \Ac_i$, with $1 \leq i \leq n$. By naturality of $\univ$, we obtain a decomposition of $f$ as
    \[ f = \univ \circ \lop(\vec{e_i}) \circ e_0 = \op(\vec{e_i}) \circ f_0 \]
    where $f_0 = \univ \circ e_0\colon \Ps \to \op(\vec{\Ps_i})$. If there is another decomposition of $f$ as $g_0\circ \op(\vec{\Qs_i})$ followed by $H(\vec{g_i})$, for embeddings $g_i\colon \Qc_i \to \Ac_i$ (where $1 \leq i \leq n$), then by lemma~\ref{l:multi-decomp}, $g_0$ is multilinear and by proposition~\ref{p:multilin-maps-correspondence} there exists $h\colon \pi \to \lop(\vec{\Qc_i})$ such that $g_0 = \univ \circ h$. As before we see that $e = \lop(\vec{g_i}) \circ h$, yielding $g'_i\colon \Pc_i \to \Qc_i$ such that $e_i = g_i \circ g'_i$, for $1 \leq i \leq n$
\end{proof}

\subsection{Proofs for section~\ref{s:FO-example-edge-creation}}

\etaRiSmooth*
\begin{proof}
    Axioms \ref{ax:kl-law-counit} and \ref{ax:kl-law-comultiplication} follow automatically because the underlying function of $\kappa$ is the identity function.

    Next, we check that $\kappa$ satisfies the smoothness axioms \ref{ax:s1} and \ref{ax:s2p} from lemma~\ref{l:s2-multilin}.
For \ref{ax:s1}, we see that $\etaRi$ preserves embeddings $e:\As \rightarrow \Bs$, as it follows that $R^{\etaRi(\As)}(a_1,\dots,a_r)$ if and only if $R^{\etaRi(\Bs)}(\etaRi(e)(a_1),\dots,\etaRi(e)(a_r))$ from $R^{\As}(a_1,\dots,a_r) \Leftrightarrow R^{\Bs}(e(a_1),\dots,e(a_r))$ and also $P^{\As}_{i_z}(a_z) \Leftrightarrow P^{\Bs}_{i_z}(e(a_z))$ for all $z \leq r$.

For \ref{ax:s2p}, suppose $\pi\colon \Ps \to \Ek \Ps$ is a path $r_1 \sqsubset_\pi r_2 \sqsubset_\pi \dots \sqsubset_\pi r_n$ and $f\colon \Pc \to [\Ac]$ is a multilinear map. Because $\kappa$ is just the identity function, multilinearity of $f$ translates as
\begin{align}
    \alpha(f(r_i)) = \kappa([f(r_1),\dots,f(r_n)]) = [f(r_1),\dots,f(r_n)], 
\end{align}
for every  $i=1,\dots,n$. In other words, $f(r_1), \dots, f(r_n)$ is a path in $\sqsubseteq_\alpha$ (recall remark~\ref{r:coalg-order}). Therefore, the inclusion $j\colon \Ps' \to \As$ of the induced substructure on $\{ f(r_1), \dots, f(r_n) \}$ into $\As$, is a path embedding $\Pc' \to \Ac$ where $\pi'$ is the restriction of $\alpha$ to $\Ps$.
Further, the mapping $f$ factors as $\etaRi(j)\circ f_0$ where $f_0$ is a homomorphism because $f$ is and $j$ is an embedding.
Minimality is immediate from the fact that the image of $\Ps$ under $f$ is onto $\Ps'$.
\end{proof}

\subsection{Proofs for section~\ref{s:FO-example-coproducts}}

We first justify that $\Eknc$ is a comonad. Viewing $\Eknc(A)$ as an induced substructure of $\Ek(A)$, we see that the restriction of $\counit\colon \Ek(A) \to A$ to $\Eknc(A)$ is a morphism in $\Rnc(\sg)$. Further, defining coextensions $(-)^*$ as for $\Ek$ we see that, for any morphism $f\colon \Eknc(A) \to B$ in $\Rnc(\sg)$, the coextension $f^*$ is also a morphism in $\Rnc(\sg)$. The axioms \eqref{eq:comonad-axioms} hold for $(\Eknc, \counit, (-)^*)$ because they already hold for $\Ek$.

Observe that $\EM{\Eknc}$ has equalisers of reflexive pairs. This follows from lemma~\ref{l:EM-equalisers}, it is easy to see that its assumptions are satisfied for $\C = \Eknc$ and $\CC = \Rnc(\sg_3)$. Therefore, if $\kappa$ is a Kleisli law, the functor $\plus$ lifts to a functor $\blift{\plus}\colon \EM{\Eknc}\times \EM{\Eknc} \to \EM{\Eknc}$. The following shows that this is the case and that the lifted functor preserves open pathwise embeddings.

\KappaCoproducts*
\begin{proof}
    We check that $\kappa$ is a well-defined natural transformation. First, to see that each component of $\kappa$ is a well-defined $\sg$-structure homomorphism, assume that $(x_1,\dots,x_n)\in R^{\Eknc(A\plus B)}$ holds for some $R$ in $\sg$ and $x_1,\dots,x_n$ in $\Eknc(A\plus B)$. By definition of $\Eknc(A\plus B)$, $\counit(x_i) = \iota(y_i)$, for some $(y_1,\dots,y_n)\in R^\As$ or $(y_1,\dots,y_n)\in R^\Bs$. W.l.o.g.\ assume that the former is the case.
    By definition of $\kappa$, each $\kappa(x_i)$ is either $\ex_1(x_i)\in \Eknc(A)$ or a constant $c_i$. Therefore, $\kappa(x_i) = \inc(z_i)$ for some $z_i \in \Eknc(A)$ and, moreover, $\counit(z_i) = y_i$, for $1\leq i \leq n$. Finally, we see that non-constant $x_i$ are comparable in the prefix order in $\Eknc(A\plus B)$ and this property is preserved by $\ex_1(-)$. Therefore, $(\kappa(x_1),\dots,\kappa(x_n))\in R^{\Eknc(A) \plus \Eknc(B)}$.

    Naturality of $\kappa$ is immediate from the fact that morphisms in $\Rnc(\sg_1)$ and $\Rnc(\sg_1)$ send constant elements to constant elements and non-constant elements to non-constant elements.

    Next, we check the smoothness axioms. Observe that \ref{ax:kl-law-counit} and \ref{ax:s1} hold by definitions. Further, \ref{ax:kl-law-comultiplication} follows by an easy case analysis.
    For \ref{ax:s2p} from lemma~\ref{l:s2-multilin}, let $f\colon \Pc \to [\Ac,\Bc]$ be a multilinear map from a path $\pi$ in $\EM{\Eknc}$ and let $r_1 \sqsubset_\pi r_2 \sqsubset_\pi \dots \sqsubset_\pi r_n$ be the chain of non-constant elements in $\Pc$. Multilinearity of $f$ implies
    \begin{equation}
        \kappa([f(r_1),\dots,f(r_i)]) = (\alpha\plus\beta)(f(r_i)),\quad \text{for $i=1,\dots,n$.}
    \end{equation}
    In fact, this expresses precisely that if $f(r_i)\in \inc(\As)$ then $\ex_1([f(r_1),\dots,f(r_i)])$ is a path in $\sqsubseteq_\alpha$ and, dually, $f(r_i)\in \inc(\Bs)$ implies that $\ex_2([f(r_1),\dots,f(r_i)])$ is a path in $\sqsubseteq_\beta$. From this it immediately follows that $f$ factors through the embedding
    \[ e_1\plus e_2\colon \Ps \plus \Qs \embed \As \plus \Bs  \]
    where $e_1$ and $e_2$ are the path embeddings corresponding to the paths $\ex_1([f(r_1),\dots,f(r_n)])$ and $\ex_2([f(r_1),\dots,f(r_n)])$, respectively. Minimality of $\Ps$ and $\Qs$ is immediate.

    % \ref{ax:s1}, \ref{ax:s2p}
    We remark that $\kappa$ is smooth by the above discussion, theorem~\ref{t:smoothness} and lemma~\ref{l:s2-multilin}.
\end{proof}

\begin{lemma}
    \label{l:bisim-Eknc}
    For $\sg$-structures $A,B$,\ $\trI(A) \bisim_{\Eknc} \trI(B)$ iff $A \equiv_{\FO_k} B$
\end{lemma}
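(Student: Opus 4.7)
The plan is to reduce to proposition~\ref{p:ef-logic-2} (which gives $A \equiv_{\FO_k} B \iff \trI A \bisim_{\Ek} \trI B$), and then establish the equivalence $\trI A \bisim_{\Eknc} \trI B \iff \trI A \bisim_{\Ek} \trI B$ by transferring spans of open pathwise-embeddings between the two Eilenberg--Moore categories.

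I would first observe that the inclusion $\Eknc(X) \embed \Ek(X)$ extends to a comonad morphism and, moreover, the $\Ek$-coalgebra axioms force the structure map of any $\Ek$-coalgebra $\alpha\colon X \to \Ek(X)$ to factor through $\Eknc(X)$: writing $\alpha(x) = [x_1, \ldots, x_n]$, the axiom $\delta \circ \alpha = \Ek(\alpha) \circ \alpha$ makes $\alpha(x_i) = [x_1, \ldots, x_i]$ a word rather than a constant symbol, so each letter $x_i$ is non-constant in $X$. The cofree coalgebras themselves differ: $\EMF{\Ek}(\trI A)$ contains additional words in $\Ek(\trI A)$ with constant-interpretation letters that $\EMF{\Eknc}(\trI A)$ omits, and the two are related by the pathwise-embedding inclusion $\iota\colon \EMF{\Eknc}(\trI A) \embed \EMF{\Ek}(\trI A)$.

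For $(\Rightarrow)$ I would, given an $\Ek$-span $R$ with maps $f, g$, form the sub-coalgebra $R^*$ consisting of the constants of $R$ together with the elements $r$ such that both $f(r) \in \Eknc(\trI A)$ and $g(r) \in \Eknc(\trI B)$; this is ancestor-closed because prefixes of words without constant-interpretation letters have no such letters either, yielding an $\Eknc$-span. For $(\Leftarrow)$ I would dually, given an $\Eknc$-span, post-compose with $\iota$ and where necessary extend $R$ to cover $\Ek$-paths containing constant-interpretation letters, noting that these extensions are dictated (modulo $I$) by the existing constants of the span.

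The main obstacle is preserving openness under these transfers. The crux is that the equality relation $I$ on $\trI$-structures rigidifies constants: any word $w \in \Ek(\trI A)$ ending in a constant interpretation $c^A$ satisfies $I^{\Ek(\trI A)}(w, c)$ by equality of counits, and pathwise-embedding then reflects this through a path containing the relevant $r \in R$ with $f(r) = w$ to give $I^R(r, c^R)$. Hence every ``constant-interpretation'' element of the span is $I$-equivalent to a genuine constant of the span, so fillers furnished by openness in one category can be rerouted through these $I$-equivalent constants into the appropriate sub- or super-coalgebra of the other, preserving openness. The argument is in spirit the proof of proposition~\ref{p:ef-logic-2} from \cite{abramsky2021relating} with these observations handling the $\Eknc$-specialisation.
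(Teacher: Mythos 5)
Your proposal is correct in its essentials but takes a genuinely different route from the paper. The paper proves this lemma by re-running the game-theoretic argument: it establishes the many-sorted generalisation (proposition~\ref{prop:many-sorted-ek-bisim}), whose proof passes through the \ef{} game --- a span for $\Eknc$ corresponds to a Duplicator strategy in which constants are never played, and such a strategy extends uniquely to a full strategy because, in the presence of the equality relation $I$, moves on constant interpretations are forced. You instead prove the equivalence $\trI A \bisim_{\Ek} \trI B \iff \trI A \bisim_{\Eknc} \trI B$ directly at the level of spans, taking proposition~\ref{p:ef-logic-2} as a black box. Your key observations are the same ones the paper relies on: every $\Ek$-coalgebra structure map already factors through $\Eknc$ (the paper records this as the isomorphism $\EM{\Ek} \cong \EM{\Eknc}$), and $I$ rigidifies constants so that any word ending in $c^A$ is $I$-related to the formal constant $c$. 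Your restriction direction ($\Ek$-span to $\Eknc$-span via the sub-coalgebra $R^*$) is worked out correctly: the filler supplied by openness of $f$ lands in $R^*$ because a constant letter in $g(d(q))$ would, via reflection of $I$ along a path through the relevant ancestor and preservation of $I$ by $f$, force a constant letter into the constant-free word $f(d(q))$. The other direction --- extending an $\Eknc$-span to an $\Ek$-span by formally inserting constant-decorated elements --- is where the real work sits and you only gesture at it; this is precisely the step the paper discharges more cleanly by moving to strategies, where ``the extension is forced'' is immediate from the winning condition rather than requiring explicit span surgery and re-verification of openness. Your approach buys a self-contained coalgebraic statement ($\bisim_{\Ek}$ and $\bisim_{\Eknc}$ coincide over $\trI$-structures) that reuses proposition~\ref{p:ef-logic-2}; the paper's buys the many-sorted version it needs later for MSO, at the cost of redoing the game argument. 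I would accept your proof once the extension direction is carried out in full, or replaced by the game-strategy argument.
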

\begin{proof}
   The proof of proposition \ref{p:ef-logic-2} can be adapted to this statement. Moreover, we prove a general version for multi-sorted first order logic in proposition \ref{prop:many-sorted-ek-bisim}. 
\end{proof}

\FOlogicEquivCoproducts*
\begin{proof}
    We need to show that $\plus$, viewed as an operation $\R_0(\sg_1)\times \R_0(\sg_2) \to \R_0(\sg_3)$ is $\FO$-$\Eknc$-smooth. We defined $\Eknc$ as a comonad on $\Rnc(\sg)$, for any $\sg$, i.e.\ including $\sg = \sgI_1,\sgI_2$, or $\sgI_3$. Also, the usual $\trI\colon \R_0(\sg) \to \RelI$ factors as $\R_0(\sg) \to \Rnc(\sgI)$ followed by the inclusion $\Rnc(\sgI) \sue \RelI$. We need two natural transformations $\kappa$ and $\theta$ shown below.
    \begin{equation}
        \begin{tikzcd}[column sep=3.6em]
            \R_0(\sg_1) \times \R_0(\sg_1)
                \rar{\tr_1 \times \tr_2}
                \dar[swap]{\plus}
            & \Rnc(\sgI_1) \times \Rnc(\sgI_2)
                \dar[swap]{\plus\I}
                \ar[Rightarrow,swap,dl,"\theta"]
                \rar{\Eknc \times \Eknc}
            & \Rnc(\sgI_1) \times \Rnc(\sgI_2)
                \dar{\plus\I}
                \ar[Rightarrow,swap]{dl}{\kappa}
            \\
            \R_0(\sg_3)
                \rar[swap]{\tr_3}
            & \Rnc(\sgI_3)
                \rar[swap]{\Eknc}
            & \Rnc(\sgI_3)
        \end{tikzcd}
    \end{equation}
    Where $\plus\I$ is the $I$ version of $\plus$. It is easy to observe that $\trI(\As) \mathbin{\plus\I} \trI(\Bs) \cong \trI(\As \plus \Bs)$ and so $\theta$ is just this isomorphism, componentwise. Further, the $I$-version of $\kappa$ is smooth by lemma~\ref{l:kappa-coproducts}. 
    Therefore, by proposition~\ref{p:smooth-Kleisli-law}, if $\trI(A_i) \bisim_{\Eknc} \trI(B_i)$, for $i=1,2$, then also
    \[ \trI(A_1\plus A_2) \cong \trI(A_1)\plus\I\trI(A_2) \bisim_{\Eknc} \trI(B_1)\plus\I\trI(B_2) \cong \trI(B_1\plus B_2).\]
    This proves the statement as $\trI(A) \bisim_{\Eknc} \trI(B)$ is equivalent to $A \equiv_{\FO_k} B$, by lemma~\ref{l:bisim-Eknc}.
\end{proof}

\subsection{Proofs for section~\ref{s:Ek-multi-sort}}

\ManySortedEkIsComonad*
\begin{proof}
For $\As \in \RelS$, the morphism $\varepsilon_{\As}\colon\Ek\As \rightarrow \As$ is defined component-wise: 
\[ (\varepsilon_{\As})_s(c^{\Ek\As}) = c^{\As} \text{ for } c \in \consts{\sg}{s}\quad (\varepsilon_{\As})_s([a_1{:}s_1,\dots,a_n{:}s_n]) = a_n \text{ with } s = s_n\] 
For a $\sg$-morphism $f\colon\Ek\As \rightarrow \Bs$, the morphism $f^{*}\colon\Ek\As \rightarrow \Ek\Bs$ is defined component-wise:
\[ (f^{*})_s(c^{\Ek\As}) = c^{\Ek\Bs} \text{ for } c \in \consts{\sg}{s}\quad (f^{*})_s([a_1{:}s_1,\dots,a_n{:}s_n]) = [b_1{:}s_1,\dots,b_n{:}s_n]\] 
where $b_i = f_{s_i}([a_1{:}s_1,\dots,a_i{:}s_i])]$ for $1 \leq i \leq n$ and $s = s_n$.
Verifying that $\varepsilon_{\As}$ and $f^{*}$ are $\sg$-morphisms is straightforward application of the definitions. Moreover, verifying the axioms (\ref{eq:comonad-axioms}) is also routine. 

In order to show $\Eknc$ is comonad, we note that for all $\As \in \Rsnc(\sg)$, $\varepsilon_{\As}$ is a morphism in $\Rsnc(\sg)$ and that $f^{*}$ is a morphism in $\Rsnc(\sg)$ whenever $f$ is a morphism in $\Rsnc(\sg)$. 
\end{proof}

\ManySortedEkbisim*

We prove that bisimulation for the $\trJ$-lifting $S$-sorted $\Ek$ captures equivalence in $\FO_k(S,J)$, by first introducing the corresponding {\ef} game $\EF^{J}_k(\As,\Bs)$ and showing that this captures the equivalence in $\FO_k(S,J)$ by an adaptation of the standard single-sorted proof found in chapter three of \cite{libkin2004elements}. We then prove that the existence of a Duplicator winning strategy in $\EF^{J}_k(\As,\Bs)$ is equivalent to the existence of open pathwise embeddings $\EMF{\Ek} \trJ\As \leftarrow (W,\upsilon) \rightarrow \EMF{\Ek} \trJ\Bs$ by adapting the single-sorted proof found in \cite{abramsky2021relating}.
Later we explain how to adapt the proof to obtain also the second item of the statement.

Generalising the single-sorted case, every move of the $\EF^{J}_k(\As,\Bs)$ game results in a $S$-sorted relation $\gamma$ of type $A$ to $B$. A $S$-sorted relation $\gamma$ from $S$-sorted set $A$ to $S$-sorted set $B$ is an $S$-indexed family of sets $\gamma_{s} \subseteq A_s \times B_s$. Since the logic $\FO_k(S,J)$ has equality only for terms of sort $s \in J$, the winning condition of $\EF^{J}_k(\As,\Bs)$ requires a notion of $S$-sorted relation $\gamma$ which is only required to be single-valued, i.e.\ a partial function, on sorts $s \in J$. The following notion is used to define the winning moves in $\EF^{J}_k(\As,\Bs)$:
\begin{definition}
    Given $S$-sorted $\sg$ structures $\As$,$\Bs$ and $J \subseteq S$, a $S$-sorted relation $\gamma$ of type $A \times B$ is a \textit{partial isomorphism on $J$ from $\As$ to $\Bs$} if the following conditions hold:  
    \begin{itemize}
        \item If $c \in \consts{\sg}{s}$, then $(c^{\As},c^{\Bs}) \in \gamma_{s}$ 
        \item For all $s \in J$, $\gamma_{s}$ is a partial function from $A_s$ to $B_s$ 
        \item For every $R \in \sg$ of arity $\langle s_1,\dots,s_m \rangle$, 
        \[ R^{\As}(a_1,\dots,a_m) \Leftrightarrow R^{\Bs}(b_1,\dots,b_m) \]
        whenever $(a_1,b_1) \in \gamma_{s_1},\dots,(a_m,b_m) \in \gamma_{s_m}$ 
    \end{itemize}
\end{definition}

We are now able to introduce the $k$-round many-sorted {\ef} game:
 \begin{definition}
    Given $S$-sorted $\sg$ structures $\As$ and $\Bs$, the \textit{$k$-round $S$-sorted Ehrenfeucht-{\Fraisse} game $\EF^{J}_{k}(\As,\Bs)$ with equality on $J$} is played between two players, Spoiler and Duplicator. In each round $i \leq k$, 
    \begin{itemize}
        \item Spoiler chooses a sort $s_i \in S$ and element $a_i \in A_{s_i}$ or $b_i \in B_{s_i}$.
        \item Duplicator responds with an element $b_i \in B_{s_i}$  or $a_i \in A_{s_i}$ in the other structure's $s_i$ sort.
    \end{itemize}
    Duplicator wins round $i$, if the following $S$-sorted relation $\gamma$, defined componentwise,  
     \[ \gamma_{i,s} = \{(c^{\As},c^{\Bs}) \mid c \in \consts{\sg}{s}\} \cup \{(a_j,b_j) \mid \text{ for } s_j = s \text{ and } j \leq i\}\]
    is a partial isomorphism on $J$ from $\As$ to $\Bs$. \\
    
    \noindent Duplicator is said to have a \textit{winning strategy in $\EF^{J}_{k}(\As,\Bs)$} if for all possible Spoiler moves in the $k$-round game, Duplicator can win round $i$ for all $i \leq k$. 
\end{definition}   
As expected, the $\EF^{J}_k(\As,\Bs)$ game captures equivalence in $\FO_k(S,J)$. To prove this, we consider structures $(\As,a)$ in the expanded signature $\sg(c{:}s)$. The signature $\sg(c{:}s)$ has an additional fresh constant symbol $c$ added to $\consts{\sg}{s}$, the same constants as $\consts{\sg}{s'}$ for $s' \not= s$, and the same relation symbols as $\sg$. The notation $(\As,a)$ is the $\sg(c{:}s)$-structure where $\As$ is a $\sg$-structure and $c^{\As} = a \in A_s$. 
\begin{proposition}
\label{prop:bisim-game-logic}
The following are equivalent
\begin{enumerate}
    \item Duplicator has a winning strategy in $\EF^{J}_k(\As,\Bs)$ 
    \item $\As \equiv_{\FO_k(S,J)} \Bs$
\end{enumerate}
\end{proposition}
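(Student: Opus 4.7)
The plan is to adapt the standard single-sorted {\ef} proof (e.g.\ chapter 3 of \cite{libkin2004elements}) to the many-sorted setting with equality restricted to the sorts in $J$, by induction on the round count $k$. As in the classical case, the key technical device will be many-sorted Hintikka-style formulas (rank-$k$ types): for a tuple $\vec{a} = (a_1{:}s_1,\dots,a_m{:}s_m)$ in $\As$ we define $\chi^{0}_{\vec{a}}$ as the conjunction of all atomic $\sg$-formulas and their negations that are satisfied by $\vec{a}$ in $\As$, subject to the constraint that equality atoms $x_i = x_j$ are only included when $s_i = s_j \in J$; then $\chi^{k+1}_{\vec{a}}$ is built by adding, for each sort $s \in S$, a conjunct $\forall x{:}s.\, \bigvee_{a \in A_s} \chi^{k}_{\vec{a},a}$ together with $\bigwedge_{s\in S}\bigwedge_{a \in A_s} \exists x{:}s.\, \chi^{k}_{\vec{a},a}$. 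Sortedness of quantifiers is the only real difference from the single-sorted case. One then observes the standard finiteness-up-to-equivalence facts by induction.

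For the direction (1)\,$\Rightarrow$\,(2), I would argue the contrapositive by induction on formula structure, showing that if Duplicator wins $\EF^{J}_k(\As,\Bs)$ and $\varphi$ is any $\FO_k(S,J)$ sentence of quantifier rank ${\leq}\,k$ then $\As\models\varphi \Leftrightarrow \Bs\models\varphi$. The atomic case is immediate from the winning condition (partial isomorphism on $J$), respecting the restriction that equality atoms occur only at sorts in $J$. Boolean cases are routine. For $\exists x{:}s.\,\psi(x)$, if $\As\models\exists x{:}s.\,\psi(x)$ witnessed by $a \in A_s$, Spoiler plays $a$ in the first round; Duplicator's winning response $b \in B_s$ yields a winning strategy for the remaining $k-1$ rounds on $(\As,a)$ and $(\Bs,b)$ viewed as $\sg(c{:}s)$-structures, so by induction hypothesis applied to $\psi$ of rank ${\leq}\,k-1$ (treating the fresh constant $c$ as a term of sort $s$) we obtain $\Bs\models\psi(b)$ hence $\Bs\models\exists x{:}s.\,\psi(x)$.

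For (2)\,$\Rightarrow$\,(1), I would define Duplicator's strategy explicitly from the rank-$k$ types. Given the current partial isomorphism after $i$ rounds, producing tuples $\vec{a}$ in $\As$ and $\vec{b}$ in $\Bs$ with $\As \models \chi^{k-i}_{\vec{a}}[\vec{a}]$ and $\Bs \models \chi^{k-i}_{\vec{a}}[\vec{b}]$, when Spoiler plays $a \in A_s$ in the next round the universally quantified conjunct of $\chi^{k-i}_{\vec{a}}$ at sort $s$ forces the existence of $b \in B_s$ with $\Bs \models \chi^{k-i-1}_{\vec{a},a}[\vec{b},b]$; Duplicator responds with $b$, and the atomic content of $\chi^{0}_{\vec{a},a}$ ensures the updated relation remains a partial isomorphism on $J$. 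Spoiler moves in $\Bs$ are handled symmetrically using the existential conjunct. The base case $\As \equiv_{\FO_0(S,J)} \Bs$ ensures the initial empty-assignment partial isomorphism (carrying the constants $c^{\As},c^{\Bs}$) is well-defined.

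The main obstacle will be bookkeeping around sortedness and the partiality of equality: one must be careful that $\gamma_s$ is single-valued for $s \in J$ (which is exactly what the inclusion of equality atoms in $\chi^0$ guarantees for the in-$J$ sorts), while allowing $\gamma_s$ to be genuinely relational for $s \notin J$, and that the preservation of relations $R$ of mixed sort arity $\langle s_1,\dots,s_m\rangle$ falls out of the atomic conjuncts of $\chi^0$. Once these are carefully tracked, the finiteness of the set of rank-$k$ types up to equivalence (needed so that the universal conjuncts of $\chi^{k+1}$ are finite) goes through exactly as in the single-sorted case, since $S$ and $\sg$ are fixed.
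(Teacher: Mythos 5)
Your proposal is correct and follows essentially the same route as the paper: both directions go by induction on the round count $k$, with the existential step handled by having Spoiler play the witness and passing to the expanded signature $\sg(c{:}s)$, and with the atomic case read off from the partial-isomorphism-on-$J$ winning condition. The only cosmetic difference is in $(2)\Rightarrow(1)$, where you build explicit many-sorted Hintikka formulas and maintain a type invariant, while the paper instead quantifies over the conjunction of the finitely many inequivalent rank-$k$ formulas satisfied by Spoiler's chosen element; these are interchangeable presentations of the same finiteness-up-to-equivalence argument.
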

\begin{proof}
$(1) \Rightarrow (2)$ We prove the statement by induction on $k$. Suppose $k = 0$ and that Duplicator has a winning strategy in $\EF^{J}_{0}(\As,\Bs)$. By the winning condition, the relation $\gamma$, defined componentwise as ${\gamma_{s} = \{(c^{\As},c^{\Bs}) \mid c \in \consts{\sg}{s}\}}$ is a partial isomorphism. In particular, $\gamma$ is a partial isomorphism on $J$. Hence, from the definition of partial isomorphism, $\As$ and $\Bs$ satisfy the same atomic sentences. 

For the inductive step, suppose Duplicator has a winning strategy in $\EF^{J}_{k+1}(\As,\Bs)$ and consider the sentence $\phi$ of rank $k+1$. It follows from the definition of quantifier rank, that every sentence $\phi$ of rank $k+1$ is a boolean combination of sentences of the form $\exists y{:}s \psi(y{:}s)$ for sort $s \in S$ where $\psi$ is a formula of rank $\leq k$. Hence, by recursion on the construction of formulas $\phi(\vec{x})$, it suffices to only prove that $\As \vDash \phi \Leftrightarrow \Bs \vDash \phi$ for $\phi = \exists y{:}s \psi(y{:}s)$. 

Suppose $\As \vDash \exists y{:}s \psi(y{:}s)$, then there exists some $a \in A_s$, such that ${(\As,a) \vDash \psi(y{:}s)}$. Consider the $\EF^{J}_{k+1}(\As,\Bs)$ game where Spoiler plays $s \in S$ and $a \in A_s$ in the first round. Let Duplicator's response be $b \in B_s$ according to the winning strategy. Duplicator then has a winning strategy in the $\EF^{J}_{k}((\As,a),(\Bs,b))$ game by playing according the subsequent $k$ moves of the $\EF_{k+1}(\As,\Bs)$ game.  Hence, by the inductive hypothesis, we have that for all $\sg(c{:}s)$ sentences $\phi'$ of rank $\leq k$, $(\As,a) \vDash \phi' \Leftrightarrow (\Bs,b) \vDash \phi'$. In particular, $(\Bs,b) \vDash \psi(y{:}s)$. Hence, $\Bs \vDash \exists y{:}s \psi(y{:}s)$. To show the other direction, i.e. $\Bs \vDash \exists y{:}s \psi(y{:}s) \Rightarrow \As \vDash \exists y{:}s \psi(y{:}s)$, is essentially the same. 

$(2) \Rightarrow (1)$  We prove the statement by induction on $k$. For $k = 0$, $\As$ and $\Bs$ satisfying the same atomic sentences proves that $\gamma$, formed from just the pairs of constants, is a partial isomorphism on $J$. 

For the inductive step, assume $\As$ and $\Bs$ satisfy the same sentences of rank $k+1$. We need to show that Duplicator has a winning strategy in the $\EF_{k+1}(\As,\Bs)$ game. Suppose, in the first round of $k+1$, Spoiler chooses $s \in S$ and $a \in A_s$. There is finite collection of in-equivalent formulas $\varphi_{1},\dots,\varphi_{M}$ of rank $k$ with variable $x{:}s$ such that $(\As,a) \vDash \varphi_{i}(x{:}s)$. Hence, $\As \vDash \exists x{:}s(\bigwedge_{i \leq M} \varphi_{i}(x{:}s))$. The sentence $\exists x{:}s(\bigwedge_{i \leq M} \varphi_{i}(x{:}s))$ has rank $k+1$, so by the supposition, $\Bs \vDash \exists x{:}s(\bigwedge_{i \leq M} \varphi_{i}(x{:}s))$. Therefore, there exists a $b \in B_s$ witnessing the existential quantifier $\exists x{:}s$; and so Duplicator responds with this $b \in B_s$. By construction, ${(\As,a) \vDash \varphi(x{:}s) \Leftrightarrow (\Bs,b) \vDash \varphi(x{:}s)}$ for all $\sg(c)$-sentences of rank $\leq k$. By the inductive hypothesis, Duplicator has a winning strategy in the $\EF_{k}((\As,a),(\Bs,b))$ game. Therefore, Duplicator proceeding in the subsequent $k$ moves of $\EF_{k+1}(\As,\Bs)$ according to the moves of $\EF_{k}((\As,a),(\Bs,b))$ yields a winning strategy for $\EF_{k+1}(\As,\Bs)$. 
\end{proof}

\begin{proposition}
\label{prop:bisim-game-comonad}
The following are equivalent:
\begin{enumerate}
  \item Duplicator has a winning strategy in $\EF^{J}_k(\As,\Bs)$
  \item $\trJ\As \leftrightarrow_{\Ek} \trJ\Bs$  
\end{enumerate}
\end{proposition}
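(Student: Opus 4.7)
The plan is to adapt the single-sorted correspondence between winning Duplicator strategies and spans of open pathwise embeddings, as established in \cite{abramsky2021relating}, carefully tracking the role of sorts and the restricted equality $I_s$ (for $s \in J$) introduced by~$\trJ$. Throughout we view elements of $\EMF{\Ek}\trJ\As$ as sort-annotated words $[a_1{:}s_1,\dots,a_n{:}s_n]$, identifying a coalgebra over $\Rs(\sgJ)$ with a $\sg$-structure carrying a compatible forest order (an $S$-sorted variant of Remark~\ref{r:coalg-order}).

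For $(1)\Rightarrow(2)$, given a winning strategy~$\sigma$ for Duplicator, I would construct a $\sgJ$-coalgebra $(W,\upsilon)$ whose carrier consists of all plays consistent with $\sigma$, recorded as pairs $(u,v)$ of sort-matched words $u \in \EMF{\Ek}\trJ\As$ and $v \in \EMF{\Ek}\trJ\Bs$ of equal length with matching sort annotations, augmented by the interpretations of constants from $\const{\sg}$. The coalgebra structure $\upsilon$ is the prefix order; the $\sgJ$-structure on $W$ interprets each $R$ by the pullback of the $R$-interpretations on both sides, and for each $s \in J$ interprets $I_s$ as the equality that holds simultaneously in $u$ and $v$ (well-defined because the winning condition forces a partial isomorphism on~$J$, hence single-valuedness on~$J$). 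The two projections $f\colon W \to \EMF{\Ek}\trJ\As$ and $g\colon W \to \EMF{\Ek}\trJ\Bs$ are coalgebra morphisms by construction. To see that they are pathwise embeddings, note that any path embedding into~$W$ corresponds to a partial play, and the winning condition ensures relations (including~$I_s$ for $s \in J$) are reflected along $f$ and $g$. Openness is exactly the back-and-forth property of~$\sigma$: given a square with a path extension on one side, Spoiler's move in that extension is answered by~$\sigma$, providing the diagonal filler.

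For $(2)\Rightarrow(1)$, I would extract a winning strategy from a span $\EMF{\Ek}\trJ\As \xleftarrow{f} (W,\upsilon) \xrightarrow{g} \EMF{\Ek}\trJ\Bs$ by induction on rounds: maintain the invariant that after $i$ rounds we have a path embedding $\pi_i \embed (W,\upsilon)$ whose images under~$f$ and~$g$ encode the current play histories. When Spoiler plays $a_i \in A_{s_i}$, extend $\pi_i$ on the $\As$-side to a one-step longer path $\pi_i'$ and use the pathwise-embedding property of~$f$ to realize this extension; openness of~$g$ then produces a diagonal lift providing Duplicator's response $b_i \in B_{s_i}$ (with matching sort). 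Symmetric reasoning handles Spoiler choosing on the $\Bs$-side. The winning condition at the end reduces to $f$ and $g$ being pathwise embeddings for the $\sgJ$-structure: preservation and reflection of $R^{\trJ\As}$ and $R^{\trJ\Bs}$ gives the relational part of partial isomorphism, while reflection of $I_s$ (for $s \in J$) gives single-valuedness on~$J$. Constants are handled uniformly since both $f,g$ are coalgebra morphisms over $\sgJ$, hence preserve the constant interpretations built into $\EMF{\Ek}$.

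The main technical obstacle is the precise treatment of the $I_s$ relations: single-valuedness on~$J$ in the game must correspond exactly to the pathwise-reflection of~$I_s$, while the absence of~$I_s$ for $s \notin J$ must correspond to the game permitting $\gamma_s$ to be many-valued on those sorts. Once this is handled, the second item of the proposition follows by exactly the same argument carried out inside $\Rsnc(\sgJ)$: the additional hypothesis that $s \in J$ whenever $\consts{\sg}{s}\neq\emptyset$ guarantees that constants (which always appear in both sides of a play) live in sorts where $I_s$ is present, so the restriction to the non-constant subcomonad $\Eknc$ does not lose any partial-isomorphism data, and the plays constructed above stay within the non-constant fragment as required.
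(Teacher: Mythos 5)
Your proposal follows essentially the same route as the paper's proof: build the strategy coalgebra $W$ as the induced substructure of $\Ek\trJ\As \times \Ek\trJ\Bs$ on strategy-consistent pairs of sort-matched words (plus constants), verify that the two projections are open pathwise embeddings using the winning and back-and-forth conditions, and conversely extract a strategy by inductively extending path embeddings into $W$ via openness of the legs, with reflection of the $I_s$ relations ($s \in J$) supplying single-valuedness exactly as the paper does. One small slip: when Spoiler plays $a_i \in A_{s_i}$ it is openness of $f$ (the leg into $\EMF{\Ek}\trJ\As$) that produces the diagonal lift, after which $g$ reads off Duplicator's response — you have the roles of $f$ and $g$ reversed there, though your appeal to the symmetric case makes clear the intended argument is the right one.
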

\begin{proof}
$(1) \Rightarrow (2)$ Suppose Duplicator has a winning strategy in $\EF^{J}_k(\As,\Bs)$, this determines a $S$-sorted set $W$ with components $W_{s} \subseteq (\Ek\As)_{s} \times (\Ek \Bs)_{s}$ that satisfy the following conditions:
\begin{itemize}
    \item For all $c \in \const{\sg,s}$, $(c^{\Ek \As}, c^{\Ek \Bs}) \in W_s$
    \item (forth) For every pair $(w,v) \in W_s$ such that $w$ is a word of length $<k$ or a pair of empty words $(w,v)$ and $a \in A_{s'}$ (Spoiler's move), there exists a $b \in B_{s'}$ (Duplicator's response) such that $(w[a{:}s'],v[b{:}s']) \in W_{s'}$.
    \item (back) For every pair $(w,v) \in W_s$ such that $v$ is a word of length $< k$ or a pair of empty words $(w,v)$ and $b \in B_{s'}$, there exists a $a \in A_{s'}$ such that 
    $(w[a{:}s'],v[b{:}s']) \in W_{s'}$
    \item (winning) The relation $\gamma_{w,v}$ where $(\gamma_{w,v})_s$ consists of pairs $(a,b)$ such that $a{:}s$ and $b{:}s$ appear at the same index in $w$ and $v$ (respectively) is partial isomorphism on $J$ for every pair $(w,v)$ in $S$-sorted set $W$. 
\end{itemize}
Since $W$ is a $S$-sorted subset of $\Ek\trJ\As \times \Ek \trJ\Bs$, we can consider the induced substructure $W$ of this product structure. This structure $W$ has a natural coalgebra $\omega:W \rightarrow \Ek W$ where constants are mapped to constants and $\omega_{s}(w_n,v_n)$ is mapped to the sequence $[(w_1,v_1){:}s_1,\dots,(w_n,v_n){:}s_n]$ where $w_1 \sqsubset_{\delta_\As} \dots \sqsubset_{\delta_\As} w_n$ and $v_1 \sqsubset_{\delta_\Bs} \dots \sqsubset_{\delta_\Bs} v_n$ are the unique chains below in $w_n$ and $v_n$ in $\EMF{\Ek}\trJ\As$ and $\EMF{\Ek}\trJ\Bs$, respectively.  
The projection maps $p_1:(W,\omega) \rightarrow \EMF{\Ek}\trJ\As$ and $p_2:(W,\omega) \rightarrow \EMF{\Ek}\trJ\Bs$ are coalgebra morphisms, but we must show that these projection maps are also open pathwise embeddings. 

To verify $p_1:(W,\omega) \rightarrow \EMF{\Ek}\trJ\As$ is a pathwise embedding, we note that the image of path embedding $e:(P,\pi) \rightarrow (W,\omega)$ contains elements $(w_1,v_1){:}s_1,\dots,(w_n,v_n){:}s_n$ where $w_1 \sqsubset_{\delta_\As} \dots \sqsubset_{\delta_\As} w_n$ and $v_1 \sqsubset_{\delta_\Bs} \dots \sqsubset_{\delta_\Bs} v_n$ are chains in $\Ek\trJ\As$ and $\Ek\trJ\Bs$, respectively. As $w_1 \sqsubset \dots \sqsubset w_n$ is a path in $\Ek\trJ\As$, we have that $p_1 \circ e:(P,\pi) \rightarrow \Ek\trJ\As$ is a path embedding.

To verify $p_1:(W,\omega) \rightarrow \EMF{\Ek}\trJ\As$ is open, consider the following commuting diagram in $\EM{\Ek}$:
\begin{center}
\begin{tikzcd}
    (P,\pi) \ar[r,rightarrowtail] \ar[d,rightarrowtail,"e_0"'] & (Q,\chi) \ar[d,rightarrowtail,"e_1"] \\ 
    (W,\omega) \ar[r,"\pi_1"] & \EMF{\Ek}\trJ\As
\end{tikzcd}
\end{center}
where $e_0:(P,\pi) \rightarrowtail (W,\omega)$ and $e_1:(Q,\chi) \rightarrowtail \EMF{\Ek}\trJ\As$ are path embeddings. Suppose $w_1 \sqsubset_{\delta_{\As}} \dots \sqsubset_{\delta_{\As}} w_n$ is the image of path $Q$ presented as $r_1 \sqsubset_{\chi} \dots \sqsubset_{\chi} r_n$ under embedding $e_1$, then by the above diagram commuting $(w_1,v_1) \sqsubset_{\omega} \dots \sqsubset_{\omega} (w_i,v_i)$ for some $i \leq n$ and $v_1 \sqsubset_{\delta_{\Bs}} \dots \sqsubset_{\delta_{\Bs}} v_i$ in $\Ek\trJ\Bs$. We need to construct an embedding $d:(Q,\chi) \rightarrowtail (W,\omega)$ that `fills the diagonal'.  

Suppose $w_j = w_i[a_{i+1}{:}s_{i+1},\dots,a_j{:}s_j]$ such that $w_j$ are the elements in the image of $e_1$, but not in the image of $e_1 \circ e$. By repeated applications of the forth property, we obtain pairs $(w_j,v_j) \in W_s$ with $v_j = v_i[b_{i+1}{:}s_{i+1},\dots,b_j{:}s_j]$. Let $d(r_z) = (w_z,v_z){:}s_z$ for all $z \leq n$. By the winning condition on $W_s$, $d$ is indeed an embedding. Hence, the following diagram commutes: 
\begin{center}
\begin{tikzcd}
    (P,\pi) \ar[r,rightarrowtail] \ar[d,rightarrowtail,"e_0"'] & (Q,\chi) \ar[d,rightarrowtail,"e_1"] \ar[dl,rightarrowtail,"d"] \\ 
    (W,\omega) \ar[r,"p_1"'] & \EMF{\Ek}\trJ\As
\end{tikzcd}
\end{center}

A similar proof shows that $p_2:(W,\omega) \rightarrow \EMF{\Ek}\Bs$ is a open pathwise embedding.

$(2) \Rightarrow (1)$ Suppose we have span of open pathwise embeddings $\EMF{\Ek}\trJ\As \xleftarrow{h_1} (W,\omega) \xrightarrow{h_2} \EMF{\Ek}\trJ\Bs$. We prove the following statement by induction on $n \leq k$: 

    $(*)$ For every $s \in S$ and $b \in B_s$, there exists a path $P$ consisting of constants $\const{P}$ and a chain $r_1 \sqsubset_{\pi} \dots \sqsubset_{\pi} r_n$ with embedding $d:(P,\pi) \rightarrowtail (W,\omega)$ and $a \in A_s$ such that: 
    \begin{itemize}
        \item $\varepsilon_{\As}(h_{1,s}(d_s(r_n))) = a$ and $\varepsilon_{\Bs}(h_{2,s}(d_s(r_n))) = b$ 
        \item The $S$-sorted relation $\gamma$ defined componetwise: 
        \begin{align*}
        \gamma_{n,t} &= \{(c^{\As},c^{\Bs}) \mid c \in \consts{\sg}{t}\} \\
        &\cup \{(a_j,b_j) \mid \text{ for $r_j \in P_{t}$ and $j \leq n$} \}
        \end{align*}
        where $a_j = h_{1,t}(d_{t}(r_j))$ and $b_j = h_{2,t}(d_{t}(r_j))$
        is a partial isomorphism on $J$.
    \end{itemize}

For the base case $0$, since $h_1:(W,\omega) \rightarrow \EMF{\Ek}\trJ\As$ and $h_2:(W,\omega) \rightarrow \EMF{\Ek}\trJ\Bs$ are $\sigma$-morphism, $h_{1}$ and $h_{2}$ send constants to constants. Therefore, the $S$-sorted set given by $\gamma_{s} = \{(h_1(c^{R}),h_2(c^{R})) \mid c \in \consts{\sg}{s}\} = \{(c^{\As},c^{\Bs}) \mid c \in \consts{\sg}{s}\}$ is a partial isomorphism (and in particular, a partial isomorphism on $J$). We set $P$ to be the $0$-length chain containing only interpretations for the constants in $\consts{\sg}{s}$. 

For the inductive step, suppose $s \in S$ and $b \in B_s$. By the inductive hypothesis, there exists a path $(P,\pi)$ containing constants and a chain $r_1 \sqsubset_{\pi} \dots \sqsubset_{\pi} r_n$ with embedding $d:(P,\pi) \rightarrowtail (W,\omega)$. Consider the extension of this path to $(Q,\chi)$ where $Q_{s'} = P_{s'}$ for $s' \not= s$ and $Q_{s} = P_{s} \cup \{r_{n+1}\}$. We define $\chi$ componentwise where $\chi_{s'} = \pi_{s'}$ for all $s' \not=s$ and $\chi_{s}(r) = \pi_{s}(r)$ for $r \not= r_{n+1}$, $\chi_{s}(r_{n+1}) = [r_1{:}s_1,\dots,r_n{:}s_n,r_{n+1}{:}s]$. We can define an embedding $e:(Q,\chi) \rightarrowtail \EMF{\Ek}\trJ\Bs$ such that $e_{s'} = h_{2,s} \circ d_{s'}$ for $s' \not= s$, $e_{s}(r) = h_{2,s}(d_{s}(r))$ for all $r \not= r_{n+1}$ and $e_{s}(r_{n+1}) = h_{2,s_n}(d_{s_n}(r_{n}))[b{:}s]$. By construction, we have that the following diagram commutes:
\begin{center}
\begin{tikzcd}
    (P,\pi) \ar[r,rightarrowtail] \ar[d,rightarrowtail,"e_0"'] & (Q,\chi) \ar[d,rightarrowtail,"e_1"] \\ 
    (W,\omega) \ar[r,"h_2"'] & \EMF{\Ek}\trJ\Bs
\end{tikzcd}
\end{center}

Hence, by $h_{2}$ being a open map, there exists an embedding $f:(Q,\chi) \rightarrowtail (W,\omega)$. To verify that $\gamma_{n,t}$ is partial isomorphism on $J$, we note that by $h_1,h_2$ pathwise embeddings, we have that $h_1 \circ f$ and $h_2 \circ f$ are embeddings. Therefore:
\begin{align*}
R^{\trJ\As}(a_{i_1},\dots,a_{i_m}) &\Leftrightarrow R^{\trJ\As}(\varepsilon_{\As}(h_{1,s_1}(f(r_{i_1}))),\dots,\varepsilon_{\As}(h_{1,s_m}(f(r_{i_m})))) \\
&\Leftrightarrow R^{\Ek\trJ\As}(h_{1,s_1}(f(r_{i_1})),\dots,h_{1,s_m}(f(r_{i_m}))) \\
&\Leftrightarrow R^{Q}(r_{i_1},\dots,r_{i_m}) \\
&\Leftrightarrow R^{\Ek\trJ\Bs}(h_{2,s_1}(f(r_{i_1})),\dots,h_{2,s_m}(f(r_{i_m}))) \\
&\Leftrightarrow R^{\trJ\Bs}(\varepsilon_{\Bs}(h_{2,s_1}(f(r_{i_1}))),\dots,\varepsilon_{\Bs}(h_{2,s_m}(f(r_{i_m})))) \\
&\Leftrightarrow R^{\trJ\Bs}(b_{i_1},\dots,b_{i_m})
\end{align*}
for all relations of $R \in \sg^{J}$ of arity $R$. In particular, this equivalence holds for the relations $I_{s} \in \sg^{J}$. Since $I$ is interpreted as equality in $\trJ \As$ and $\trJ \Bs$, we have that $\gamma_{n,t}$ is a partial isomorphism on $J$. 

Now to show that Duplicator has winning strategy in $\EF_{k}(\As,\Bs)$, suppose at the $n$-th round Spoiler chooses $s$ and $b \in B_s$, then by $(*)$, Duplicator can respond with $a \in A_s$ producing a partial isomorphism $\gamma_{n}$ on $J$. If Spoiler chooses $s \in S$ and $a \in A_s$, we can prove a statement similar to $(*)$ which implies that Duplicator can respond with a $b \in B_s$ that results in a partial isomorphism on $J$.
\end{proof}

Proposition~\ref{prop:bisim-game-logic} and \ref{prop:bisim-game-comonad} conclude the proof of the first item of proposition~\ref{prop:many-sorted-ek-bisim}, demonstrating that $\equiv_{\FO_k(S,J)}$ is captured by the many-sorted $\Ek$ 

For the next part, we use the following fact, which allows us to reason about $\Eknc$-coalgebras the same way as about $\Ek$-coalgebras.

\begin{lemma}
    Given a many-sorted signature $\sg$, and the comonads $\Ek$ and $\Eknc$ on $\Rs(\sg)$ and $\Rsnc(\sg)$, respectively, the categories $\EM{\Ek}$ and $\EM{\Eknc}$ are isomorphic.
    
    Moreover, the factorisation systems (surjective, strong mono) on $\Rs(\sg)$ and $\Rsnc(\sg)$ lift to the identical factorisation system on both categories of coalgebras. 
\end{lemma}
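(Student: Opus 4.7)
The plan is to show that every $\Ek$-coalgebra structure $\alpha\colon A\to \Ek(A)$ automatically corestricts through the inclusion $\Eknc(A) \hookrightarrow \Ek(A)$, giving a bijection between $\Ek$- and $\Eknc$-coalgebras, and then extend this bijection to coalgebra morphisms. The identification on the level of factorisation systems will then follow since the two base factorisations agree on $\Rsnc(\sg)$.

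First I would observe that for any coalgebra $(A,\alpha)$ and any non-constant $a \in A_s$, the image $\alpha_s(a)$ cannot be a constant in $\Ek(A)$: by the counit axiom $\counit_A \circ \alpha = \id_A$, such an $\alpha_s(a)$ would force $a = \counit_A(\alpha_s(a))$ to be the corresponding constant. Since $\alpha$ preserves constants automatically as a $\sg$-morphism, this shows $\alpha$ sends non-constants to non-constants, so it corestricts to a morphism $A \to \Eknc(A)$ in $\Rsnc(\sg)$. Conversely, every $\Eknc$-coalgebra extends to an $\Ek$-coalgebra via the inclusion; the counit and comultiplication axioms transfer across in both directions because $\Eknc$ is closed under the counit and the coextension action, and the inclusion commutes with both.

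The same argument handles coalgebra morphisms: for $h\colon (A,\alpha) \to (B,\beta)$ in $\EM{\Ek}$ and a non-constant $a$, the list $\alpha_s(a)$ is non-constant, so $\Ek(h)(\alpha_s(a)) = \beta_s(h_s(a))$ is a non-constant list as $\Ek(h)$ acts componentwise on lists; hence $h_s(a)$ cannot be a constant in $B$, lest $\beta_s(h_s(a))$ be a constant. Thus every coalgebra morphism already lies in $\Rsnc(\sg)$, establishing the category isomorphism. For the factorisation systems, the $(\Ec,\Mc)$-factorisation of a $\Rsnc(\sg)$-morphism $h = m \circ e$ stays inside $\Rsnc(\sg)$: if $e_s(a)$ were a constant for non-constant $a$, then $h_s(a) = m_s(e_s(a))$ would also be a constant, contradicting $h \in \Rsnc(\sg)$. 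So the base factorisations on $\Rs(\sg)$ and $\Rsnc(\sg)$ coincide on $\Rsnc(\sg)$-morphisms, and applying lemma~\ref{l:lifting-EMfs} to each produces identical lifted systems under the isomorphism of categories.

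The main obstacle is simply the bookkeeping — ensuring the various notions of ``preserving constants'' align across $\Ek$, $\Eknc$, $\Rs(\sg)$, and $\Rsnc(\sg)$, and that the coalgebra axioms really do transfer back and forth along the inclusion $\Eknc \hookrightarrow \Ek$. No deep machinery is required; the whole argument is a careful case analysis, driven entirely by the counit axiom, on whether elements are constants or not.
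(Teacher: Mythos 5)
Your overall strategy mirrors the paper's: identify the objects and morphisms of $\EM{\Ek}$ and $\EM{\Eknc}$, then observe that the lifted factorisation systems agree. However, there is a genuine gap at the very first step. Using only the counit axiom you show that, for a non-constant $a$, the element $\alpha_s(a)$ is not a \emph{constant of} $\Ek(\As)$, i.e.\ it is a genuine word. But that is not what is needed for $\alpha$ to corestrict to $\Eknc(\As)$: by definition $\Eknc(\As)$ consists of the words containing no constant \emph{letters}. The counit axiom alone does not exclude, say, $\alpha_s(a) = [c^{\As},a]$ with $c^{\As}$ a constant and $a$ not --- this word satisfies $\counit(\alpha_s(a))=a$ perfectly well, yet lies outside $\Eknc(\As)$. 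Ruling such words out requires the comultiplication axiom: comparing $\delta(\alpha_s(a)) = [[c^{\As}],[c^{\As},a]]$ with $\Ek(\alpha)(\alpha_s(a)) = [\alpha(c^{\As}),\alpha_s(a)]$ would force $\alpha(c^{\As}) = [c^{\As}]$, whereas $\alpha$, being a homomorphism, must send $c^{\As}$ to the designated constant of $\Ek(\As)$ in the disjoint summand $\const{\sg}$ --- a contradiction. This is exactly the point the paper leans on: by remark~\ref{r:coalg-order} (equivalently, the identification of $\Ek$-coalgebras with forest orders), $\alpha_s(a)$ is the chain of \emph{non-constant} elements $\sqsubseteq_\alpha$-below $a$, and hence factors through $\Eknc(\As) \sue \Ek(\As)$. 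Your closing remark that the argument is ``driven entirely by the counit axiom'' is the telltale sign of this missing step.

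The remainder of your argument is sound and close to the paper's: the observation that every $\Ek$-coalgebra morphism automatically lies in $\Rsnc(\sg)$, the transfer of the coalgebra axioms along the inclusion $\Eknc \hookrightarrow \Ek$ (the paper packages this as a comonad morphism $I\circ\Eknc \to \Ek\circ I$ inducing $\EM{\Eknc} \hookrightarrow \EM{\Ek}$), and the fact that the (surjective, strong injective) factorisation of an $\Rsnc(\sg)$-morphism stays inside $\Rsnc(\sg)$ all check out. Once the corestriction step is repaired by invoking the comultiplication axiom, the proof goes through.
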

\begin{proof}
    We check that every morphism involved in the definition $A \bisim_{\Ek} B$ exists in the category $\Rnc(\sg)$. First, observe that the embedding of categories $I\colon \Rnc(\sg) \hookrightarrow \R(\sg)$ lifts to an embedding $\wh I\colon \EM{\Eknc} \hookrightarrow \EM{\Ek}$. Indeed, the obvious comonad morphism $I\circ \Eknc \to \Ek \circ I$ introduces a functor $\wh I$, mapping every coalgebra $A \to \Eknc A$ to $A \to \Eknc A \hookrightarrow \Ek A$. Moreover, $\wh I$ is surjective on objects (i.e.\ $\EM{\Eknc}$ is a wide subcategory of $\EM{\Ek}$), because every coalgebra $A \to \Ek A$ is uniquely determined by its order $\sqsubseteq_\alpha$ on non-constants via the mapping $a \mapsto [x_1, \dots, x_n]$ where $[x_1, \dots, x_n]$ is the chain of non-constant elements below $a$ (with $x_n = a$). This chain only lists non-constants and therefore factors through $\Eknc(A) \sue \Ek(A)$. Moreover, the underlying homomorphism of every coalgebra morphism in $\EM{\Ek}$ is in $\Rnc(\sg)$ (i.e.\ $\wh I$ is an isomorphism of categories).

    Next, observe the following.
    \begin{enumerate}
        \item $\Rnc(\sg)$ contains all embeddings: this is automatic by the definition of $\Rnc(\sg)$ and the fact that embeddings are injective.
        \item $\Eknc$ preserves embeddings.
        \item The factorisation system (surjective, strong injective) in $\Rel$ restricts to $\Rnc(\sg)$.
        % \item Every open pathwise embedding in $\EM{\Ek}$ is also in $\Rnc(\sg)$: This follows from the fact that 
    \end{enumerate}
    By (1), (2), (3) and lemma~\ref{l:lifting-EMfs}, $\EM{\Eknc}$ consists of the same quotients and embeddings as in~$\EM{\Ek}$.
    % Therefore, the objects we call paths in $\EM{\Ek}$ are precisely the paths in $\EM{\Eknc}$. Consequently, by (4), $\EM{\Eknc}$ consists of precisely the same open pathwise embeddings as in $\EM{\Ek}$.
\end{proof}

For the second item of proposition~\ref{prop:many-sorted-ek-bisim} we remark that the coalgebra $(W,v)$ in the proof of proposition~\ref{prop:bisim-game-comonad} can be be restricted to $\Eknc$, i.e.\ plays where constants are never selected by Spoiler nor Duplicator. Conversely, we know that in the presence of equality, the action of Spoiler and Duplicator are determined when playing constants. So a span of open pathwise embeddings $F^{\Eknc}(\trJnc A) \leftarrow R \rightarrow F^{\Eknc}(\trJnc B)$ contains a full strategy for the $\EF^{J}_{k}(\As,\Bs)$ game. Hence, the construction proceeds as in the direction (2) $\Rightarrow$ (1) of the proof of proposition~\ref{prop:bisim-game-comonad}.

\subsection{Proofs for section~\ref{s:two-sorted-transl}}

\MSOManySortedEk*
We prove that bisimulation for the $\trE$-lifting of $S$-sorted equivalence captures equivalence in $\MSO_k$, by first noting the functor $\trE$ decomposes as $\trJ \circ \oE$ where $J = \{\fst\}$ and $\oE\colon\Rel \rightarrow \mathcal R(\sgE \setminus \{I\})$ is functor from $\Rel$ to a category of $S$-sorted $\sgE$-structures $\mathcal R(\sgE \setminus \{I\})$ without the $I$-relation. For a $\sg$-structure $\As$, $\oE(\As)$ is the $\sgE \setminus \{I\}$ reduct of $\trE(\As)$. We then prove the following lemma by adapting the translation detailed in footnote 12 of \cite{vaananen2021Stanford} which translates full second-order logic into many-sorted first-order logic:
\begin{lemma}
\label{lem:mso-to-fo-translate}
\begin{itemize}
    \item For every $\MSO_k$ sentence $\phi$ in signature $\sg$, there exists a $\FO_k(S,J)$ sentence $F(\phi)$ in signature $\sgE \setminus \{I\}$ such that
    \[ \As \vDash \phi \Leftrightarrow \oE\As \vDash F(\phi) \]
    \item For every $\FO_k(S,J)$ sentence $\phi$ in signature $\sgE \setminus \{I\}$, there exists a $\MSO_k$ sentence $M(\phi)$ in signature $\sg$ such that
    \[ \As \vDash M(\phi) \Leftrightarrow \oE\As \vDash \phi \]
\end{itemize}
\end{lemma}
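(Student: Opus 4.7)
The plan is to define the translations $F$ and $M$ by structural recursion on formulas, then verify semantic correctness by mutual induction, while tracking that quantifier rank is preserved at each step.

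For the forward direction, I will view $\MSO$ formulas as having two kinds of variables: first-order variables, which will map to sort-$\fst$ variables in $\FO(S,J)$, and monadic second-order (set) variables, which will map to sort-$\snd$ variables. I would define $F$ inductively by the clauses
\begin{itemize}
    \item $F(R(x_1,\dots,x_r)) := R(x_1{:}\fst,\dots,x_r{:}\fst)$ for $R \in \sg$,
    \item $F(x = y) := (x{:}\fst = y{:}\fst)$, which is well-formed since $\fst \in J$,
    \item $F(x \in X) := e(x{:}\fst, X{:}\snd)$,
    \item $F$ commutes with Boolean connectives,
    \item $F(\exists x.\,\psi) := \exists x{:}\fst.\, F(\psi)$ and $F(\exists X.\,\psi) := \exists X{:}\snd.\, F(\psi)$.
\end{itemize}
Clearly this preserves quantifier rank, since each quantifier is translated to exactly one sorted quantifier. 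For semantic correctness I would prove the stronger statement that for every formula $\phi(x_1,\dots,x_n,X_1,\dots,X_m)$, every tuple $\vec a \in A^n$ and every tuple $\vec U \in \Pw(A)^m$, $(\As,\vec a,\vec U) \models \phi$ iff $(\oE\As,\vec a, \vec U) \models F(\phi)$. The base cases are immediate from the definition of $\oE$: relations in $\sg$ have the same interpretation, and $e^{\oE\As}(a,U)$ holds iff $a \in U$. Boolean and quantifier cases follow from the inductive hypothesis and the fact that in $\oE\As$ the sort-$\fst$ universe is $A$ and the sort-$\snd$ universe is $\Pw(A)$.

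For the reverse direction, $M$ inverts $F$ clause by clause:
\begin{itemize}
    \item $M(R(x_1{:}\fst,\dots,x_r{:}\fst)) := R(x_1,\dots,x_r)$ for $R \in \sg$,
    \item $M(e(x{:}\fst, X{:}\snd)) := x \in X$,
    \item $M(x{:}\fst = y{:}\fst) := x = y$,
    \item $M$ commutes with Boolean connectives,
    \item $M(\exists x{:}\fst.\,\psi) := \exists x.\, M(\psi)$ and $M(\exists X{:}\snd.\,\psi) := \exists X.\, M(\psi)$.
\end{itemize}
Here it is essential that the signature is restricted to $\sgE \setminus \{I\}$ so that the only atomic formulas are $R(\vec x)$ (all arguments of sort $\fst$), $e(x{:}\fst, X{:}\snd)$, and equalities $x{:}\fst = y{:}\fst$ (allowed since $J = \{\fst\}$); no ill-behaved atomic formulas like $X{:}\snd = Y{:}\snd$ can arise. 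Correctness and quantifier rank preservation follow by the same induction as before.

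The main obstacle is not logical depth but careful bookkeeping: I would need to maintain the correspondence between free first-order variables and sort-$\fst$ variables, and between free set variables and sort-$\snd$ variables, throughout the induction. Once this setup is fixed and the base cases are justified by the definition of $\oE$, the inductive step for quantifiers reduces to the observation that $(\oE\As)_{\fst} = A$ and $(\oE\As)_{\snd} = \Pw(A)$, so sorted quantification in $\oE\As$ ranges over exactly the same sets that first-order and set quantification range over in $\As$.
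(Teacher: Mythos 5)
Your proposal is correct and takes essentially the same route as the paper: both define $F$ and $M$ by the identical clause-by-clause structural recursion (relations to sort-$\fst$ atoms, membership to $e$, equality to sort-$\fst$ equality, quantifiers to sorted quantifiers) and verify correctness by induction using that $(\oE\As)_\fst = A$ and $(\oE\As)_\snd = \Pw(A)$. If anything, you are slightly more explicit than the paper about the strengthened induction hypothesis with free variables and about why restricting to $\sgE\setminus\{I\}$ with $J=\{\fst\}$ rules out problematic atomic formulas in the reverse direction.
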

\begin{proof}
We can define $F$ and $M$ by recursion on $\MSO$ and $\FO$ formulas (respectively):
\begin{align*}
    F(x_i = x_j) &= v_i{:}\fst = v_j{:}\fst\\
    F(x_i \in X_j) &= e(v_i{:}\fst, w_j{:}\snd) \\
    F(R(x_{i_1},\dots,x_{i_r})) &= R(v_{i_1}{:}\fst,\dots,v_{i_r}{:}\fst) \\
    F(\varphi \land \psi) &= F(\varphi) \land F(\psi) \\
    F(\neg \varphi) &= \neg F(\varphi) \\
    F(\exists x_i\, \varphi) &= \exists\, v_i{:}\fst\, F(\varphi) \\
    F(\exists X_i\, \varphi) &= \exists\, w_i{:}\snd\, F(\varphi)
\end{align*}
\begin{align*}
    M(v_i{:}\fst = v_j{:}\fst) &= x_i = x_j \\
    M(e(v_i{:}\fst, w_j{:}\snd)) &= x_i \in X_j \\
    M(R(v_{i_1}{:}\fst,\dots,v_{i_r}{:}\fst)) &= R(x_{i_1},\dots,x_{i_r}) \\
    M(\varphi \land \psi) &= M(\varphi) \land M(\psi) \\
    M(\neg \varphi) &= \neg M(\varphi) \\
    M(\exists\, v_i{:}\fst\, \varphi) &= \exists x_i\, M(\varphi) \\
    M(\exists\, w_i{:}\snd\, \varphi) &= \exists X_i\, M(\varphi)
\end{align*}
The proofs that $\As \vDash \phi \Leftrightarrow \oE\As \vDash F(\phi)$ and $\As \vDash M(\phi) \Leftrightarrow \oE\As \vDash \phi$ are immediate from the standard semantics of $\vDash$.
\end{proof}

With this lemma and previous results about the many-sorted $\Ek$, we obtain the proof of \ref{thm:mso-manysortedEk}
\begin{proof}[Proof of theorem \ref{thm:mso-manysortedEk}]
\begin{align*}
    \As \equiv_{\MSO_k} \Bs &\Leftrightarrow \oE \As \equiv_{\FO_k(S,J)} \oE \Bs & \text{by lemma \ref{lem:mso-to-fo-translate}}\\
    &\Leftrightarrow \trJ \oE \As \bisim_{\Ek} \trJ \oE \Bs &\text{by prop. \ref{prop:many-sorted-ek-bisim}}\\
    &\Leftrightarrow \trE \As \bisim_{\Ek} \trE \Bs & \trE = \trJ \circ \oE 
\end{align*}
Moreover, since the condition in the second item of proposition \ref{prop:many-sorted-ek-bisim} is satisfied for our choice of $J$ and $\consts{\sg}{\fst}$, we can obtain $\As \equiv_{\MSO_k} \Bs \Leftrightarrow \trEnc \As \bisim_{\Eknc} \trEnc \Bs$ using the same argument. 
\end{proof}

\subsection{Proofs for section~\ref{s:MSO-example-edge-creation}}

\EdgeCreationMSOSmoothKleisli*
\begin{proof}
    This proof proceeds similarly to the proof that $\etaRi$ is FO-$\Ek$-smooth.  
    Axioms \ref{ax:kl-law-counit} and \ref{ax:kl-law-comultiplication} follow automatically because the underlying function of $\kappa$ is the two-sorted identity function.

    To check that $\kappa$ is a smooth Kleisli law, we to check that $\kappa$ satisfies the smoothness axioms from \ref{ax:s1} and \ref{ax:s2p} from lemma~\ref{l:s2-multilin}. 
    
    For \ref{ax:s1}, we can repeat the verification of \ref{ax:s1} as in lemma \ref{l:etaRi-FO-Ek-smooth} nearly verbatim, but instead of using the relations in $R,P_{i_z} \in \sg$, we use the corresponding relations $R,P_{i_z} \in \sgE$. By definition, the interpretations of these relations are the same.  

    For \ref{ax:s2p}, suppose $\pi\colon \Ps \to \Ek \Ps$ is a path $r_1 \sqsubset_\pi r_2 \sqsubset_\pi \dots \sqsubset_\pi r_n$ and $f\colon \Pc \to [\Ac]$ is a multilinear map. The proof that $\lift{f}$ has a minimal decomposition follows nearly verbatim to the $\FO$ case in lemma \ref{l:etaRi-FO-Ek-smooth}. The key difference is that $\kappa$ is the two-sorted identity function. 

    In addition to demonstrating that $\kappa$ is a Kleisli law. We must also show there exists a natural transformation $\theta:\etaRiE\circ\trE \rightarrow \trE\circ\etaRi$ such that $\EMF{\Ek}(\theta_\As)$ is an open pathwise embedding. The components $\theta_{\As}$ are given by $S$-sorted functions where $\theta_{\As,t}$ sends an element $x \in \As_t \in \etaRiE(\trE(\As))$ the same element in $\trE\etaRi(\As)$. The interpretation of $Q \in \sgE \setminus \{e,I\}$  has the same tuples as the interpretation of $Q \in \sg$ (including $R$). Moreover, since $\trE\etaRi(\As)$ and $\etaRiE\trE(\As)$ have the same universe, the membership relation (consequently, interpretations of $e$) and the identity relation (consequently, interpretations of $I$) are the same, so we have that $\theta_{\As}$ is a $\sgE$-morphism. Since $\theta_{\As}$ is the $S$-sorted identity function, the underlying set function of $\EMF{\Ek}(\theta_\As)$ is the identity and it is easy to check that $\EMF{\Ek}(\theta_\As)$ is an open pathwise embedding. 
\end{proof}

\subsection{Proofs for section~\ref{s:twisted-sums}}

We first verify the following claim.

\begin{lemma}
$\theta \colon \trE(\As) \bowtie \trE(\Bs) \to \trE(\As \plus \Bs)$ is a strong onto homomorphism.
\end{lemma}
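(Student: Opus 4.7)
The plan is to verify each of the three properties---being a homomorphism, being strong (reflecting relations), and being surjective---sort by sort, exploiting that $\theta_\fst$ is the identity and that $e$ is the only relation connecting the two sorts.

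On the $\fst$-sort, the $\sgI_3$-reducts of $\trE(\As)\bowtie\trE(\Bs)$ and of $\trE(\As\plus\Bs)$ coincide, directly from the definition of $\bowtie$ as a lift of $\plus$ along $\trE$. Since $\theta_\fst=\id$, preservation and reflection of the constants, of every relation $R\in\sg_3$, and of the equality relation $I$ are immediate, and $\theta_\fst$ is trivially surjective.

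The real content therefore concerns the $\snd$-sort and the relation $e$. For surjectivity on $\snd$, given $S\subseteq \As\plus\Bs$ I would take $M:=\{a\in A\mid \inc(a)\in S\}$ and $N:=\{b\in B\mid \inc(b)\in S\}$; since $\inc$ is surjective, $\theta_\snd((M,N))=\inc(M)\cup\inc(N)=S$. For the relation $e$, rather than handling preservation and reflection separately, I would prove the single equivalence
\[
  (y,(M,N))\in e^{\trE(\As)\bowtie\trE(\Bs)} \quad\text{iff}\quad (y,\theta_\snd((M,N)))\in e^{\trE(\As\plus\Bs)}
\]
in one go. Unwinding the definition of $\bowtie$ (and noting that $e^{\trE(\As)}(a,M)$ unfolds to $a\in M$), the left-hand side amounts to ``$y=\inc(a)$ for some $a\in M$, or $y=\inc(b)$ for some $b\in N$'', whereas the right-hand side says ``$y\in \inc(M)\cup\inc(N)$''; these are manifestly the same.

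No serious obstacle is expected, as the argument reduces to unwinding definitions. The one thing worth confirming is that $\theta$ is a morphism in the restricted category $\Rsnc(\sgE_3)$ rather than merely in $\Rs(\sgE_3)$: but since $\consts{\sgE_i}{\snd}=\varnothing$ and $\theta_\fst=\id$, non-constants are trivially sent to non-constants.
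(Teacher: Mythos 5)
Your proposal is correct and follows essentially the same route as the paper: the $\fst$-sort is handled by $\theta_\fst$ being the identity, surjectivity on the $\snd$-sort is immediate, and the only real content is the $e$-relation. The one cosmetic difference is that you establish a single biconditional for $e$ (which subsumes both preservation and reflection and silently absorbs the shared-constant case via the existential quantification over $M$ and $N$), whereas the paper proves only reflection and makes the constant case explicit by a case split on whether $a=c^\As$; both computations are the same unwinding of the definition of $\bowtie$.
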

\begin{proof}
    Since $\theta$ is defined as the identity function on the $\fst$-sort, this holds true for the $\sg_3$-reduct and the $\fst$-sort part of $\theta$. On the $\snd$-sort, $\theta$ is clearly onto. So we only need to check that the $e$ relation is reflected. Let $a\in A$, $M\sue A$ and $N \sue B$. Assume that $(\inc(a),\inc(M)\cup\inc(N))\in e^{\trE(\As \plus \Bs)}$, i.e.\  $\inc(a) \in \inc(M)\cup\inc(N)$. If $a$ is non-constant then clearly $a\in M$ and so $(a,(M,N))\in e^{\trE(\As) \bowtie \trE(\Bs)}$. Otherwise, $a$ is a constant, i.e.\ $a = c^\As$ for some $c\in \const{\sg}$. Then, no matter if $a\in \inc(M)$ or $a\in \inc(N)$, we have that $(c^{\As \plus \Bs}, (M,N))\in e^{\trE(\As) \bowtie \trE(\Bs)}$, which concludes the proof because $\iota(a) = \iota(c^\As) = c^{\As \plus \Bs}$.
\end{proof}

The fact that $F^{\Eknc}(\theta)$ is component-wise an open pathwise embedding then follows from the following.

\begin{lemma}
    For a multi-sorted signature $\sg$, if $q$ is a strong onto homomorphism in $\Rsnc(\sg)$ then $F^{\Eknc}(q)$ is an open pathwise-embedding.
\end{lemma}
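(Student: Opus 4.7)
The plan is to show that $F^{\Eknc}(q) = \Eknc(q)$ is both a pathwise-embedding and an open morphism in $\EM{\Eknc}$. For the pathwise-embedding property, let $e \colon (P, \pi) \embed F^{\Eknc}(A)$ be a path embedding. Writing the non-constant chain in the image as $u_1 \sqsubset \dots \sqsubset u_n$ with $u_i = [a_1{:}s_1, \dots, a_i{:}s_i]$, the composite $\Eknc(q) \circ e$ sends these to the words $[q(a_1){:}s_1, \dots, q(a_i){:}s_i]$, which have distinct lengths (so injectivity is automatic) and remain pairwise comparable in the prefix order. For reflection of relations in the signature, reading off $R^{\Eknc B}$ on a tuple of images gives $R^B$ on the tails $q(a_{i_j})$; strongness of $q$ yields $R^A(a_{i_1}, \dots, a_{i_m})$, and hence $R^{\Eknc A}(u_{i_1}, \dots, u_{i_m})$, which transports back to $P$ along the embedding $e$. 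Constants are handled trivially.

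For openness, consider a commutative square whose two vertical arrows are path embeddings $e_0 \colon (P,\pi) \embed F^{\Eknc}(A)$ and $e_1 \colon (Q,\chi) \embed F^{\Eknc}(B)$, with a path embedding $m \colon (P,\pi) \embed (Q,\chi)$ on top. Since path embeddings must include all ancestors in the chain (as they are coalgebra morphisms between linearly ordered coalgebras), the non-constant chain of $Q$ can be written as $q_1 \prec \dots \prec q_p \prec \dots \prec q_n$, with $P$ identifying with the first $p$ elements. Write $e_1(q_i) = e_1(q_{i-1})[b_i{:}s_i]$ inductively. For $i \leq p$, commutativity of the square forces the diagonal filler to satisfy $d(q_i) := e_0(q_i)$, which indeed lifts $e_1(q_i)$ along $\Eknc(q)$. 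For $p < i \leq n$, surjectivity of $q$ provides some $a_i \in A_{s_i}$ with $q(a_i) = b_i$, and then set $d(q_i) := d(q_{i-1})[a_i{:}s_i]$, sending constants to constants.

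The final and key step is verifying that $d$ is a well-defined coalgebra morphism in $\EM{\Eknc}$ and makes the two triangles commute. Coalgebra-compatibility is immediate by construction, since the prefix of each $d(q_i)$ is precisely $d(q_{i-1})$. For the homomorphism property, given $R^Q(q_{i_1}, \dots, q_{i_m})$ the embedding $e_1$ witnesses $R^B$ on the tails $b_{i_j}$; strongness of $q$ then transports this to $R^A$ on the tails $a_{i_j}$, which together with pairwise comparability of the $d(q_{i_j})$ in the chain gives $R^{\Eknc A}(d(q_{i_1}), \dots, d(q_{i_m}))$. The two triangle identities hold by construction. The main subtlety is that the $a_i$ are chosen freely by surjectivity, yet must nonetheless assemble into a genuine homomorphism; this is exactly what strongness of $q$ guarantees, as it ensures any lift of the tails in $B$ automatically satisfies the corresponding relations in $A$, so no coherent choice of lifts is required.
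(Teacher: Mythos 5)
Your proof is correct and rests on the same two pillars as the paper's: strongness of $q$ gives reflection of relations (hence the pathwise-embedding property and the homomorphism property of the filler), and surjectivity of $q$ supplies the letters needed to extend the diagonal up the chain. The only difference is presentational: the paper packages the element-level bookkeeping through the adjunction $U^{\Eknc}\dashv F^{\Eknc}$, using that the hom-set bijection restricts to one between strong homomorphisms $U(\pi)\to\As$ and path embeddings $\pi\embed F^{\Eknc}(\As)$, whereas you unwind the same computation directly on words.
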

\begin{proof}
    In the following, we will write $U \dashv F$ for the adjunction $U^{\Eknc} \dashv F^{\Eknc}$. Observe that the correspondence
    \begin{align}
        \hom(U(\pi), \As) \cong \hom(\pi, F(\As)),
        \label{eq:adj-paths}
    \end{align}
    for $\pi$ an $\Eknc$-path and $\As$ a (multi-sorted) $\sg$-structure, restrict to a bijection between strong homomorphisms $U(\pi) \to \As$ and (path) embeddings $\pi \to F(\As)$.

    To check that $F(q)\colon F(\As) \to F(\Bs)$ is a pathwise-embedding, let $e\colon \pi \embed F(\As)$ be a path embedding. By the above, this embedding corresponds to a strong homomorphism $e^r\colon U(\pi) \embed \As$. Then, $q\circ e^r$ is a strong homomorphism $e^r\colon U(\pi) \embed \Bs$ because $q$ is also strong. Therefore, the corresponding coalgebra morphism $(q\circ e^r)^l\colon \pi \embed F(\Bs)$ is an embedding and $F(q)\circ e = (q\circ e^r)^l$ by naturality of \eqref{eq:adj-paths}.

    For openness, let $\pi$ and $\rho$ be paths and the following is a commutative square of path embeddings shown on left below.
    \[
        \begin{tikzcd}
            \pi\dar[>->,swap]{e}\rar[>->]{h} & \rho\dar[>->]{g} \\
            F(A) \rar{F(q)} & F(B)
        \end{tikzcd}
        \qquad
        \qquad
        \begin{tikzcd}
            U(\pi)\dar[swap]{e^l}\rar{U(h)} & U(\rho)\dar{g^l}\ar[dashed,swap]{dl}{d} \\
            A \rar[->>]{q} & B
        \end{tikzcd}
    \]
    In view of \eqref{eq:adj-paths}, we have a commuting diagram on the right above, giving us that $\pi$ corresponds to a word (of non-constants) $[x_1:s_1,\dots,x_n:s_n]$ in $A$ and $\rho$ corresponds to a word (of non-constants) $[y_1:s'_1,\dots,y_m:s'_m]$ where $n \leq m$, $s_i = s'_i$ and $q(x_i) = y_i$, for $1\leq i \leq n$. Since $q$ is onto, there exist $x_j \in A_{s_j}$ such that $q_{s_j}(x_j) = y_j$, for $n+1 \leq j \leq m$. We define $d\colon U(\rho) \to A$ by sending $y_i \mapsto x_i$ (for $1 \leq i \leq m$). It is easy to see that $d$ is a homomorphism because $q$ is a strong homomorphism. Also, the map $d$ makes both triangles in the diagram commute, by definition. Consequently, we have a coalgebra homomorphism $d^r\colon \rho \to F(A)$ making the obvious triangles on the left diagram above commute.
\end{proof}

\begin{lemma}
    $\kappa\colon \Eknc(\As \bowtie \Bs) \to \Eknc(\As) \bowtie \Eknc(\Bs)$ is a smooth Kleisli law.
\end{lemma}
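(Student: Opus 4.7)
The plan is to mirror the proof of Lemma~\ref{l:kappa-coproducts} for the single-sorted $\plus$, adapting each step to account for the two sorts and, in particular, for the product component that $\bowtie$ introduces on the $\snd$-sort. By Theorem~\ref{t:smoothness} combined with Lemma~\ref{l:s2-multilin}, it suffices to establish the Kleisli-law axioms \ref{ax:kl-law-counit} and \ref{ax:kl-law-comultiplication} together with the smoothness axioms \ref{ax:s1} and \ref{ax:s2p}.

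First I would check that $\kappa_{\As,\Bs}$ is a well-defined $\sg_3\E$-homomorphism, sort by sort. On the $\fst$-sort this reduces to the single-sorted case already handled in Lemma~\ref{l:kappa-coproducts}. On the $\snd$-sort, a word $w$ is sent to the pair $(\ex_1(w), \ex_2(w))$, which lies in $(\Eknc\As)_\snd \times (\Eknc\Bs)_\snd$ by construction; preservation of the $e$-relation then follows because $\ex_1$ simply discards the entries whose $\fst$-component lies in $\inc(\Bs_\fst)$, and dually for $\ex_2$. Naturality in both arguments, as well as axioms \ref{ax:kl-law-counit} and \ref{ax:kl-law-comultiplication}, follow from the fact that morphisms in $\Rsnc$ preserve non-constants and respect the partition into $\inc(\As_\fst)$ and $\inc(\Bs_\fst)$; the verification is then a routine case analysis on sorts.

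For \ref{ax:s1}, given embeddings $f_i$ in $\Rsnc(\sg_i\E)$ for $i=1,2$, I would check that $f_1\bowtie f_2$ is injective on each sort (the $\fst$-sort using that $\plus$ preserves embeddings by Lemma~\ref{l:kappa-coproducts}, the $\snd$-sort because products of injections are injective) and reflects every relation of $\sg_3\E$. Reflection of the $\sg_3$-reduct and of the $I$-relation is inherited from the parallel sum, while reflection of the $e$-relation uses the explicit case split in the definition of $e^{\As\bowtie\Bs}$.

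The main technical step is \ref{ax:s2p}. Given a multilinear map $f\colon \pi \to [\Ac,\Bc]$ from a path $\pi$, multilinearity combined with the case-analysis definition of $\kappa$ forces, for every non-constant element $r_i$ of $\pi$ with predecessors $r_1,\dots,r_i$, that $\ex_1([f(r_1),\dots,f(r_i)])$ is a path in $\sqsubseteq_\alpha$ and $\ex_2([f(r_1),\dots,f(r_i)])$ is a path in $\sqsubseteq_\beta$. I would then take $\Ps$ to be the induced substructure of $\As$ whose $\fst$-sort consists of the first projections appearing in these $\ex_1$-sequences (plus constants) and whose $\snd$-sort collects the $\As$-components of all pairs $f(r_j)$ with $r_j$ of sort $\snd$, and dually define $\Qs$. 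The inclusions $e_1\colon \Ps \embed \As$ and $e_2\colon \Qs \embed \Bs$ are then path embeddings through which $f$ factors via some $f_0\colon \Pc \to \Ps \bowtie \Qs$, and minimality holds by construction. The main obstacle is exactly this $\snd$-sort case: in the single-sorted $\plus$ setting each element of $\Pc$ is routed to exactly one of $\As$ or $\Bs$, whereas in $\bowtie$ an $\snd$-sort element contributes simultaneously to both components through a pair; extracting two genuine chains (one per component) from the same element is precisely what the $\snd$-sort clause of multilinearity delivers.
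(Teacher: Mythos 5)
Your proposal is correct and follows essentially the same route as the paper: the paper's own proof simply declares the verification of \ref{ax:kl-law-counit}, \ref{ax:kl-law-comultiplication}, \ref{ax:s1} and \ref{ax:s2p} to be identical to the single-sorted proof of lemma~\ref{l:kappa-coproducts} ``keeping track of sorts'' and then concludes via theorem~\ref{t:smoothness} and lemma~\ref{l:s2-multilin}, exactly as you do. You in fact supply more detail than the paper, correctly isolating the genuinely new point (an $\snd$-sort element is routed to \emph{both} components via the pair $(\ex_1(w),\ex_2(w))$, so multilinearity yields two chains simultaneously); the only minor omission is the paper's explicit remark that $\EM{\Eknc}$ has equalisers of reflexive pairs (via lemma~\ref{l:EM-equalisers}) so that $\blift{\bowtie}$ exists, which is needed before smoothness can even be stated.
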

\begin{proof}
    The proof of the fact that $\kappa$ is a well-defined natural transformation that satisfies \ref{ax:kl-law-counit}, \ref{ax:kl-law-comultiplication}, \ref{ax:s1}, and \ref{ax:s2p} is identical to the single-sorted proof of the same, in the proof lemma~\ref{l:kappa-coproducts} above, with the only addition that we have to keep track of sorts.

    Also, $\EM{\Eknc}$ has equalisers of reflexive pairs. This follows from lemma~\ref{l:EM-equalisers}, adapted verbatim to the multi-sorted setting. Since $\kappa$ is a Kleisli law, the functor $\bowtie$ lifts to a functor $\blift{\bowtie}\colon \EM{\Eknc}\times \EM{\Eknc} \to \EM{\Eknc}$ and so $\kappa$ is smooth by theorem~\ref{t:smoothness} and lemma~\ref{l:s2-multilin}.
\end{proof}

This now follows from the observations above.
\plusMSOsmooth*

\subsection{Proofs for section~\ref{s:courcelle}}

\ComonadicCourcelle*
\begin{proof}
    % TODO mention that this follows the step of a textbook proof? then we need a reference...
    For the proof we build a (bottom-up finite) tree automaton. The set of states of this automaton is the set $S = \{1,\, \dots,\, (t_1 + \dots + t_n)\}$ where $t_i$ is the total number of $\bisim_{\C_i}^{\tr_i}$-equivalence classes in $\R_0(\sg_i)$, for $i=1,\dots, n$. Since every comonad $\C_i$ has finite MSO type and since we only have finitely many categories, we know that $t_1 + \dots + t_n$ is a finite number. For the alphabet we take the set $\{ o_\op \mid \op \in \Ops\}$ of names for operations in $\Ops$, where the arity of $o_\op$ is precisely the arity of the functor $\op$ (ignoring the types).

    Next, let $l_A \in S$ be the label corresponding to the $\bisim_{\C_i}^{\tr_i}$-equivalence class of $A$, for every $A\in \R(\sg_i)$ and $1 \leq i \leq n$. Then, for transition rules, given $\op\in \Ops$ of type $\R(\sg_{i(1)})\times \dots \times \R(\sg_{i(u)}) \to \R(\sg_{i(u+1)})$ and structures $A_l\in \sg_{i(l)}$, for $1 \leq l \leq u$, we add the transition
    \[ o_\op(l_{A_1},\dots,l_{A_n}) \ee\leadsto l_{\op(A_1,\dots,A_n)}. \]
    We see that this transition is deterministic because every $\op\in \Ops$ is MSO-$(\C_1,\dots,\C_n)$-smooth.

    Lastly, we define the accepting states as those $l_A\in S$ such that the $\tau$-reduct of $A$ is in $\Delta$. This, is well-defined by assumption (3).

    Now, given any term $t_A$, we build a tree $T_A$ by translating every constant $B\in \Gens$ into the label $l_B$ and every operation application of $\op \in \Ops$ into the letter $o_\op$. Recall that checking if an input tree is accepted can be done in linear time, for any tree automaton. Lastly, observe that, by construction, $T_A$ is accepted if and only if $A\in \Delta$.
\end{proof}

\subsection{Proofs for section~\ref{s:tw-thm}}
Recall that $\forgCD\colon \R_0(\sg_C) \to \R_0(\sg_D)$, for $D \sue C \sue \{c_1,\dots,c_m\}$,
is the operation which sends a relational structure $\As$, in signature $\sg_C$, to its $\sg_D$-reduct. This operation naturally extends to the two-sorted functor $\forgCD\E\colon \Rnc(\sgE_C) \to \Rnc(\sgE_D)$, which acts as the identity mapping on morphisms.

We immediately see that $\trE \circ \forgCD \cong \forgCD\E \circ \trE$. Therefore, for $\MSO$-$\Eknc$-smoothness of $\forgCD$, we only need to find a smooth Kleisli law
\[ \kappa\colon \Eknc \circ \forgCD\E  \to \forgCD\E \circ \Eknc. \]
Given $\As$ in $\Rnc(\sgE_C)$, we divide its universe as $A_\fst\uplus A_\snd = A_0 \uplus C_0 \uplus D$ where $D = \{ c^\As \mid c\in \const{\sg_D} \}$, $C_0 = \{ c^\As \mid c\in \const{\sg_C} \setminus \const{\sg_D} \}$ and $A_0 = A_\fst \uplus A_\snd \setminus (D \cup C_0)$.
Recall that the universe of $\Eknc(\forgCD\E(\As))$ consists of $(A_0\cup C_0)^{\leq k}\uplus \const{\sg_D}$ whereas the universe of $\forgCD\E(\Eknc(\As))$ consists of $A_0^{\leq k}\uplus \const{\sg_C}$, with only $\const{\sg_D}\sue \const{\sg_C}$ marked as constants.
Define $\kappa$ componentwise on non-constants by (while preserving the sorts as usual):
\begin{align*}
    \kappa\colon \Eknc(\forgCD\E(\As)) &\ee\longrightarrow \forgCD\E(\Eknc(\As)), \\[0.5em]
     w\in (A_0 \cup C_0)^{\leq k}
    &\ee\longmapsto
    \begin{cases}
        \ex(w) \in A_0^{\leq k} & \text{ if } \counit(w) \in A_0 \\[0.5em]
        c^{\Eknc(\As)} & \text{ if } \counit(w) = c^\As \in C_0
    \end{cases}
\end{align*}
Here $\ex([x_1 : s_1, \dots, x_n : s_n])$ is defined as $[ x_i:s_i \mid x_i \in A_0]$. It is clear that $\kappa$ is a $\sg_D$-structure homomorphism. Furthermore, it sends non-constants to non-constants since $c^{\Eknc(\As)}$ with $c^\As \in C_0$ is no longer a constant in $\forgCD\E(\Eknc(\As))$. The reasoning justifying that axioms \ref{ax:kl-law-counit} and \ref{ax:kl-law-comultiplication} hold is exactly the same for lemma~\ref{l:kappa-coproducts} and \ref{ax:s1} holds for trivial reasons. Lastly, we check \ref{ax:s2p}. Assume we are given a path embedding, in terms of a multilinear map~$e$ as shown below.
\[
    \begin{tikzcd}[->]
        \Ps \arrow[rr, "e"] \dar[swap]{\pi} & & \forgCD\E(\As) \dar{\forgCD\E(\alpha)} \\
        \Eknc(\Ps) \rar[swap]{\Eknc(e)} & \Eknc\forgCD\E(\As) \rar[swap]{\kappa} & \forgCD\E(\Eknc \As)
    \end{tikzcd}
\]
Let $x_1 : s_1 \sqsubset_\pi x_1 : s_1 \sqsubset_\pi \dots \sqsubset_\pi x_n : s_n$ be the chain of non-constant elements of $\Ps$. The commutativity of the above square means that, for every $i$,
\[ \kappa([a_1 : s_1, \dots, a_i : s_i]) = \alpha(a_i) \]
where $a_j$ is set to be $e(x_j)$, for $1 \leq j \leq n$. In other words, if $a_i\in A_0$ it must be that the word $\kappa([a_1 : s_1, \dots, a_i : s_i])$ is the chain of non-constant elements $\sqsubset_\alpha$-below $a_i$ in $\As$. If $a_i \in C_0$ then we automatically get that $\kappa([a_1 : s_1, \dots, a_i : s_i]) = a_i$ and also $\alpha(a_i) = a_i$, making this automatically satisfied. Therefore, the word $[a_1 : s_1, \dots, a_n : s_n]$ consists of elements from a path $(\Ps', \pi')$ in $(\As,\alpha)$ interleaved by constant elements from $C_0$. It is immediate that $e$ factors through $\forgCD\E(e')$ of the embedding $e'\colon (\Ps', \pi') \embed (\As,\alpha)$ and that this $(\Ps', \pi')$ is minimal possible. As a result we obtain the following by theorem~\ref{t:smoothness} and lemma~\ref{l:s2-multilin}.

\begin{lemma}
    For every $D \sue C \sue \{c_1,\dots,c_m\}$, the operation $\forgCD\colon \R(\sg_C) \to \R(\sg_D)$ is $\MSO$-$\Eknc$-smooth.
\end{lemma}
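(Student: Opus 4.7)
The plan is to reduce the claim to an application of theorem~\ref{t:smoothness} (via lemma~\ref{l:s2-multilin}) combined with definition~\ref{d:MSO-smoothness}. I would first promote $\forgCD$ to the two-sorted setting as the functor $\forgCD^e\colon \Rsnc(\sgE_C) \to \Rsnc(\sgE_D)$ that removes the interpretations of constants in $\const{\sg_C}\setminus\const{\sg_D}$ and leaves the $\snd$-sort and the relations $e, I$ untouched. Since $\forgCD^e$ acts as the identity on morphisms, the evident equality $\trE \circ \forgCD = \forgCD^e \circ \trE$ supplies the natural transformation $\theta$; being a (component-wise) identity, $F^{\Eknc}(\theta)$ is a collection of identity morphisms and in particular open pathwise embeddings.

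Next, I would specify the Kleisli law $\kappa\colon \Eknc\circ \forgCD^e \to \forgCD^e \circ \Eknc$. The subtlety is that elements $c^\As$ for $c\in \const{\sg_C}\setminus\const{\sg_D}$ are \emph{non-constant} in $\forgCD^e(\As)$, so words in $\Eknc(\forgCD^e(\As))$ may contain such letters, whereas words in $\forgCD^e(\Eknc(\As))$ cannot (the letter has been absorbed into the new constant symbol). The map $\kappa$ should therefore act on a non-constant word $w$ by sending it to the chain of its truly non-constant entries if $\counit(w)$ is not one of these ``formerly constant'' elements, and by collapsing $w$ to the cofree constant $c^{\Eknc\As}$ if $\counit(w) = c^\As$ with $c \in \const{\sg_C}\setminus\const{\sg_D}$. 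Verifying that this is a well-defined homomorphism and satisfies \ref{ax:kl-law-counit} and \ref{ax:kl-law-comultiplication} is routine case analysis, closely mirroring lemma~\ref{l:kappa-coproducts}; axiom \ref{ax:s1} is immediate because $\forgCD^e$ is the identity on morphisms.

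The main technical step is \ref{ax:s2p}: given a multilinear map $e\colon \pi \to [\alpha]$ from a path $\pi$ with non-constant chain $x_1 \sqsubset_\pi \cdots \sqsubset_\pi x_n$, I must show it factors minimally through $\forgCD^e(e')$ for a path embedding $e'\colon (\Pc',\pi') \embed (\As,\alpha)$. Unpacking the multilinearity square yields $\kappa([e(x_1){:}s_1,\dots,e(x_i){:}s_i]) = \alpha(e(x_i))$ for each $i$. By the definition of $\kappa$, this forces the subsequence of values $e(x_j)$ lying outside $\{c^\As \mid c \in \const{\sg_C}\setminus\const{\sg_D}\}$ to form a genuine $\sqsubseteq_\alpha$-chain in $(\As,\alpha)$; the remaining letters are forced constants. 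This chain supplies the minimal path $\pi'$ and embedding $e'$, and any other decomposition of $e$ must contain at least these letters, giving minimality.

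The hard part is precisely disentangling the interaction between the old constants (now non-constant) and the true non-constant elements of $\As$, since $\kappa$ behaves differently on these two kinds of entries; once this is handled, smoothness of $\kappa$ follows from theorem~\ref{t:smoothness} and lemma~\ref{l:s2-multilin}, and combining it with $\theta$ yields MSO-$\Eknc$-smoothness of $\forgCD$ by definition~\ref{d:MSO-smoothness}.
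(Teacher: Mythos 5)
Your proposal matches the paper's own proof essentially step for step: the same two-sorted reduct functor with the identity $\theta$, the same Kleisli law that restricts a word to its genuinely non-constant letters or collapses it to the cofree constant when its counit lands on a formerly-constant element, and the same verification of \ref{ax:s2p} via the multilinearity square forcing the non-constant entries to form a $\sqsubseteq_\alpha$-chain interleaved with forced constants. The approach and the identified subtlety (old constants of $\sg_C\setminus\sg_D$ becoming non-constant after the reduct) are exactly those of the paper.
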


\subsection{Proofs for section~\ref{s:cw-thm}}

What remains to check to support theorem~\ref{t:cw-courcelle-recovered} is the following claim.
\begin{restatable}{lemma}{rhoijMSOEkSmooth}
For every $i,j \in \setm$, the operation $\rhoij:\R(\sgm) \to \R(\sgm)$ is $\MSO$-$\Ek$-smooth.
\end{restatable}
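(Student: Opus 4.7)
The plan is to closely mirror the proof of lemma~\ref{l:etaRi-MSO-Ek-smooth} from section~\ref{s:MSO-example-edge-creation}, since $\rhoij$ like $\etaRi$ is an operation that only modifies the interpretation of unary ``colour'' predicates on the first sort, and leaves the underlying carrier set untouched.

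First, I would define the two-sorted lift $\rhoij\E\colon \Rs(\sgEm) \to \Rs(\sgEm)$ as follows: on a structure $\As$ it acts as the identity on carriers and on the interpretation of every symbol in $\sgEm$ except for the unary predicates $P_i, P_j$ (of arity $\langle \fst \rangle$), which are recoloured exactly as in the single-sorted case, namely $P^{\rhoij\E(\As)}_j = P^\As_i \cup P^\As_j$ and $P^{\rhoij\E(\As)}_i = \varnothing$. Functoriality is immediate because $\rhoij\E$ acts as the identity on morphisms. Since $\rhoij\E$ only touches predicates whose interpretation depends on the $\fst$-sort and is preserved by $\trE$ verbatim, the identity set-function yields a natural isomorphism $\theta\colon \rhoij\E \circ \trE \cong \trE \circ \rhoij$, whence $\EMF{\Ek}(\theta)$ is componentwise an isomorphism and in particular an open pathwise embedding.

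Next, I would define the Kleisli law $\kappa\colon \Ek \circ \rhoij\E \to \rhoij\E \circ \Ek$ componentwise as the two-sorted identity function, sending each word $w \in \Ek(\rhoij\E(\As))_t$ to the same word $w \in \rhoij\E(\Ek(\As))_t$ for $t \in \{\fst,\snd\}$. The crucial well-definedness check is for the predicates $P_i$ and $P_j$: a word $w$ lies in $P_j^{\Ek(\rhoij\E(\As))}$ iff $\counit(w) \in P_j^{\rhoij\E(\As)} = P_i^\As \cup P_j^\As$, iff $w \in P_i^{\Ek\As} \cup P_j^{\Ek\As} = P_j^{\rhoij\E(\Ek(\As))}$, and similarly $P_i$ is empty on both sides; all other relations are handled identically because $\rhoij\E$ does not alter them. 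Naturality and the axioms \ref{ax:kl-law-counit}, \ref{ax:kl-law-comultiplication} are automatic as $\kappa$ is the identity on elements.

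Finally, I would verify \ref{ax:s1} and \ref{ax:s2p}. For \ref{ax:s1}, any embedding $e\colon \As \embed \Bs$ reflects interpretations of every relation in $\sgEm$; since $\rhoij\E(e)$ has the same underlying function, and the new interpretations of $P_i, P_j$ in $\rhoij\E(\As)$ and $\rhoij\E(\Bs)$ are obtained by the same boolean combination of reflected predicates, $\rhoij\E(e)$ remains a strong injective homomorphism. For \ref{ax:s2p}, given a path $\pi$ with non-constant chain $r_1 \sqsubset_\pi \dots \sqsubset_\pi r_n$ and a multilinear map $f\colon \Pc \to [\Ac]$, multilinearity with $\kappa$ the identity forces $\alpha(f(r_i)) = [f(r_1),\dots,f(r_i)]$ for every $i$, so $f(r_1),\dots,f(r_n)$ is a chain in $\sqsubseteq_\alpha$. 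The induced substructure on this chain yields the required minimal path embedding $\Pc' \embed \Ac$ through which $f$ factors, exactly as in lemma~\ref{l:etaRi-MSO-Ek-smooth}. The main (and only mildly subtle) step is the well-definedness of $\kappa$ on the $P_i$/$P_j$ relations; once that bookkeeping is done, smoothness follows from theorem~\ref{t:smoothness} and lemma~\ref{l:s2-multilin}.
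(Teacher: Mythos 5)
Your proposal is correct and follows essentially the same route as the paper, which likewise defines $\rhoij\E$ and takes both $\kappa$ and $\theta$ to be the ($S$-sorted) identity functions, then adapts the argument of lemma~\ref{l:etaRi-MSO-Ek-smooth} verbatim. Your explicit well-definedness check of $\kappa$ on the recoloured predicates $P_i, P_j$ is a detail the paper leaves implicit, and it is the right thing to verify.
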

\begin{proof}
It is clear that $\rhoij\colon\Relk \rightarrow \Relk$ and the corresponding two-sorted operation $\rhoij\E\colon\RelEm \rightarrow \RelEm$ are functorial. We need to define smooth Kleisli law 
    \[ \kappa\colon \Ek\circ\rhoij\E \rightarrow \rhoij\E\circ\Ek \]
The components $\kappa_{\As}$ are given by $S$-sorted functions where $\kappa_{\As,t}$ sends a word $w \in (\Ek(\rhoij(A)))_t$ to the same word $w \in (\rhoij(\Ek(A)))_t$ for $t \in \{\fst,\snd\} = S$. This is the same definition, on the level of $S$-sorted sets, as the smooth Kleisli law used to witness that $\etaRi$ is $\MSO$-$\Ek$-smooth. We also need to exhibit a natural transformation $\theta:\rhoij\E \circ \trE \rightarrow \trE \circ \rhoij$. As in the case of $\etaRi$, we define the components $\theta_\As$ to be the $S$-sorted identify function sending an element $a \in \rhoij\E(\trE(\As))$ to the same element $a \in \trE(\rhoij\As)$. Consequently, the proof of lemma \ref{l:etaRi-MSO-Ek-smooth} can be adapted to demonstrate that $\rhoij$ is $\MSO$-$\Ek$-smooth.
\end{proof}

\end{document}